\documentclass[9pt]{article}
\usepackage{amssymb}
\usepackage{tikz}
\usetikzlibrary{arrows, calc, decorations.pathmorphing, backgrounds, positioning, fit, petri, automata}
\definecolor{grey1}{rgb}{0.5,0.5,0.5}
\usepackage[top=2.54cm, bottom=2.54cm, left=2.2cm, right=2.2cm]{geometry}
\usepackage{algorithmicx}
\usepackage{algcompatible}
\usepackage{algorithm}

\usepackage{hyperref}
\hypersetup{
  colorlinks,
  citecolor=blue,
  linkcolor=red,
  urlcolor=magenta}

\usepackage{subcaption}
\usepackage{booktabs}

\usepackage{mathtools}
\usepackage{graphicx}
\usepackage{amsmath,amsthm,bm,bbm}
\usepackage{rotating}
\usepackage{mathrsfs}
\usepackage{chngcntr}
\usepackage{array}
\usepackage{apptools}
\usepackage[titletoc,title]{appendix}
\usepackage{comment}

\usepackage{xcolor}         
\usepackage{setspace,lscape,longtable}
\usepackage{fullpage}  

\usepackage{lipsum}
\usepackage{enumitem}
\setlist[enumerate]{itemsep=0mm}
\setlist[itemize]{itemsep=0mm}

\usepackage{titlesec}
\usepackage[english]{babel}
\setcounter{secnumdepth}{4}

\usepackage{xcolor}         
\definecolor{grau}{rgb}{0.8,0.8,0.8}



\newtheoremstyle{mystyle}
{1ex} 
{1ex} 
{\itshape} 
{} 
{\bfseries} 
{} 
{1ex} 
{} 
\theoremstyle{plain}

\newtheorem{theorem}{Theorem}
\newtheorem{lemma}{Lemma}
\newtheorem{corollary}{Corollary}

\theoremstyle{definition}

\newtheorem*{example}{Example}

\newtheorem{remark}{Remark}

\providecommand{\customgenericname}{}
\newcommand{\newcustomtheorem}[2]{%
  \newenvironment{#1}[1]
  {%
   \renewcommand\customgenericname{#2}%
   \renewcommand\theinnercustomgeneric{##1}%
   \innercustomgeneric
  }
  {\endinnercustomgeneric}
}

\newcustomtheorem{customthm}{Theorem}
\newcustomtheorem{customlemma}{Lemma}

\usepackage[comma,authoryear]{natbib}

\bibpunct{(}{)}{;}{a}{,}{,}


\DeclareMathOperator*{\arginf}{arg\,inf}

\newcommand\undermat[2]{%
  \makebox[0pt][l]{$\smash{\underbrace{\phantom{%
    \begin{matrix}#2\end{matrix}}}_{\text{$#1$}}}$}#2}

\newcommand{\prob}{{\mathbb{P}}}

\newcommand{\expect}{\mathbb{E}}

\newcommand{\transpose}{^{\mathrm{T}}}

\newcommand{\bLambda}{{\bm{\Lambda}}}

\newcommand{\calA}{{\mathcal{A}}}
\newcommand{\calB}{{\mathcal{B}}}

\newcommand{\calF}{{\mathcal{F}}}
\newcommand{\calG}{{\mathcal{G}}}
\newcommand{\calH}{{\mathcal{H}}}

\newcommand{\calK}{{\mathcal{K}}}

\newcommand{\calN}{{\mathcal{N}}}

\newcommand{\calU}{{\mathcal{U}}}
\newcommand{\calV}{{\mathcal{V}}}



\newcommand{\bU}{{\mathbf{U}}}
\newcommand{\bP}{{\mathbf{P}}}

\newcommand{\bY}{{\mathbf{Y}}}
\newcommand{\bv}{{\mathbf{v}}}
\newcommand{\bx}{{\mathbf{x}}}
\newcommand{\bX}{{\mathbf{X}}}
\newcommand{\bA}{{\mathbf{A}}}
\newcommand{\bB}{{\mathbf{B}}}
\newcommand{\bC}{{\mathbf{C}}}
\newcommand{\bD}{{\mathbf{D}}}
\newcommand{\bE}{{\mathbf{E}}}
\newcommand{\bL}{{\mathbf{L}}}

\newcommand{\bQ}{{\mathbf{Q}}}
\newcommand{\bS}{{\mathbf{S}}}
\newcommand{\bW}{{\mathbf{W}}}

\newcommand{\bu}{{\mathbf{u}}}

\newcommand{\by}{{\mathbf{y}}}
\newcommand{\bz}{{\mathbf{z}}}

\newcommand{\bV}{{\mathbf{V}}}

\newcommand{\bbeta}{{\bm{\beta}}}

\newcommand{\bOmega}{{\bm{\Omega}}}

\newcommand{\bSigma}{{\bm{\Sigma}}}
\newcommand{\bxi}{{\boldsymbol{\xi}}}
\newcommand{\bzeta}{{\boldsymbol{\zeta}}}
\newcommand{\beps}{{\bm{\varepsilon}}}
\newcommand{\eye}{{\mathbf{I}}}

\newcommand{\bphi}{{\bm{\phi}}}

\newcommand{\zero}{{\bm{0}}}
\newcommand{\eps}{\epsilon}

\usepackage{lipsum}
\usepackage{enumitem}
\setlist[enumerate]{itemsep=0mm}
\setlist[itemize]{itemsep=0mm}

\author{Fangzheng Xie,
\thanks{Department of Applied Mathematics and Statistics, Johns Hopkins University} 
\and Yanxun Xu,
\footnotemark[1] \thanks{Correspondence should be addressed to Yanxun Xu (yanxun.xu@jhu.edu)}
\and Carey E. Priebe, 
\footnotemark[1]
\and Joshua Cape
\footnotemark[1]}
\title{\bf Bayesian Estimation of Sparse Spiked Covariance Matrices in High Dimensions}
\begin{document}
\allowdisplaybreaks

\maketitle

\begin{abstract}
We propose a Bayesian methodology for estimating spiked covariance matrices with jointly sparse structure in high dimensions. The spiked covariance matrix is reparametrized in terms of the latent factor model, where the loading matrix is equipped with a novel matrix spike-and-slab LASSO prior, which is a continuous shrinkage prior for modeling jointly sparse matrices. We establish the rate-optimal posterior contraction for the covariance matrix with respect to the operator norm as well as that for the principal subspace with respect to the projection operator norm loss. 
We also study the posterior contraction rate of the principal subspace with respect to the two-to-infinity norm loss, a novel loss function measuring the distance between subspaces that
is able to capture element-wise eigenvector perturbations. 
We show that the posterior contraction rate with respect to the two-to-infinity norm loss is tighter than that with respect to the routinely used projection operator norm loss under certain low-rank and bounded coherence conditions. 
In addition, a point estimator for the principal subspace is proposed with the rate-optimal risk bound with respect to the projection operator norm loss. These results are based on a collection of concentration and large deviation inequalities for the matrix spike-and-slab LASSO prior. The numerical performance of the proposed methodology is assessed through synthetic examples and the analysis of a real-world face data example. 
\end{abstract}

\noindent%
{\it Keywords:} joint sparsity, latent factor model, matrix spike-and-slab LASSO, rate-optimal posterior contraction, two-to-infinity norm loss

\section{Introduction} 
\label{sec:introduction}
In contemporary statistics, datasets are typically collected with high-dimensionality, 
where the dimension  $p$ can be significantly larger than the sample size $n$. For example, in genomics studies, the number of genes is typically much larger than the number of subjects \citep{cancer2012comprehensivelung}. In computer vision, the number of pixels in each image can be comparable to or exceed the number of images when the resolution of these images is relatively high \citep{927464,1407873}. When dealing with such high-dimensional datasets, covariance matrix estimation plays a central role in understanding the complex structure of the data and has 
 received significant attention in various contexts, including latent factor models \citep{bernardo2003bayesian,geweke1996measuring}, Gaussian graphical models \citep{liu2012,MAL-001}, etc. However, in the high-dimensional setting, additional structural assumptions are often necessary in order to address challenges associated with statistical inference \citep{doi:10.1198/jasa.2009.0121}.  
For example, sparsity is introduced for sparse covariance/precision matrix estimation \citep{cai2016,cai2012,doi:10.1093/biostatistics/kxm045}, and low-rank structure is enforced in spiked covariance matrix models \citep{cai2015optimal,johnstone2001distribution}. Readers can refer to \cite{cai2016} for a recent literature review.


In this paper we focus on the sparse spiked covariance matrix models under the Gaussian sampling distribution assumption. 
The spiked covariance matrix models, originally named in \cite{johnstone2001distribution}, is a class of models that can be described as follows: The observations $\by_1,\ldots,\by_n$ are independently collected from the $p$-dimensional mean-zero normal distribution with covariance matrix $\bSigma$ of the form 
\begin{align}\label{eqn:spiked_covariance}
\bSigma = \bU\bLambda\bU\transpose+\sigma^2\eye_p,
\end{align}
where $\bU$ is a $p\times r$ matrix with orthonormal columns, $\bLambda = \mathrm{diag}(\lambda_1,\cdots,\lambda_r)$ is an $r\times r$ diagonal matrix, and $r<p$. Since the spectrum of the covariance matrix is $\{\lambda_1+\sigma^2,\ldots,\lambda_r+\sigma^2,\sigma^2,\cdots,\sigma^2\}$ (in non-increasing order), there exists an eigen-gap $\lambda_r(\bSigma) - \lambda_{r+1}(\bSigma) = \lambda_r$, where $\lambda_r(\bSigma)$ denotes the $r$-th largest eigenvalue of $\bSigma$. Therefore the first $r$ leading eigenvalues of $\bSigma$ can be regarded as ``spike" or signal eigenvalues, and the remaining eigenvalues $\sigma^2$ may be treated as ``bulk" or  noise eigenvalues.  
Here we assume that the eigenvector matrix $\bU$ is jointly sparse,  the formal definition of which is deferred to Section \ref{sub:spiked_covariance_matrix_models}. Roughly speaking, joint sparsity refers to a significant amount of rows in $\bU$ being zero, which allows for  feature selection and brings 
easy interpretation in many applications. 
For example, in the analysis of face images, a classical method to extract common features among different face characteristics, expressions, illumination conditions, etc., is to obtain the eigenvectors of these face data, referred to as  eigenfaces. Each coordinate of these eigenvectors corresponds to a specific pixel in the image. Nonetheless, the number of pixels (features) is typically much larger than the number of images (samples), and it is often desirable to gain insights of the face information via a relatively small number of pixels,  referred to as key pixels. 
By introducing joint sparsity to these eigenvectors, one is able to conveniently model key pixels among multiple face images corresponding to non-zero rows of eigenvectors. A concrete real data example is provided in Section \ref{sub:real_data_example}.

The literature on sparse spiked covariance matrix estimation in high-dimensions from a frequentist perspective is quite rich. In \cite{doi:10.1198/jasa.2009.0121}, it is shown that the classical principal component analysis can fail when $p\gg n$. In \cite{cai2013sparse} and \cite{vu2013minimax}, the minimax estimation of the principal subspace (\emph{i.e.}, the linear subspace spanned by the eigenvector matrix $\bU$) with respect to the projection Frobenius norm loss under various sparsity structure on $\bU$ is considered, and \cite{cai2015optimal} provides minimax estimation procedures of the principal subspace with respect to the projection operator norm loss under the joint sparsity assumption. 

In contrast, there is comparatively limited literature on Bayesian estimation of sparse spiked covariance matrices providing theoretical guarantees. To the best of our knowledge, \cite{gao2015rate} and \cite{pati2014posterior} are the only two works in the literature addressing posterior contraction rates for Bayesian estimation of sparse spiked covariance matrix models. In particular, in \cite{pati2014posterior} the authors discuss the posterior contraction behavior of the  covariance matrix $\bSigma$ with respect to the operator norm loss under the Dirichlet-Laplace shrinkage prior \citep{doi:10.1080/01621459.2014.960967}, but the contraction rates are sub-optimal when the number of spikes $r$ grows with the sample size; In \cite{gao2015rate}, the authors propose a carefully designed prior on $\bU$ that yields rate-optimal posterior contraction of the principal subspace with respect to the projection Frobenius norm loss, but the tractability of computing the full posterior distribution is lost, except for the posterior mean as a point estimator. Neither  \cite{gao2015rate}   nor \cite{pati2014posterior} discusses the posterior contraction behavior for sparse spiked covariance matrix models when the eigenvector matrix $\bU$ exhibits joint sparsity. 

We propose a matrix spike-and-slab LASSO prior to model joint sparsity occurring in the eigenvector matrix $\bU$ of the spiked covariance matrix. 
The matrix spike-and-slab LASSO prior is a novel continuous shrinkage prior that generalizes 
 the classical spike-and-slab LASSO prior for vectors in \cite{rovckova2018bayesian} and \cite{rovckova2016spike} to jointly sparse rectangular matrices. 
One major contribution of this work is that under the matrix spike-and-slab LASSO prior, we establish the rate-optimal posterior contraction for the entire covariance matrix $\bSigma$ with respect to the operator norm loss as well as that for the principal subspace with respect to the projection operator norm loss. 
Furthermore, we also focus on the two-to-infinity norm loss, a novel loss function measuring the closeness between linear subspaces. As will be seen in Section \ref{sub:spiked_covariance_matrix_models}, the two-to-infinity norm loss is able to detect element-wise perturbations of the eigenvector matrix $\bU$ spanning the principal subspace. Under certain low-rank and bounded coherence conditions on $\bU$, we obtain a tighter posterior contraction rate for the principal subspace with respect to the two-to-infinity norm loss than that with respect to the routinely used projection operator norm loss. 
Besides the contraction of the full posterior distribution, the Bayes procedure also leads to a point estimator for the principal subspace with a rate-optimal risk bound. 
In addition to the convergence results \emph{per se}, 
we present a collection of concentration and large deviation inequalities for the matrix spike-and-slab LASSO prior that may be of independent interest. These technical results serve as the main tools for deriving the posterior contraction rates. 
Last but not least, unlike the prior proposed in \cite{gao2015rate}, the matrix spike-and-slab LASSO prior yields a tractable Metropolis-within-Gibbs sampler for posterior inference.

The rest of the paper is organized as follows. In Section \ref{sec:bayesian_sparse_spiked_covariance_matrix_model} we briefly review the background for the sparse spiked covariance matrix models and propose the matrix spike-and-slab LASSO prior. Section \ref{sec:theoretical_properties} elaborates on our theoretical contributions, including the concentration and large deviation inequalities for the matrix spike-and-slab LASSO prior and the posterior contraction results. The numerical performance of the proposed methodology is presented in Section \ref{sec:numerical_examples} through synthetic examples and the analysis of a real-world computer vision dataset. Further discussion is included in Section \ref{sec:discussion}.

\vspace*{1ex}
\noindent\textbf{Notations: }Let $p$ and $r$ be positive integers. 
We adopt the  shorthand notation $[p] = \{1,\ldots,p\}$. For any finite set $S$, we use $|S|$ to denote the cardinality of $S$. The symbols $\lesssim$ and $\gtrsim$ mean the inequality up to a universal constant, \emph{i.e.}, $a \lesssim b$ ($a \gtrsim b$, resp.) if $a\leq Cb$ ($a \geq Cb$) for some absolute constant $C > 0$. We write $a\asymp b$ if $a\lesssim b$ and $a\gtrsim b$.  
The $p\times r$ zero matrix is denoted by $\zero_{p\times r}$, and the $p$-dimensional zero column vector is denoted by $\zero_p$. When the dimension is clear, the zero matrix is simply denoted by $\zero$. The $p\times p$ identity matrix is denoted by $\eye_p$, and when the dimension is clear, is  denoted by $\eye$. 
An orthonormal $r$-frame in $\mathbb{R}^p$ is a $p\times r$ matrix $\bU$ with orthonormal columns, \emph{i.e.}, $\bU\transpose\bU = \eye_{r\times r}$. 
The set of all orthonormal $r$-frames in $\mathbb{R}^p$ is denoted by $\mathbb{O}(p, r)$. When $p = r$, we write $\mathbb{O}(r) = \mathbb{O}(r, r)$. 
For a $p$-dimensional vector $\bx\in\mathbb{R}^p$, we use $x_j$ to denote its $j$th component, $\|\bx\|_1 = \sum_{j=1}^p|x_j|$ to denote its $\ell_1$-norm, $\|\bx\|_2$ to denote its $\ell_2$-norm, and $\|\bx\|_\infty = \max_{j\in[p]}|x_j|$ to denote its $\ell_\infty$-norm. 
For a symmetric square matrix $\bSigma\in\mathbb{R}^{p\times p}$, we use $\lambda_k(\bSigma)$ to denote the $k$th-largest eigenvalue of $\bSigma$. For a matrix $\bA\in\mathbb{R}^{p\times r}$, we use $\bA_{j*}$ to denote the row vector formed by the $j$th row of $\bA$, $\bA_{*k}$ to denote the column vector formed by the $k$th column of $\bA$, 
 the lower case letter $a_{ij}$ to denote the $(i, j)$-th element of $\bA$, $\|\bA\|_{\mathrm{F}} = \sqrt{\sum_{j=1}^p\sum_{k=1}^ra_{jk}^2}$ to denote the Frobenius norm of $\bA$, $\|\bA\|_2 = \sqrt{\lambda_1(\bA\transpose\bA)}$ to denote the operator norm of $\bA$, $\|\bA\|_{2\to\infty} = \max_{\|\bx\|_2 = 1}\|\bA\bx\|_\infty$ to denote the two-to-infinity norm of $\bA$, and $\|\bA\|_\infty = \max_{\|\bx\|_\infty = 1}\|\bA\bx\|_\infty$ to denote the (matrix) infinity norm of $\bA$. The prior and posterior distributions appearing in this paper are denoted by $\Pi$, and the densities of $\Pi$ with respect to the underlying sigma-finite measure are denoted by $\pi$. 


\section{Sparse Bayesian spiked covariance matrix models} 
\label{sec:bayesian_sparse_spiked_covariance_matrix_model}

\subsection{Background} 
\label{sub:spiked_covariance_matrix_models}


In the spiked covariance matrix model \eqref{eqn:spiked_covariance}, the matrix 
$\bSigma$  is of the form $
\bSigma = \bU\bLambda\bU\transpose+\sigma^2\eye_p. $ 
We focus on the case where the leading $r$ eigenvectors of $\bSigma$ (the columns of $\bU$) are jointly sparse \citep{cai2015optimal, vu2013minimax}. Formally, the row support of $\bU$ is defined as
\[
\mathrm{supp}(\bU) = \left\{j\in[p]:\bU_{j*}\transpose\neq\zero_r\right\},
\]
and $\bU$ is said to be jointly $s$-sparse, if $|\mathrm{supp}(\bU)|\leq s$. Heuristically, this assumption asserts that the signal comes from at most $s$ features among all $p$ features. Geometrically, joint sparsity has the interpretation that at most $s$ coordinates of $\by_i$ generate the subspace $\mathrm{Span}\{\bU_{*1},\ldots,\bU_{*r}\}$ \citep{vu2013minimax}. Noted that $s\geq r$ due to the orthonormal constraint on the columns of $\bU$. 

This paper studies a Bayesian framework for estimating the covariance matrix $\bSigma$. We quantify how well the proposed methodology estimates the entire covariance matrix $\bSigma$ and the principal subspace $\mathrm{Span}\{\bU_{*1},\cdots,\bU_{*r}\}$
in the high-dimensional and jointly sparse setup.  
Leaving the Bayesian framework for a moment, we first introduce some necessary background. 
Throughout the paper, we write $\bSigma_0 = \bU_0\bLambda_0\bU_0\transpose + \sigma_0\eye_p$ to be the true covariance matrix that generates the data $\bY = [\by_1,\ldots,\by_n]\transpose$ from the $p$-dimensional multivariate Gaussian distribution $\mathrm{N}_p(\zero_p,\bSigma_0)$, where $\bLambda_0 = \mathrm{diag}(\lambda_{01},\cdots,\lambda_{0r})$. The parameter space of interest for $\bSigma$ is given by
\begin{align*}
\Theta(p, r, s)
&= \left\{\bSigma = \bU\bLambda\bU\transpose + \sigma^2\eye_p:\bU\in\mathbb{O}(p, r),|\mathrm{supp}(\bU)|\leq s,
\lambda_1\geq\ldots\geq\lambda_r>0\right\}.
\end{align*}
The following minimax rate of convergence for $\bSigma$ under the operator norm loss \cite{cai2015optimal} serves as a benchmark for measuring the performance of any estimation procedure for $\bSigma$. 
\begin{theorem}[\citealp{cai2015optimal}]
Let $1\leq r\leq s\leq p$. Suppose that $(s\log p)/n\to 0$ and $\lambda_{01}\geq\lambda_{0r}>0$ are bounded away from $0$ and $\infty$. Then the minimax rate of convergence for estimating $\bSigma\in\Theta(p, r, s)$ is 
\begin{align}\label{eqn:minimax_rate_Sigma}
\inf_{\widehat\bSigma}\sup_{\bSigma_0\in\Theta(p, r, s)}\expect_{\bSigma_0}\|\widehat\bSigma - \bSigma_0\|_2^2\asymp \frac{s\log p}{n}.
\end{align}
\end{theorem}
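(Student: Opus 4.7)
The argument splits naturally into matching upper and lower bounds. My plan is to first exhibit an estimator that attains $s\log p/n$ in squared operator norm, and then to prove the matching lower bound via a Fano-type testing argument over a carefully chosen packing of $\Theta(p,r,s)$. Throughout, I would use the fact that $\lambda_{01},\lambda_{0r}$ are bounded away from $0$ and $\infty$ (so the effective eigen-gap is $\Theta(1)$) and that the noise level $\sigma_0^2$ may be estimated separately at the faster $1/\sqrt n$ rate, reducing everything to estimating the low-rank signal piece $\bU_0\bLambda_0\bU_0\transpose$.

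For the upper bound I would use a support-selection strategy. For each candidate row-support $J\subset[p]$ with $|J|\leq s$, form the restricted sample covariance $\bS_{JJ}$, extract its best rank-$r$ spectral truncation $\widehat\bSigma_J^{(0)}$, and select $\widehat J$ by a penalized criterion of the form $\mathrm{loss}(J)+\lambda|J|\log p/n$; the final estimator embeds $\widehat\bSigma_{\widehat J}^{(0)}$ back into $\mathbb{R}^{p\times p}$ and adds the separately estimated noise term $\widehat\sigma^2\eye_p$. For any fixed $J$ of size at most $s$, standard sub-Gaussian matrix concentration gives $\|\bS_{JJ}-(\bSigma_0)_{JJ}\|_2\lesssim \sqrt{s/n}$ with high probability, and a union bound over the $\binom{p}{s}\leq p^s$ candidate supports converts the $s/n$ factor into the desired $s\log p/n$ rate. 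A Davis–Kahan-type argument, together with the eigen-gap lower bound $\lambda_{0r}\gtrsim 1$, then transfers this spectral control to a bound on $\|\widehat\bSigma-\bSigma_0\|_2^2$.

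For the lower bound, the plan is to reduce to a multiple-hypothesis testing problem and invoke generalized Fano's inequality. It suffices to work inside the rank-one submodel $r=1$. Take a base direction $\bU_0=\be_1$ and build a Varshamov–Gilbert packing $\{\bU_\epsilon:\epsilon\in\calE\}$ by setting $\bU_\epsilon\propto \be_1+\tau\bw_\epsilon$, where $\bw_\epsilon\in\{-1,0,+1\}^{p-1}$ is supported on exactly $s-1$ coordinates and $\tau\asymp\sqrt{(\log p)/n}$. The VG argument yields a subset $\calE$ with $\log|\calE|\gtrsim s\log(p/s)$ and pairwise Hamming distances $\gtrsim s$. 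The associated covariance matrices $\bSigma_\epsilon=\lambda\bU_\epsilon\bU_\epsilon\transpose+\sigma^2\eye_p$ then enjoy a pairwise operator-norm separation of order $\tau\sqrt{s}\asymp\sqrt{s\log p/n}$, while the Gaussian KL divergences obey $n\|\bSigma_\epsilon-\bSigma_{\epsilon'}\|_{\mathrm{F}}^2\lesssim n\tau^2 s\asymp s\log p$, which can be made a sufficiently small fraction of $\log|\calE|$ by choice of absolute constants, closing the Fano argument.

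The main obstacle will be calibrating the VG perturbation scale $\tau$ so that the $\ell_2$-normalization of $\bU_\epsilon$ does not erode the operator-norm separation while simultaneously keeping the KL budget strictly below $\log|\calE|$; this tension forces the constants rather than the rates and typically requires some patience. A second subtlety is obtaining the sharp $\log p$ factor rather than $\log(p/s)$, which calls for handling the regime $s\asymp p$ separately, e.g.\ by a direct Assouad-type argument on rotations of $\bU_0$ within a fixed $s$-subset. On the upper-bound side, the chief technical hurdle is the uniform control of $\|\bS_{JJ}-(\bSigma_0)_{JJ}\|_2$ across all $J$ with $|J|\leq s$, which requires a careful $\epsilon$-net argument on the unit sphere in $\mathbb{R}^s$ combined with sub-exponential tail bounds for quadratic forms in Gaussians; the rest of the argument is essentially bookkeeping via Davis–Kahan and the triangle inequality.
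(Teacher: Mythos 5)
This statement is not proved in the paper at all: it is quoted verbatim, with an explicit citation to \cite{cai2015optimal}, as a known benchmark against which the Bayesian posterior contraction rates of Theorem \ref{thm:posterior_contraction} are later measured. There is therefore no ``paper's own proof'' to compare your sketch against, and a reader of this paper would only be expected to take the result on trust from Cai, Ma and Wu.

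That said, your proposed route is the standard one for this kind of two-sided minimax claim and is broadly consistent with how the cited result is actually established: an aggregation-over-supports estimator with a penalized selection rule for the upper bound, and a Fano argument built on a sparse Varshamov--Gilbert (Massart-type) packing of rank-one directions for the lower bound. Two remarks on the sketch itself. First, you are right to worry about $\log p$ versus $\log(ep/s)$; the packing argument you describe yields a lower bound of order $s\log(ep/s)/n$, and the two agree only when $s \le p^{1-c}$ for some $c>0$. The rate $s\log p/n$ as written in the statement is to be read with that implicit restriction (or with $\log p$ understood as shorthand for $\log(ep/s)$), which is why a separate argument is needed, as you note, when $s$ grows like a constant fraction of $p$. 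Second, for the packing you should make explicit that the supports of the $\bw_\epsilon$ must vary (with controlled pairwise overlap, say at most $s/2$) in order to harvest the $\log\binom{p}{s}\asymp s\log(ep/s)$ entropy; a packing over sign patterns on a single fixed $s$-subset only gives $\log|\calE|\lesssim s$ and loses the logarithmic factor entirely. With those two points attended to, your outline is a faithful reconstruction of the cited argument rather than a genuinely different route.
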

Estimation of the principal subspace $\mathrm{Span}\{\bU_{*1},\ldots,\bU_{*r}\}$ is less straightforward due to the fact that $\mathrm{Span}\{\bU_{*1},\ldots,\bU_{*r}\}$ may not uniquely determine the eigenvector matrix $\bU$. 
In particular, when there exist replicates among the eigenvalues $\{\lambda_{1}+\sigma^2,\ldots,\lambda_r+\sigma^2\}$ (\emph{i.e.}, $\lambda_k = \lambda_{k+1}$ for some $k\in[r - 1]$), the corresponding eigenvectors $[\bU_{*k}, \bU_{*(k+1)}]$ can only be identified up to orthogonal transformation. 
One solution is to focus on the Frobenius norm loss \citep{cai2013sparse,vu2013minimax} or the operator norm loss \citep{cai2015optimal} of the corresponding projection matrix $\bU\bU\transpose$, which is uniquely determined by $\mathrm{Span}\{\bU_{*1},\ldots,\bU_{*r}\}$ and vice versa. The corresponding minimax rate of convergence for $\bU\bU\transpose$ with respect to the projection operator norm loss $\|\widehat\bU\widehat\bU\transpose - \bU_0\bU_0\transpose\|_2$ is given by \cite{cai2015optimal}:
\begin{align}
\label{eqn:minimax_rate_U}
\inf_{\widehat\bU}\sup_{\bSigma_0\in\Theta(p, r, s)}\expect_{\bSigma_0}\|\widehat\bU\widehat\bU\transpose - \bU_0\bU_0\transpose\|_2^2\asymp \frac{s\log p}{n}.
\end{align}
Though convenient, the direct estimation of the projection matrix $\bU\bU\transpose$ does not provide insight into the element-wise errors of the principal eigenvectors $\{\bU_{*1},\ldots,\bU_{*r}\}$. Motivated by a recent paper \citep{cape2017two}, 
which presents a collection of technical tools for the analysis of element-wise eigenvector perturbation bounds with respect to the two-to-infinity norm, we also focus on the following two-to-infinity norm loss
\begin{align}\label{eqn:two_to_infinity_loss}
\|\widehat\bU - \bU_0\bW_\bU\|_{2\to\infty}
\end{align}
for estimating $\mathrm{Span}\{\bU_{*1},\ldots,\bU_{*r}\}$ in addition to the projection operator norm loss, where $\bW_\bU$ is the orthogonal matrix given by
\[
\bW_\bU = \arginf_{\bW\in\mathbb{O}(r)}\|\widehat\bU - \bU_0\bW\|_{\mathrm{F}}.
\]
Here, $\bW_\bU$ corresponds to the orthogonal alignment of $\bU_0$ so that $\widehat\bU$ and $\bU_0\bW_\bU$ are close in the Frobenius norm sense. 
As pointed out in \cite{cape2017two}, the use of $\bW_\bU$ as the orthogonal alignment matrix is preferred over the two-to-infinity alignment matrix 
\[
\bW_{2\to\infty}^\star = \arginf_{\bW\in\mathbb{O}(r)}\|\widehat\bU - \bU_0\bW\|_{2\to\infty},\]
because $\bW_{2\to\infty}$ is not analytically computable in general, whereas $\bW_\bU$ can be explicitly computed \citep{Stewart90}, facilitating the analysis: Let $\bU_0\transpose\widehat\bU$ admit the singular value decomposition $\bU_0\transpose\widehat\bU = \widetilde\bU\widetilde\bSigma\widetilde\bV\transpose$, then $\bW_\bU = \widetilde\bU\widetilde\bV\transpose$. 

The following lemma formalizes the connection between the projection operator norm loss and the two-to-infinity norm loss.
\begin{lemma}\label{lemma:two_to_infinity_and_projection}
Let $\bU$ and $\bU_0$ be two orthonormal $r$-frames in $\mathbb{R}^p$, where $2r< p$. Then there exists an orthonormal $2r$-frame $\bV_\bU$ in $\mathbb{R}^p$ depending on $\bU$ and $\bU_0$, such that
\[
\|\bU-\bU_0\bW_\bU\|_{2\to\infty}\leq \|\bV_\bU\|_{2\to\infty}
\left(\|\bU\bU\transpose - \bU_0\bU_0\transpose\|_2 + \|\bU\bU\transpose - \bU_0\bU_0\transpose\|_2^2\right),
\]
where $\bW_\bU = \arginf_{\bW\in\mathbb{O}(r)}\|\bU - \bU_0\bW\|_{\mathrm{F}}$ is the Frobenius orthogonal alignment matrix. 
\end{lemma}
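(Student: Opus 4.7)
My plan is to reduce the two-to-infinity norm on the left-hand side to a spectral norm via a well-chosen orthonormal $2r$-frame, and then control the resulting spectral norm through a sin-$\Theta$ style estimate.

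First, setting $\bE := \bU\bU\transpose - \bU_0\bU_0\transpose$ and using $\bU\bU\transpose\bU = \bU$, I would start from the algebraic decomposition
\[
\bU - \bU_0\bW_\bU \;=\; \bE\bU \,+\, \bU_0(\bU_0\transpose\bU - \bW_\bU),
\]
which is the key structural ingredient. Because every column of the right-hand side lies in the subspace $\calS := \mathrm{Col}(\bU) + \mathrm{Col}(\bU_0)$, and $\dim(\calS) \leq 2r < p$, I can pick $\bV_\bU \in \mathbb{O}(p, 2r)$ with $\mathrm{Col}(\bV_\bU) \supseteq \calS$, padding with orthonormal vectors from $\calS^\perp$ if $\dim(\calS) < 2r$; this padding is possible precisely because $2r < p$ implies $\dim(\calS^\perp) = p - \dim(\calS) > 2r - \dim(\calS)$. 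For this $\bV_\bU$, the projection $\bV_\bU\bV_\bU\transpose$ fixes $\calS$ pointwise, so $\bU - \bU_0\bW_\bU = \bV_\bU\bV_\bU\transpose(\bU - \bU_0\bW_\bU)$.

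Second, I would invoke the submultiplicative bound $\|\bA\bB\|_{2\to\infty} \leq \|\bA\|_{2\to\infty}\|\bB\|_2$, which is immediate from the definition of $\|\cdot\|_{2\to\infty}$, together with $\|\bV_\bU\transpose\|_2 \leq 1$, to obtain
\[
\|\bU - \bU_0\bW_\bU\|_{2\to\infty} \;\leq\; \|\bV_\bU\|_{2\to\infty}\,\|\bU - \bU_0\bW_\bU\|_2.
\]
Applying the triangle inequality together with $\|\bU\|_2 = \|\bU_0\|_2 = 1$ to the decomposition above yields $\|\bU - \bU_0\bW_\bU\|_2 \leq \|\bE\|_2 + \|\bU_0\transpose\bU - \bW_\bU\|_2$, so everything reduces to proving the quadratic sin-$\Theta$ bound $\|\bU_0\transpose\bU - \bW_\bU\|_2 \leq \|\bE\|_2^2$.

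This last inequality is the only non-routine step and where I expect the real work. Using the SVD $\bU_0\transpose\bU = \widetilde\bU\widetilde\bSigma\widetilde\bV\transpose$ supplied in the excerpt, for which $\bW_\bU = \widetilde\bU\widetilde\bV\transpose$, one gets
\[
\|\bU_0\transpose\bU - \bW_\bU\|_2 \;=\; \|\widetilde\bSigma - \eye_r\|_2 \;=\; \max_{1\leq i\leq r}(1-\cos\theta_i),
\]
where $\cos\theta_1,\ldots,\cos\theta_r$ are the singular values of $\bU_0\transpose\bU$, i.e., the cosines of the principal angles between $\mathrm{Col}(\bU)$ and $\mathrm{Col}(\bU_0)$. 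The classical identity $\|\bE\|_2 = \max_i \sin\theta_i$ for equal-rank orthogonal projectors, combined with the elementary inequality $1 - \cos\theta \leq \sin^2\theta$ on $[0,\pi/2]$ (a direct consequence of $1 - \cos^2\theta = (1-\cos\theta)(1+\cos\theta) \geq 1 - \cos\theta$), gives the desired bound. Assembling the three displays proves the lemma. Conceptually, the crucial ingredient is the $2r$-frame $\bV_\bU$, which localises the error to a low-dimensional subspace so that $\|\cdot\|_{2\to\infty}$ factors out cleanly; the quadratic sin-$\Theta$ refinement in the last paragraph then produces the $\|\bE\|_2^2$ term.
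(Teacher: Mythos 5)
Your proof is correct, and it takes a genuinely different route from the paper. Both arguments begin from the same decomposition $\bU - \bU_0\bW_\bU = (\bU - \bU_0\bU_0\transpose\bU) + \bU_0(\bU_0\transpose\bU - \bW_\bU)$, but they diverge after that. The paper invokes the CS decomposition of the $p\times p$ orthonormal matrix whose blocks are $\bU_0\transpose\bU$, $\bU_0\transpose\bU_\perp$, etc.; it builds the explicit frame $\bV_\bU = [\bU_0\bU_{11},\bU_{0\perp}\bU_{221}]$, rewrites each of the two summands as $\bV_\bU$ times a stacked block from $\bC,\bS$, and evaluates the spectral norms of those blocks by hand (using $\bC^2+\bS^2=\eye_r$ and $\|\eye_r-\bC\|_2 \leq \|\eye_r-\bC^2\|_2 = \|\bS\|_2^2$). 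You instead observe that every column of $\bU - \bU_0\bW_\bU$ already lies in $\mathrm{Col}(\bU)+\mathrm{Col}(\bU_0)$, factor out $\bV_\bU\bV_\bU\transpose$ once via the projection identity, apply $\|\bA\bB\|_{2\to\infty}\leq\|\bA\|_{2\to\infty}\|\bB\|_2$ a single time, and then control $\|\bU-\bU_0\bW_\bU\|_2$ with the principal-angle characterization $\|\bU\bU\transpose-\bU_0\bU_0\transpose\|_2 = \max_i\sin\theta_i$ and the elementary inequality $1-\cos\theta\leq\sin^2\theta$. Your construction of $\bV_\bU$ (any orthonormal basis of $\mathrm{Col}(\bU)+\mathrm{Col}(\bU_0)$, padded from the orthogonal complement when the dimension is below $2r$ — possible exactly because $2r<p$) is a priori looser than the paper's, but since the lemma only asserts existence, this is fine; in fact the paper's $\bV_\bU$ has the same column span. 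What you gain is modularity and brevity: the $2\to\infty$ reduction and the quadratic sin-$\Theta$ refinement become two independent, reusable facts rather than being interwoven with CS-decomposition bookkeeping. What the paper gains is self-containment: it never appeals to the principal-angle characterization as a black box, only to the CS decomposition it has already stated.
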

When the projection operator norm loss $\|\bU\bU\transpose - \bU_0\bU_0\transpose\|_2$ is much smaller than one, Lemma \ref{lemma:two_to_infinity_and_projection}  states that the two-to-infinity norm loss can be upper bounded by the product of the projection operator norm loss 
and $\|\bV_\bU\|_{2\to\infty}$, where $\bV_\bU\in\mathbb{O}(p, 2r)$ is an orthonormal $2r$-frame in $\mathbb{R}^p$. In particular, under the sparse spiked covariance matrix models in high dimensions, the number of spikes $r$ can be much smaller than the dimension $p$ (\emph{i.e.}, $\bV_\bU$ is a ``tall and thin'' rectangular matrix), and hence the factor $\|\bV_\bU\|_{2\to\infty}$ can be much smaller than $\max_{\bV\in\mathbb{O}(p,2r)}\|\bV\|_2 = 1$. 

We provide the following motivating example for the preference on the two-to-infinity norm loss \eqref{eqn:two_to_infinity_loss} over the projection operator norm loss for $\mathrm{Span}\{\bU_{*1},\ldots,\bU_{*r}\}$.
\begin{example}
Let $s\geq4$ be even and $r = 1$. Suppose the truth $\bU_0$ is given by
\begin{align}
\bU_0 = 
\begin{bmatrix*}
\undermat{s}{\frac{1}{\sqrt{s}}&\ldots&\frac{1}{\sqrt{s}}}&
\undermat{p - s}{0 & \ldots & 0}
\end{bmatrix*}
\transpose,\nonumber
\end{align}
and consider the following two perturbations of $\bU_0$:
\begin{align}
\widehat\bU_1 & = 
\begin{bmatrix*}
\undermat{s/2}{c(\eps)(\frac{1}{\sqrt{s}} + \eps)&\ldots
}&
\undermat{s/2}{c(\eps)(\frac{1}{\sqrt{s}} - \eps)&\ldots
}&
\undermat{p-s}{0&\ldots & 0}
\end{bmatrix*}
\transpose,\nonumber\\\nonumber\\
\widehat\bU_2 & = 
\begin{bmatrix*}
c(\delta)(\frac{1}{\sqrt{s}} + \delta)&
\undermat{s-2}{\frac{1}{\sqrt{s}}&\ldots&\frac{1}{\sqrt{s}}}
& c(\delta)(\frac{1}{\sqrt{s}} - \delta)
&
\undermat{p - s}{0 & \ldots & 0}
\end{bmatrix*}
\transpose,\nonumber\\\nonumber
\end{align}
where $\eps>0$ is some sufficiently small perturbation, $c(\eps)^2 = 1/(1+s\eps^2)$, and $\delta$ is related to $\eps$ by 
\[
c(\delta)^2 = \frac{1}{1 + s\delta^2} = \frac{s}{2}\left\{\frac{1}{\sqrt{1 + s\eps^2}} - 1 + \frac{2}{s}\right\}.
\] 
The perturbed matrices $\widehat\bU_1$ and $\widehat\bU_2$ are designed such that their projection operator norm losses are identical, \emph{i.e.}, $\|\widehat\bU_1\widehat\bU_1\transpose - \bU_0\bU_0\transpose\|_2 = \|\widehat\bU_2\widehat\bU_2\transpose - \bU_0\bU_0\transpose\|_2$.
In contrast, $\widehat\bU_1$ and $\widehat\bU_2$ perturb $\bU_0$ in different fashions: all $s$ nonzero elements in $\bU_0$ are perturbed in $\widehat\bU_1$, whereas only two nonzero elements in $\bU_0$ are perturbed in $\widehat\bU_2$. 
\begin{figure}[htbp] 
\centerline{\includegraphics[width = 1\textwidth]{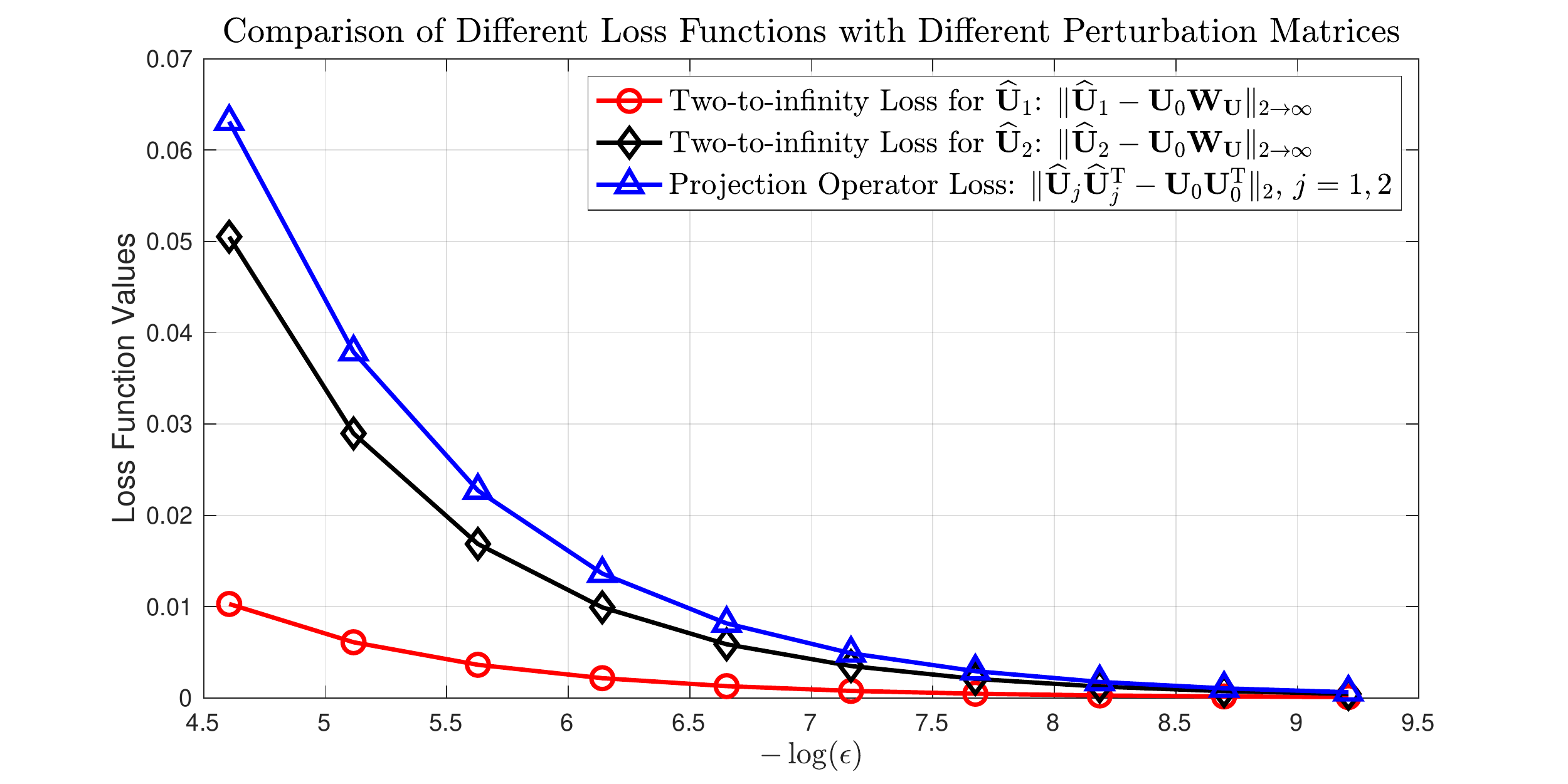}}
\caption{Motivating example: Comparison of different loss function values against different $-\log(\eps)$ values for two perturbed matrices $\widehat\bU_1$ and $\widehat\bU_2$.}
\label{Fig:Motivating_example}
\end{figure}
We examine the two candidate losses $\|\widehat\bU_j - \widehat\bU_0\bW_\bU\|_{2\to\infty}$ and $\|\widehat\bU_j\widehat\bU_j\transpose - \bU_0\bU_0\transpose\|_{2}$
for different values of $\epsilon$ and present them in Figure \ref{Fig:Motivating_example}. It can clearly be seen that the two-to-infinity norm loss is smaller than the projection operator norm loss. Furthermore, the projection operator norm loss is unable to detect the difference between $\widehat\bU_1$ and $\widehat\bU_2$. In contrast, the two-to-infinity norm loss indicates that $\widehat\bU_2$ has larger element-wise deviation from $\bU_0$ than $\widehat\bU_1$ does. 
Thus the two-to-infinity norm loss is capable of detecting element-wise perturbations of the eigenvector compared to the projection operator norm loss for estimating $\mathrm{Span}\{\bU{*1},\ldots,\bU_{*r}\}$. 
\end{example}

\subsection{The matrix spike-and-slab LASSO prior for joint sparsity} 
\label{sub:Matrix_spike_and_slab_lasso_priors_for_joint_sparsity}
We first illustrate the general Bayesian strategies in modeling sparsity occurring in high-dimensional statistics and then elaborate on the proposed prior model. Consider a simple yet canonical sparse normal mean problem. Suppose we observe 
 independent normal data $y_i\sim\mathrm{N}(\beta_i, 1)$, $i = 1,\ldots,n$, with the goal  of estimating the mean vector $\bbeta_n = (\beta_i)_{i=1}^n$, which is assumed to be sparse in the sense that $\sum_{i=1}^n\mathbbm{1}(|\beta_i|\neq 0)\leq s_n$ with the sparsity level $s_n=o(n)$ as $n\to\infty$. To model sparsity on $\bbeta$, classical Bayesian methods impose the spike-and-slab prior of the following form on $\bbeta$: for any measurable set $A\subset\mathbb{R}$,
\begin{align}
\label{eqn:spike_and_slab}
\Pi(\beta_i\in A\mid\lambda, \xi_i)& = (1 - \xi_i)\delta_0(A) + \xi_i\int_A\psi(\beta\mid\lambda)\mathrm{d}\beta,\\
(\xi_i\mid\theta)&\sim\mathrm{Bernoulli}(\theta),\nonumber
\end{align}
where $\xi_i$ is the indicator that $\beta_i = 0$, $\theta\in(0,1)$ represents the prior probability of $\beta_i$ being non-zero, $\delta_0$ is the point-mass at $0$ (called the ``spike'' distribution), and $\psi(\cdot\mid\lambda)$ is the density of an absolutely continuous distribution (called the ``slab'' distribution) with respect to the Lebesgue measure on $\mathbb{R}$ governed by some hyperparameter $\lambda$. Theoretical justifications for the use of spike-and-slab prior \eqref{eqn:spike_and_slab} for sparse normal means and sparse Bayesian factor models have been established in \cite{castillo2012} and \cite{pati2014posterior}, respectively. 
Therein, the spike-and-slab prior \eqref{eqn:spike_and_slab} involves point-mass mixtures, which can be daunting in terms of posterior simulations \citep{pati2014posterior}. To address this issue, the spike-and-slab LASSO prior \citep{rovckova2018bayesian} is designed as a continuous relaxation of \eqref{eqn:spike_and_slab}:
\begin{align}
\label{eqn:SSL}
\pi(\beta_i\mid\lambda_0,\lambda,\xi_i)& = (1-\xi_i)\psi(\beta_i\mid\lambda_0) + \xi_i\psi(\beta_i\mid\lambda),\\
(\xi_i\mid\theta)&\sim\mathrm{Bernoulli}(\theta),\nonumber
\end{align}
where $\psi(\beta\mid\lambda) = (\lambda/2)\exp(-\lambda\beta)$ is the Laplace distribution with mean $0$ and variance $2/\lambda^2$. When $\lambda_0 \gg \lambda$, the spike-and-slab LASSO prior \eqref{eqn:SSL} closely resembles the spike-and-slab prior \eqref{eqn:spike_and_slab}. The continuity feature of the spike-and-slab LASSO prior \eqref{eqn:SSL}, in contrast to the classical spike-and-slab prior \eqref{eqn:spike_and_slab}, is highly desired in high-dimensional settings in terms of computation efficiency.

Motivated by the spike-and-slab LASSO prior, we develop a matrix spike-and-slab LASSO prior to model joint sparsity in sparse spiked covariance matrix models \eqref{eqn:spiked_covariance} with the covariance matrix  $\bSigma = \bU\bLambda\bU\transpose+\sigma^2\eye_p$. The orthonormal constraint on the columns of $\bU$ makes it challenging to incorporate prior distributions. Instead, we consider the following reparametrization of $\bSigma$: 
\begin{align}\label{eqn:reparametriz_Sigma}
\bSigma = \left(\bU\bLambda^{1/2}\bV\transpose\right)\left(\bU\bLambda^{1/2}\bV\transpose\right)\transpose + \sigma^2\eye_p = \bB\bB\transpose + \sigma^2\eye_p,
\end{align}
where $\bB=\bU\bLambda^{1/2}\bV\transpose\in\mathbb{R}^{p\times r}$, and $\bV\in\mathbb{O}(r)$ is an arbitrary orthogonal matrix in $\mathbb{R}^r$. Clearly, in contrast to the orthonormal constraint on $\bU$, there is no constraint on $\bB$ except that $\mathrm{rank}(\bB) = r$. Furthermore, joint sparsity of $\bB$ is inherited from $\bU$: Specifically, for $|\mathrm{supp}(\bU)| = s\geq r$, there exists some permutation matrix $\bP\in\mathbb{R}^{p\times p}$ and $\bU^\star\in\mathbb{O}(s, r)$, such that
\[
\bU = \bP\begin{bmatrix*}
\bU^\star\\\zero_{(p-s)\times r}
\end{bmatrix*}.
\]
It follows directly that 
\[
\bB = \bU\bLambda^{1/2}\bV\transpose = \bP\begin{bmatrix*}
\bU^\star\\\zero_{(p-s)\times r}
\end{bmatrix*}\bLambda^{1/2}\bV\transpose = \bP\begin{bmatrix*}
\bU^\star\bLambda^{1/2}\bV\transpose\\\zero_{(p-s)\times r}
\end{bmatrix*},
\]
implying that $|\mathrm{supp}(\bB)|\leq s$. Therefore, working with $\bB$ allows us to circumvent the orthonormal constraint 
 while maintaining the jointly sparse structure of $\bU$. We propose the following matrix spike-and-slab LASSO prior on $\bB = [b_{jk}]_{p\times r}$: given hyperparameters $\lambda_0>0$ and $\theta\in(0,1)$, for each $j\in[p]$, we independently impose the prior on $\bB_{j*}$ as follows:
\begin{align*}
\pi(\bB_{j*}\mid \lambda_0,\xi_j)& = (1-\xi_j)\prod_{k=1}^r\psi_r(b_{jk}\mid\lambda + \lambda_0) + \xi_j\prod_{k=1}^r\psi_1(b_{jk}\mid\lambda),\\
(\xi_j\mid\theta)&\sim\mathrm{Bernoulli}(\theta),
\end{align*}
where $\bxi = [\xi_1,\ldots,\xi_p]\transpose\in\{0,1\}^{p}$ are binary group assignment indicators, and $\psi_\alpha(x\mid \lambda)$ is the density function of the double Gamma distribution with shape parameter $1/\alpha$ and rate parameter $\lambda$:
\[
\psi_\alpha(x\mid \lambda) = \frac{\lambda^{1/\alpha}}{2\Gamma(1/\alpha)}|x|^{1/\alpha - 1}\exp(-\lambda |x|),\quad -\infty<x<\infty.
\]
We further impose hyperpriors on $\lambda_0$ and $\theta$ as
\begin{align*}
\lambda_0\sim\text{IGamma}(1/p^2, 1)\quad\text{and}\quad
\theta \sim \mathrm{Beta}\left(1, p^{1 + \kappa}\right),
\end{align*}
where $\text{IGamma}(a,b)$ is the inverse Gamma distribution with density $\pi(\lambda_0)\propto \lambda_0^{-a-1}\exp(-b/\lambda_0)$, and $\kappa > 0$ is some fixed constant. 
We refer to the above hierarchical prior on $\bB$ as the matrix spike-and-slab LASSO prior and denote $\bB\sim\mathrm{MSSL}_{p\times r}(\lambda, 1/p^2, p^{1+\kappa})$. The hyperparameter $\lambda$ is fixed throughout. In the single-spike case ($r = 1$), we observe that $\psi_1(b_{jk}\mid\lambda) =(\lambda/2)\exp(-\lambda b_{jk})$ reduces to the density function of the Laplace distribution, and hence the matrix spike-and-slab LASSO prior coincides with the  spike-and-slab LASSO prior  \citep{rovckova2018bayesian}. 
Clearly, it can be seen that {\it a priori}, $\lambda_0$ is much larger than $\lambda$, so that $\xi_j = 0$ corresponds to rows $\bB_{j*}$ that are close to $\zero$, and $\xi_j = 1$ represents that the $j$th row is decently away from $\zero$. 
It should be noted that unlike the spike-and-slab prior \eqref{eqn:spike_and_slab}, the group indicator variable $\xi_j = 0$ or $1$ corresponds to small or large values of $\bB_{j*}$ rather than the exact sparsity of $\bB_{j*}$. In addition, $\theta\sim\mathrm{Beta}(1, p^{1+\kappa})$ indicates that the matrix spike-and-slab LASSO prior favors a large proportion of rows of $\bB$ being close to $\zero$. These features of the matrix spike-and-slab LASSO prior are in accordance with the joint sparsity assumption on $\bU$. We complete the prior specification by imposing $\sigma^2\sim\mathrm{IGamma}(a_\sigma,b_\sigma)$ for some $a_\sigma,b_\sigma>0$ for the sake of conjugacy.

Lastly, we remark that the parametrization \eqref{eqn:reparametriz_Sigma} of the spiked covariance matrix models \eqref{eqn:spiked_covariance} has another interpretation. The sampling model $\by_i\sim\mathrm{N}_p(\zero_p,\bSigma)$ can be equivalently characterized in terms of the latent factor model
\begin{align}
\label{eqn:latent_factor_model}
\by_i = \bB\bz_i + \beps_i,\quad \bz_i\sim\mathrm{N}_r(\zero_r,\eye_r),\quad\beps_i\sim\mathrm{N}_p(\zero_p,\sigma^2\eye_p),\quad i = 1,\ldots,n,
\end{align}
where $\bz_i$, $i = 1,\ldots,n$, are $r$-dimensional latent factors, $\bB$ is a $p\times r$ factor loading matrix, and $\beps_i$, $i = 1,\ldots,n$ are homoscedastic noisy vectors. Since by our earlier discussion $\bB$ is also sparse, this formulation is related to the sparse Bayesian factor models presented in \cite{doi:10.1093/biomet/asr013} and \cite{pati2014posterior}, the differences being the joint sparsity of $\bB$ and prior specifications on $\bB$. 
In addition, the latent factor formulation \eqref{eqn:latent_factor_model} is convenient for posterior simulation through Markov chain Monte Carlo, as discussed in Section 3.1 of \cite{doi:10.1093/biomet/asr013}. 



\section{Theoretical properties} 
\label{sec:theoretical_properties}

\subsection{Properties of the matrix spike-and-slab LASSO prior} 
\label{sub:properties_of_the_matrix_spike_and_slab_lasso_priors}
The theoretical properties of the classical spike-and-slab LASSO prior \eqref{eqn:SSL} have been partially explored by  \cite{rovckova2018bayesian} and \cite{rovckova2016spike}    in the context of sparse linear models and sparse normal means problems, respectively. It is not clear whether the properties of the spike-and-slab LASSO priors adapt to other statistical context, including sparse spiked covariance matrix models, high-dimensional multivariate regression \citep{Bai2018}, etc. 
In this subsection we present a collection of theoretical properties of the matrix spike-and-slab LASSO prior that not only are useful for deriving posterior contraction under the spiked covariance matrix models, but also may be of independent interest for other statistical tasks, \emph{e.g.}, sparse Bayesian linear regression with multivariate response \cite{ning2018bayesian}. 

Let $\bB\in\mathbb{R}^{p\times r}$ be a $p\times r$ matrix, and let $\bB_0\in\mathbb{R}^{p\times r}$ be a jointly $s$-sparse $p\times r$ matrix with $r\leq s\leq p$, corresponding to the underlying truth. In the sparse spiked covariance matrix model, $\bB$ represents the scaled eigenvector matrix $\bU\bLambda^{1/2}$ up to an orthonormal matrix in $\mathbb{O}(r)$, but for generality, we do not impose the statistical context in this subsection. A fundamental measure of goodness for various prior models with high dimensionality is the prior mass assignment on a small neighborhood around the true but unknown value of the parameter. This is referred to as the \emph{prior concentration} in the literature of Bayes theory. Formally, we consider the prior probability of the non-centered ball $\{\|\bB-\bB_0\|_{\mathrm{F}}<\eta\}$ under the prior distribution for small values of $\eta$.
\begin{lemma}
\label{lemma:prior_concentration}
Suppose $\bB\sim\mathrm{MSSL}_{p\times r}(\lambda, 1/p^2, p^{1+\kappa})$ for some fixed positive constants $\lambda$ and $\kappa$, and $\bB_0\in\mathbb{R}^{p\times r}$ is jointly $s$-sparse, where $1\leq r\leq s\leq p/2$. Then for small values of $\eta\in(0,1)$ with $\eta\geq1/p^\gamma$ for some $\gamma>0$, it holds that
\begin{align}
\Pi\left(\|\bB-\bB_0\|_{\mathrm{F}}<\eta\right)\geq\exp\left[-C_1\max\left\{\lambda^2s\|\bB_0\|_{2\to\infty}^2, sr\left|\log\frac{\lambda\eta}{\sqrt{sr}}\right|,s\log p\right\}\right]\nonumber
\end{align}
for some absolute constant $C_1>0$.
\end{lemma}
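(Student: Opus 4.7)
The plan is to lower-bound $\Pi(\|\bB-\bB_0\|_{\mathrm{F}}<\eta)$ by constructing an explicit ``oracle'' event $G$ contained in the target ball whose prior mass factors cleanly across the hierarchical prior. Writing $S=\mathrm{supp}(\bB_0)$ with $|S|\leq s$, I would let $G$ be the intersection of: (i) the assignment $\xi_j=1$ for $j\in S$ (slab) and $\xi_j=0$ for $j\notin S$ (spike); (ii) $\lambda_0\in[\lambda_0^{*},2\lambda_0^{*}]$ for a tuned $\lambda_0^{*}\asymp\sqrt{pr}(\log p)/\eta$; (iii) the one-dimensional box $|b_{jk}-b_{0,jk}|<\eta/\sqrt{2sr}$ for every $(j,k)$ with $j\in S$; and (iv) the box $|b_{jk}|<\eta/\sqrt{2pr}$ for every $(j,k)$ with $j\notin S$. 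A direct Pythagorean calculation shows $G\subseteq\{\|\bB-\bB_0\|_{\mathrm{F}}<\eta\}$, so I only need to lower-bound each of the four pieces and multiply.

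For (i), integrating out $\theta\sim\mathrm{Beta}(1,p^{1+\kappa})$ produces a Beta-function ratio $p^{1+\kappa}\,B(|S|+1,\,p-|S|+p^{1+\kappa})$; Stirling bounds its logarithm below by $-c_1(1+\kappa)s\log p$, which gives the $s\log p$ contribution. For (ii), $\Gamma(1/p^2)\asymp p^2$ together with the near-power-law shape of the $\mathrm{IGamma}(1/p^2,1)$ density on polynomially large values of $\lambda_0$ shows that any dyadic window of polynomial height carries prior mass $\asymp 1/p^2$, contributing only $-O(\log p)$, which is absorbed into $s\log p$.

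For (iii), the slab density $\psi_1(b\mid\lambda)=(\lambda/2)e^{-\lambda|b|}$ on each box is uniformly lower-bounded by $(\lambda/2)\exp(-\lambda(|b_{0,jk}|+\eta/\sqrt{2sr}))$, so the box probability is at least $(\lambda\eta/\sqrt{2sr})^{r}\exp(-\lambda\|\bB_{0j*}\|_1-\lambda\eta\sqrt{r/(2s)})$. Taking logs and summing over $j\in S$ produces the $-sr|\log(\lambda\eta/\sqrt{sr})|$ term. The residual $\lambda\sum_{j\in S}\|\bB_{0j*}\|_1$ is controlled via the Cauchy--Schwarz chain $\sum_{j\in S}\|\bB_{0j*}\|_1\leq\sqrt{r}\sum_{j\in S}\|\bB_{0j*}\|_2\leq s\sqrt{r}\|\bB_0\|_{2\to\infty}$; an AM--GM split then yields $\lambda s\sqrt{r}\|\bB_0\|_{2\to\infty}\leq\tfrac{1}{2}\lambda^2 s\|\bB_0\|_{2\to\infty}^2+\tfrac{1}{2}sr$, whose second summand is absorbed into either the log term or the $s\log p$ term of the final max.

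For (iv), the box probability under the double-gamma spike equals $\gamma(1/r,x)/\Gamma(1/r)$ with $x=(\lambda+\lambda_0)\eta/\sqrt{2pr}$; the choice of $\lambda_0^{*}$ in (ii) forces $x\gtrsim\log(pr/s)$, so the upper-incomplete-gamma tail estimate together with $\log(1-y)\geq -2y$ gives a per-factor contribution $-O(e^{-x}x^{1/r-1}/\Gamma(1/r))$ whose sum over the $(p-s)r$ factors is $\lesssim s\log p$. Assembling the four estimates and retaining the dominant of the three surviving terms produces the claimed bound. I expect the main technical obstacle to be the simultaneous calibration across steps (ii) and (iv): $\lambda_0^{*}$ must be made large enough that the spike mass outside each $\eta/\sqrt{2pr}$-box is negligible across all $(p-s)r$ independent factors, yet the hyperprior must still place polynomial mass on such values. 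The $\mathrm{IGamma}(1/p^2,1)$ shape is precisely engineered to make this balance work under the standing hypothesis $\eta\geq 1/p^\gamma$, which is what keeps the cost of (ii) strictly smaller than $s\log p$.
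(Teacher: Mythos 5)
Your overall strategy — construct an oracle event over $(\bxi,\lambda_0,\theta)$ that factorizes the prior, verify the event sits inside the ball, and multiply the resulting lower bounds — is the same as the paper's, but the technical tools you deploy at each stage are genuinely different, and all of them go through. For the slab on $S$, the paper invokes the normal-scale-mixture representation of the Laplace and Anderson's lemma after further conditioning on $\phi_{jk}\in[1,2]$; you instead lower-bound the Laplace density pointwise on the box, which is more elementary and avoids the extra conditioning layer (your AM--GM step $\lambda s\sqrt{r}\|\bB_0\|_{2\to\infty}\leq\tfrac12\lambda^2 s\|\bB_0\|_{2\to\infty}^2+\tfrac12 sr$ recovers the same leading term). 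For the spike on $S^c$, the paper exploits the fact that $\|\bB_{j*}\|_1\mid\xi_j=0,\lambda_0\sim\mathrm{Exp}(\lambda_0+\lambda)$ and bounds a single row-level event $\{\|\bB_{j*}\|_1\leq\eta/\sqrt{2p}\}$, whereas you work entry-wise with the double-gamma and the standard tail bound $\Gamma(a,x)\leq x^{a-1}e^{-x}$ for $a\leq 1$; the calibration $\lambda_0^*\asymp\sqrt{pr}(\log p)/\eta$ gives $x\gtrsim\log p\geq\log(pr/s)$, which is exactly what is needed so that the product over all $(p-s)r$ factors costs only $\exp(-O(s))$. For the $\theta$-marginal you evaluate the Beta integral exactly and apply Stirling, whereas the paper restricts to a window $\theta\in[s/p^{1+\kappa},2s/p^{1+\kappa}]$; for $\lambda_0$ you use a dyadic window of mass $\asymp p^{-2}$ whereas the paper uses a one-sided tail bounded below by a constant — either way the cost is $O(\log p)\leq O(s\log p)$. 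The one place to be careful is the claim that the residual spike cost is ``$\lesssim s\log p$'': with your choice of $\lambda_0^*$ it is actually $\lesssim s$, which is even better, but you should record that you need the constant in $\lambda_0^*$ large enough (say $\geq\sqrt 2$) so that $x\geq\log p$, and you need the per-entry tail $\Gamma(1/r,x)/\Gamma(1/r)\leq 1/2$ before applying $\log(1-y)\geq -2y$; both hold here. In short: same skeleton, genuinely different flesh; your version is arguably more self-contained since it avoids the Gaussian-mixture representation and Anderson's lemma entirely, at the price of a slightly more delicate incomplete-gamma estimate on the spike side.
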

We next formally characterize how the matrix spike-and-slab LASSO prior imposes joint sparsity on the columns of $\bB$ using a probabilistic argument. Unlike the classical spike-and-slab prior \eqref{eqn:spike_and_slab}, which allows occurrence of exact zeros in the mean vector with positive probability, the spike-and-slab LASSO prior \eqref{eqn:SSL} (the matrix spike-and-slab LASSO prior) is absolutely continuous with respect to the Lebesgue measure on $\mathbb{R}^n$ ($\mathbb{R}^{p\times r}$, respectively), and $|\mathrm{supp}(\bB)| = p$ with probability one. Rather than forcing elements of $\bB$ to be exactly $0$, the matrix spike-and-slab LASSO prior shrinks elements of $\bB$ toward $0$. 
This behavior suggests the following generalization of the row support of a matrix $\bB$: for $\delta>0$ taken to be small, we define $\mathrm{supp}_\delta(\bB) = \{j\in[p]:\|\bB_{j*}\|_2>\delta\}$. Namely, $\mathrm{supp}_\delta(\bB)$ consists of row indices of $\bB$ whose Euclidean norms are greater than $\delta$. Intuitively, one should expect that under the matrix spike-and-slab LASSO prior, $|\mathrm{supp}_\delta(\bB)|$ should be small with large probability. The following lemma formally confirms this 
 intuition. 
\begin{lemma}\label{lemma:prior_sparsity}
Suppose $\bB\sim\mathrm{MSSL}_{p\times r}(\lambda, 1/p^2, p^{1+\kappa})$ for some fixed positive constants $\lambda$ and $\kappa\leq 1$, $1\leq r\leq p$. Let $\delta\in(0,1)$ be a small number with $\delta>1/p^\gamma$ for some $\gamma>0$, and let $s$ be an integer such that $(s\log p)/p$ is sufficiently small. Then for any $\beta > 4\gamma\exp(1)$, it holds that
\[
\Pi\left(|\mathrm{supp}_\delta(\bB)|>\beta s\right)\leq 2\exp\left\{-\min\left(\frac{\beta\kappa}{2},\frac{\beta}{\mathrm{2e}} - 2\gamma\right)s\log p\right\}.
\]
\end{lemma}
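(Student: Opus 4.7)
My plan is to decompose the event $\{|\mathrm{supp}_\delta(\bB)| > \beta s\}$ according to the latent binary indicators $\bxi$. Writing $|\mathrm{supp}_\delta(\bB)| \leq |\bxi| + N_0$, where $N_0 = |\{j : \xi_j = 0, \|\bB_{j*}\|_2 > \delta\}|$ counts the ``spike'' rows whose norm exceeds $\delta$, a union bound yields $\Pi(|\mathrm{supp}_\delta(\bB)| > \beta s) \leq \Pi(|\bxi| > \beta s/2) + \Pi(N_0 > \beta s/2)$. I will show each addend is bounded by $\exp(-(\kappa\beta/2)s\log p)$ and $\exp(-(\beta/(2e) - 2\gamma)s\log p)$ respectively, which together with the factor of two give the claimed bound via the minimum.

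For the Beta-Bernoulli tail, I would apply the union bound $\Pi(|\bxi| \geq m) \leq \binom{p}{m}\, \expect[\theta^m]$ and use the explicit Beta moment identity $\expect[\theta^m] = \prod_{j=1}^m j/(j+p^{1+\kappa}) \leq m!/(p^{1+\kappa})^m$. Combined with $\binom{p}{m} \leq p^m/m!$, this collapses to $\Pi(|\bxi| \geq m) \leq p^{-\kappa m}$. Setting $m = \lceil \beta s/2\rceil$ gives the first desired exponent, provided $\log p \geq 2/\kappa$.

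For the spike contribution, I condition on $\lambda_0$, which renders the rows conditionally independent. Under the spike, $|b_{jk}| \iidsim \mathrm{Gamma}(1/r, \lambda+\lambda_0)$, and since the sum of $r$ independent $\mathrm{Gamma}(1/r, \cdot)$ variables is exponential, $\|\bB_{j*}\|_1 \sim \mathrm{Exp}(\lambda+\lambda_0)$. Hence $\Pi(\|\bB_{j*}\|_2 > \delta \mid \xi_j = 0, \lambda_0) \leq \Pi(\|\bB_{j*}\|_1 > \delta \mid \xi_j = 0, \lambda_0) = e^{-(\lambda+\lambda_0)\delta}$, and a union bound over size-$\lceil \beta s/2\rceil$ subsets of the spike rows gives $\Pi(N_0 \geq \beta s/2) \leq \binom{p}{\lceil \beta s/2\rceil}\, \expect[e^{-\lambda_0 \delta \beta s/2}]$, after discarding the harmless factor $e^{-\lambda\delta\beta s/2} \leq 1$.

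The hard part is controlling the Laplace transform $\expect[e^{-\lambda_0 \tau}]$ for $\tau = \delta\beta s/2$, since the inverse Gamma prior on $\lambda_0$ is heavy-tailed. I plan to split at a threshold $M = Cp^\gamma\log p$, writing $\expect[e^{-\lambda_0\tau}] \leq e^{-M\tau} + \Pi(\lambda_0 \leq M)$. The change of variables $X = 1/\lambda_0 \sim \mathrm{Gamma}(1/p^2, 1)$ and direct integration of the Gamma density yield the tail estimate $\Pi(\lambda_0 \leq M) \lesssim (\log M)/p^2$. On the event $\{\lambda_0 \geq M\}$ with $C = 1 + 1/(2e) - 4\gamma/\beta$, the factor $e^{-M\delta\beta s/2} = p^{-C\beta s/2}$ combines with Stirling's estimate $\binom{p}{\lceil \beta s/2\rceil} \leq (2ep/(\beta s))^{\beta s/2}$ to produce the exponent $\beta/(2e) - 2\gamma$; here the specific constant $1/(2e)$ emerges from the tight Stirling lower bound $m! \geq (m/e)^m$, while the term $-2\gamma$ captures the $p^\gamma$ in $M$. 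The assumption $\beta > 4\gamma\exp(1)$ ensures $\beta/(2e) - 2\gamma > 0$, so both exponents in the minimum are positive.
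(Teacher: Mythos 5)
Your decomposition $|\mathrm{supp}_\delta(\bB)| \le |\bxi| + N_0$ is sound, and the Beta--Bernoulli part works: $\Pi(|\bxi| \ge m) \le \binom{p}{m}\,\expect[\theta^m] \le p^{-\kappa m}$ recovers the first exponent $\beta\kappa/2$ exactly as in the paper. The gap is in the spike contribution, and it is a genuine one, not a matter of constants.

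Your union bound $\Pi(N_0 \ge m) \le \binom{p}{m}\,\expect\bigl[e^{-\lambda_0 \delta m}\bigr]$ is correct, but the right-hand side cannot be made small under the $\mathrm{IGamma}(1/p^2,1)$ prior on $\lambda_0$. Because $1/\lambda_0 \sim \mathrm{Gamma}(1/p^2,1)$ and $\Gamma(1/p^2) \asymp p^2$, the prior has a \emph{heavy left tail}: $\Pi(\lambda_0 \le M) \asymp (\log M)/p^2$ for any moderate $M$, and in particular $\Pi(\lambda_0 \le 1) \gtrsim 1/p^2$ independently of everything else. Consequently, for any $\tau$,
\[
\expect\bigl[e^{-\lambda_0\tau}\bigr] \;\ge\; e^{-\tau}\,\Pi(\lambda_0 \le 1) \;\gtrsim\; \frac{e^{-\tau}}{p^2},
\]
and with $\tau = m\delta \le m$ this is $\gtrsim e^{-m}/p^2$. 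Multiplying by $\binom{p}{m} \gtrsim (p/m)^m$ gives something of order $(p/(em))^m / p^2 \to \infty$ as soon as $m = \lceil\beta s/2\rceil \ge 3$. Your split $\expect[e^{-\lambda_0\tau}] \le e^{-M\tau} + \Pi(\lambda_0 \le M)$ handles the first piece correctly, but the second piece, $\binom{p}{m}\Pi(\lambda_0 \le M) \gtrsim (ep/m)^m (\log M)/p^2$, diverges for every choice of $M$ --- the constraint is structural, since $\Pi(\lambda_0 \le M)$ can never be $o(p^{-2})$.

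The underlying issue is Jensen's inequality working against you. Because $\lambda_0$ is a \emph{global} hyperparameter, conditioning on it \emph{before} taking the size-$m$ union bound produces $\expect_{\lambda_0}\bigl[q(\lambda_0)^m\bigr]$ with $q(\lambda_0) = e^{-(\lambda+\lambda_0)\delta}$, whereas marginalizing $\lambda_0$ first (the paper's route) produces $\bigl(\expect_{\lambda_0}[q(\lambda_0)]\bigr)^m$. By convexity $\expect[q^m] \ge (\expect q)^m$, and here the gap is catastrophic: the rare event $\{\lambda_0 \lesssim 1/\delta\}$ (probability $\gtrsim 1/p^2$, on which $q(\lambda_0) \approx 1$) dominates $\expect[q^m]$ for any $m$, giving $\expect[q^m] \gtrsim 1/p^2$, while $\expect[q] \lesssim (\log p)/p^2$ so that $(\expect q)^m$ is exponentially small in $m$. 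The paper's proof marginalizes $\bxi$ \emph{and} $\lambda_0$ at the per-row level, bounds $\Pi(\|\bB_{j*}\|_1 > \delta\mid\theta) \lesssim (\log p)/p^2 + \theta$ via Alzer's incomplete Gamma inequality, restricts $\theta$ to a good set, and only then applies the Hagerup binomial Chernoff bound to the count. In that ordering the heavy left tail of $\lambda_0$ is diluted across the $p$ rows instead of being amplified by the combinatorial factor. If you want to retain your $\bxi$-conditional decomposition, you must likewise postpone the $\lambda_0$-union-bound until after marginalization, replacing the subset union bound on $N_0$ with a Chernoff argument based on the $\lambda_0$-marginal per-row spike probability.
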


We conclude this section by providing a large deviation inequality for the matrix spike-and-slab LASSO prior.
\begin{lemma}\label{lemma:prior_deviation}
Suppose $\bB\sim\mathrm{MSSL}_{p\times r}(\lambda, 1/p^2, p^{1+\kappa})$ for some fixed positive $\lambda$ and $\kappa<1$, and $\bB_0\in\mathbb{R}^{p\times r}$ is jointly $s$-sparse, where $r\log n\lesssim \log p$, and $(s\log p)/p$ is sufficiently small. Let $(\delta_n)_{n=1}^\infty$ and $(t_n)_{n=1}^\infty$ be positive sequences such that $1/p^\gamma\leq\delta_n\leq 1$ and $t_n/(sr) \to \infty$. Then for sufficiently large $n$ and for all $\beta >4\gamma\exp(1)$, it holds that
\begin{align*}
&\Pi\left[\sum_{j=1}^p\|\bB_{j*}\|_1\mathbbm{1}\{j\in\mathrm{supp}_{\delta_n}(\bB)\cup\mathrm{supp}(\bB_0)\}\geq t_n\right]\nonumber\\
&\quad\leq 2\exp\left[-C_2\min\left\{\left(\frac{t_n}{\beta sr}\right)^2,\left(\frac{t_n}{r}\right)^2,\frac{t_n}{r}\right\}\right]
 + 
3\exp\left\{ - \min\left(\frac{\beta\kappa}{2},\frac{\beta}{\mathrm{2e}} - 2\gamma\right)s\log p\right\}\nonumber
\end{align*}
for some absolute constant $C_2>0$. 
\end{lemma}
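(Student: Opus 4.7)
The plan is to decompose the event by first controlling the cardinality of $\mathrm{supp}_{\delta_n}(\bB)$ via Lemma \ref{lemma:prior_sparsity}, and then bound the resulting sum over a set of bounded size using a union bound combined with a Chernoff/MGF argument. The key observation is that on the high-probability event that $\mathrm{supp}_{\delta_n}(\bB)$ is small, the random index set $\mathrm{supp}_{\delta_n}(\bB) \cup \mathrm{supp}(\bB_0)$ becomes a deterministic-size selection problem, which can then be union-bounded against the subexponential concentration of row norms.

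Concretely, first I would introduce the auxiliary event $\mathcal{E} := \{|\mathrm{supp}_{\delta_n}(\bB)| \leq \beta s\}$. Lemma \ref{lemma:prior_sparsity} immediately gives $\Pi(\mathcal{E}^c) \leq 2\exp\{-\min(\beta\kappa/2,\beta/(2\mathrm{e}) - 2\gamma)s\log p\}$, producing (up to the constant factor) the second exponential term in the target bound. On $\mathcal{E}$, the random set $S(\bB) := \mathrm{supp}_{\delta_n}(\bB) \cup \mathrm{supp}(\bB_0)$ has cardinality at most $(\beta + 1)s$, and the sum of interest is dominated by the deterministic quantity $\max_{S \subset [p],\, |S| \leq (\beta+1)s}\sum_{j \in S}\|\bB_{j*}\|_1$.

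Second, I would union bound this maximum over all subsets $S \subset [p]$ of cardinality $(\beta+1)s$. The combinatorial price $\binom{p}{(\beta+1)s} \leq \exp\{C(\beta+1)s\log p\}$ is absorbed into the first exponential term of the target bound as long as $t_n/(sr) \to \infty$ (an assumption of the lemma). For a fixed subset $S$ of size $k = (\beta+1)s$, I would analyze $\sum_{j \in S}\|\bB_{j*}\|_1$ via Chernoff: conditional on $(\theta,\lambda_0,\{\xi_j\}_{j \in S})$ the rows are independent, with $\|\bB_{j*}\|_1 \sim \mathrm{Gamma}(r,\lambda)$ if $\xi_j = 1$ and $\|\bB_{j*}\|_1$ equal to a sum of $r$ independent $\mathrm{Gamma}(1/r,\lambda+\lambda_0)$ variables if $\xi_j = 0$. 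The respective MGFs $(\lambda/(\lambda-u))^r$ and $((\lambda+\lambda_0)/(\lambda+\lambda_0-u))$ admit Bennett/Bernstein-type quadratic-linear expansions for $u$ bounded away from $\lambda$ and $\lambda+\lambda_0$ respectively. Integrating against the Bernoulli$(\theta)$ mixture and then optimizing $u$ yields a Bernstein-type bound whose three regimes (a Gaussian-type regime from the aggregate variance proxy of order $\beta s r^2$, an intermediate regime tied to a single row scale of order $r$, and a sub-exponential tail regime of order $t_n/r$) correspond exactly to the three arguments of the minimum in the claimed exponent.

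The main obstacle is the hierarchical prior on $\lambda_0 \sim \mathrm{IGamma}(1/p^2,1)$, whose right tail is extremely heavy and which enters the spike part of the MGF multiplicatively; naively integrating out $\lambda_0$ blows up the moment bound. To handle this I would split the analysis according to whether $\lambda_0$ lies in a typical range (say $\lambda_0 \in [c_1, p^{c_2}]$, whose complement has prior mass at most $O(\exp(-c s \log p))$ and can be absorbed into the third $\exp\{\cdot\}$ summand) and control the MGF uniformly within this range. A secondary technical subtlety is the coupling between the indicator $\mathbbm{1}\{j \in \mathrm{supp}_{\delta_n}(\bB)\}$ and the random variable $\|\bB_{j*}\|_1$; however this is dissolved cleanly by the reduction to $\mathcal{E}$, because on $\mathcal{E}$ the data-dependent indicator is dominated by the indicator of membership in a deterministic set of size $(\beta+1)s$, after which the coordinates of $\bB$ become unconditionally independent across rows and standard exponential concentration applies.
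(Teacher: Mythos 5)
Your overall skeleton — first control $|\mathrm{supp}_{\delta_n}(\bB)|$ via Lemma~\ref{lemma:prior_sparsity}, then concentrate the sum over a set of bounded size using sub-exponential tails of the mixture-of-Gamma row norms — matches the paper. The paper additionally splits the sum at the outset into the part over $\mathrm{supp}_{\delta_n}(\bB)$ and the part over $\mathrm{supp}(\bB_0)$, treating the latter (deterministic) index set separately, but that is a cosmetic difference and your joint treatment is fine.

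The place where your argument genuinely breaks is the step where you pass from the random set $S(\bB)=\mathrm{supp}_{\delta_n}(\bB)\cup\mathrm{supp}(\bB_0)$ to $\max_{|S|\le(\beta+1)s}\sum_{j\in S}\|\bB_{j*}\|_1$ and then union-bound over all $\binom{p}{(\beta+1)s}$ subsets. You claim the combinatorial cost $\exp\{C(\beta+1)s\log p\}$ ``is absorbed as long as $t_n/(sr)\to\infty$'', but that hypothesis is far too weak to make that true. Writing $a_n:=t_n/(sr)\to\infty$, the concentration exponent is
\[
\min\left\{\left(\frac{a_n}{\beta}\right)^2,\;(a_n s)^2,\;a_n s\right\},
\]
and to dominate $s\log p$ you would need $a_n\gtrsim\log p$ (or $a_n\gtrsim\sqrt{s\log p}$ for the first branch), whereas $a_n\to\infty$ can be arbitrarily slow relative to $\log p$ (e.g.\ $a_n=\log\log p$). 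So your proof as sketched would only deliver the stated bound under a materially stronger growth condition on $t_n$. The paper avoids this entirely: instead of maximizing over subsets, it \emph{conditions} on the indicator vector $\bzeta=(\mathbbm{1}(\|\bB_{j*}\|_1>\delta_n))_j$, so the support set is fixed given $\bzeta$; conditionally on $\bzeta$ (and $\theta,\lambda_0$) the rows remain independent and each $\|\bB_{j*}\|_1$ given $\zeta_j=1$ has sub-exponential moments uniformly bounded via Lemma~\ref{lemma:Gamma_subexponential_norm}, so the sub-exponential Bernstein inequality applies to the conditional sum without any combinatorial factor. This is the key device your proposal is missing — the coupling between $\zeta_j$ and $\|\bB_{j*}\|_1$ is handled by computing conditional moments rather than by decoupling through a union bound. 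Your proposed treatment of the heavy-tailed $\lambda_0$ hyperprior by truncation is reasonable and in the same spirit as how the paper constrains $\theta$ to its typical range, and is not the source of the problem.
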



\subsection{Posterior contraction for the sparse Bayesian spiked covariance matrix model} 
\label{sub:posterior_contraction_for_bayesian_sparse_spiked_covariance_matrices}
We now present the posterior contraction rates for sparse spiked covariance matrix models under the matrix spike-and-slab LASSO prior with respect to various loss functions, which are the main results of this paper. 
We point out that the posterior contraction rates presented in the following theorem are minimax-optimal as they coincide with \eqref{eqn:minimax_rate_Sigma} and \eqref{eqn:minimax_rate_U}. 
\begin{theorem}\label{thm:posterior_contraction}
Assume the data $\by_1,\ldots,\by_n$ are independently sampled from $\mathrm{N}_p(\zero_p, \bSigma_0)$ with $\bSigma_0 = \bU_0\bLambda_0\bU_0\transpose + \sigma_0^2\eye_p$, 
$\bLambda_0 = \mathrm{diag}(\lambda_{01},\ldots,\lambda_{0r})$, 
$|\mathrm{supp}(\bU_0)|\leq s$, and $1\leq r\leq s\leq p$. 
Suppose $(s\log p)/n \to 0$, $p/n\to\infty$, and $r\log n\lesssim \log p$. Let $\bB\sim\mathrm{MSSL}_{p\times r}(\lambda, 1/p^2, p^{1+\kappa})$ for some positive $\lambda>0$ and $\kappa \leq 1$, and $\sigma^2\sim\mathrm{IGamma}(a_\sigma,b_\sigma)$ for some $a_\sigma,b_\sigma\geq1$. Then there exists some constants $M_0>0$, $R_0$, and $C_0$ depending on $\sigma_0$ and $\bLambda_0$, and hyperparameters, such that the following posterior contraction for $\bSigma = \bB\bB\transpose + \sigma^2\eye_p$ holds for all $M\geq M_0$ when $n$ is sufficiently large:
\begin{align}
\label{eqn:posterior_contraction_Sigma}
\expect_0\left\{\Pi\left(\|\bSigma-\bSigma_0\|_2 > M\sqrt{\frac{s\log p}{n}}\mathrel{\bigg|}\bY_n\right)\right\}
&\leq R_0\exp(-C_0s\log p).
\end{align}
For each $\bB$, let $\bU_\bB\in\mathbb{O}(p, r)$ be the left-singular vector matrix of $\bB$. Then the following posterior contraction for $\bU_\bB$ holds for all $M\geq M_0$:
\begin{align}
\label{eqn:posterior_contraction_Projection}
\expect_0\left\{\Pi\left(\|\bU_\bB\bU_\bB\transpose-\bU_0\bU_0\transpose\|_{2} > \frac{2M}{\lambda_{0r}}\sqrt{\frac{s\log p}{n}}\mathrel{\Big|}\bY_n\right)\right\}
&\leq R_0\exp(-C_0s\log p).
\end{align}
\end{theorem}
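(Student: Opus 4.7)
The plan is to apply the Ghosal--Ghosh--van der Vaart framework for posterior contraction to the reparametrized pair $(\bB,\sigma^2)$ with $\bSigma=\bB\bB\transpose+\sigma^2\eye_p$, targeting the rate $\eps_n=\sqrt{s\log p/n}$ so that $n\eps_n^2 = s\log p$. The three ingredients to verify are (i) a Kullback--Leibler prior concentration lower bound around $\bSigma_0$, (ii) a sieve $\calF_n$ whose complement has exponentially small prior mass, and (iii) exponentially powerful tests inside $\calF_n$. Contraction of the projection matrix $\bU_\bB\bU_\bB\transpose$ is then inherited from that of $\bSigma$ via the Davis--Kahan $\sin\Theta$ theorem, since $\bU_\bB$ spans the top-$r$ eigenspace of $\bB\bB\transpose$, which coincides with the top-$r$ eigenspace of $\bSigma$.

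For (i), set $\bB_0^\star = \bU_0\bLambda_0^{1/2}$, which is jointly $s$-sparse. For zero-mean Gaussian models the per-sample KL divergence and its variance are, in any spectral neighbourhood of $\bSigma_0$, bounded by a constant multiple of $\|\bSigma-\bSigma_0\|_{\mathrm{F}}^2$, and the splitting
\begin{align*}
\|\bSigma-\bSigma_0\|_{\mathrm{F}} \leq (\|\bB\|_2+\|\bB_0^\star\|_2)\|\bB-\bB_0^\star\|_{\mathrm{F}} + \sqrt{p}\,|\sigma^2-\sigma_0^2|
\end{align*}
reduces the problem to requiring $\|\bB-\bB_0^\star\|_{\mathrm{F}}\lesssim \eps_n$ and $|\sigma^2-\sigma_0^2|\lesssim \eps_n/\sqrt{p}$. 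Applying Lemma~\ref{lemma:prior_concentration} with $\eta\asymp\eps_n$, together with the boundedness of $\|\bB_0^\star\|_{2\to\infty}$ (from the bounded spectrum hypothesis on $\bLambda_0$) and the assumption $r\log n\lesssim\log p$, shows that the maximum in that lemma is dominated by $s\log p$; the smoothness of $\mathrm{IGamma}(a_\sigma,b_\sigma)$ at $\sigma_0^2$ contributes only a polynomial factor, dwarfed by $\exp(-Cs\log p)$. Together these yield the required KL-ball mass $\exp(-C n\eps_n^2)$.

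For (ii) and (iii), define
\begin{align*}
\calF_n = \Bigl\{(\bB,\sigma^2):|\mathrm{supp}_{\delta_n}(\bB)|\leq \beta s,\ \sum_{j\in\mathrm{supp}_{\delta_n}(\bB)\cup\mathrm{supp}(\bB_0^\star)}\|\bB_{j*}\|_1\leq t_n,\ \sigma^2\in[a_n,b_n]\Bigr\}
\end{align*}
with $\delta_n\asymp 1/n$, $t_n$ polynomial in $n$, $\beta>4\gamma e$, and $[a_n,b_n]$ a polynomial window around $\sigma_0^2$; Lemmas~\ref{lemma:prior_sparsity} and \ref{lemma:prior_deviation} then give $\Pi(\calF_n^c)\leq \exp(-C_1 s\log p)$. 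Inside $\calF_n$, hard-thresholding at level $\delta_n$ puts every $\bB$ within operator-norm distance $o(\eps_n)$ of a genuinely $\beta s$-sparse matrix, and a standard volume covering of $\beta s\times r$ Frobenius balls of radius $t_n$ with a union over $\binom{p}{\beta s}$ supports gives
\begin{align*}
\log N(\eps_n/6,\calF_n,\|\bSigma-\bSigma'\|_2) \lesssim \beta s\log p + \beta sr\log(t_n/\eps_n) \lesssim s\log p,
\end{align*}
where the last step again uses $r\log n\lesssim\log p$. The classical likelihood-ratio tests for Gaussian covariance matrices on each small operator-norm ball, combined with this entropy bound in Le Cam's style, produce tests separating $\bSigma_0$ from $\{\|\bSigma-\bSigma_0\|_2 > M\eps_n\}\cap\calF_n$ with type I and II errors $\exp(-c n\eps_n^2)$. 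The Ghosal--Ghosh--van der Vaart theorem then delivers \eqref{eqn:posterior_contraction_Sigma}. For \eqref{eqn:posterior_contraction_Projection}, on the event $\{\|\bSigma-\bSigma_0\|_2<\lambda_{0r}/2\}$ Weyl's inequality forces the $r$-th eigen-gap of $\bSigma$ to remain at least $\lambda_{0r}/2$, so the Davis--Kahan $\sin\Theta$ theorem yields $\|\bU_\bB\bU_\bB\transpose-\bU_0\bU_0\transpose\|_2\leq (2/\lambda_{0r})\|\bSigma-\bSigma_0\|_2$.

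The main obstacle is the simultaneous tuning of the sieve parameters $(\delta_n, t_n, \beta)$ so that Lemmas~\ref{lemma:prior_sparsity}--\ref{lemma:prior_deviation} each cost exactly $\exp(-C s\log p)$: because the matrix spike-and-slab LASSO never produces exact zero rows, $\delta_n$ has to be taken small enough that the approximation error $\delta_n\sqrt{p}$ is negligible against $\eps_n$, yet still above the threshold $1/p^\gamma$ required by Lemmas~\ref{lemma:prior_sparsity} and \ref{lemma:prior_deviation}. A secondary technical subtlety is that the KL-window for the isotropic noise $|\sigma^2-\sigma_0^2|\lesssim \sqrt{s\log p/(pn)}$ is a factor $\sqrt{p/s}$ tighter than the operator-norm target on $\bSigma$; this is nevertheless harmless because the hypotheses $(s\log p)/n\to 0$ and $p/n\to\infty$ together imply $(s\log p)/(pn)\to 0$, so a polynomial inverse Gamma density window still supplies enough prior mass.
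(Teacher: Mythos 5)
Your proposal follows the same Ghosal--Ghosh--van der Vaart scaffolding as the paper's proof: prior concentration in a Frobenius ball via Lemma~\ref{lemma:prior_concentration} and the evidence lower bound in Lemma~\ref{lemma:evidence_lower_bound}, a sieve over $\bB$ controlled by Lemmas~\ref{lemma:prior_sparsity}--\ref{lemma:prior_deviation}, exponentially powerful tests, and finally Weyl plus Davis--Kahan to transfer the operator-norm contraction of $\bSigma$ to the projection $\bU_\bB\bU_\bB\transpose$. Where you genuinely depart from the paper is in the test step: you propose a metric-entropy/Le Cam covering argument, covering both the row supports and, radially, a Frobenius neighborhood of each support and a bounded $\sigma^2$-window, whereas the paper's Lemma~\ref{lemma:existence_test} unions only over supports of size at most $\tau_n$ and then, for each support, applies the oracle test of \cite{gao2015rate} (Lemma~\ref{lemma:matrix_test_bound}), which has a uniform type~II error bound over the \emph{unbounded} alternative $\{\|\bSigma-\bSigma_0\|_2>M\eps_n\}$. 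This is the crucial structural difference: the paper never needs to cover the radial direction in $\bB$, and never needs to pre-bound $\sigma^2$ in the sieve.

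That difference exposes a genuine gap in your argument. You write ``Lemmas~\ref{lemma:prior_sparsity} and \ref{lemma:prior_deviation} then give $\Pi(\calF_n^c)\leq\exp(-C_1 s\log p)$,'' but those lemmas only control the $\bB$-marginal of the prior; they say nothing about $\Pi(\sigma^2\notin[a_n,b_n])$, which is an extra term in $\Pi(\calF_n^c)$ once you add the $\sigma^2$-window. For $\sigma^2\sim\mathrm{IGamma}(a_\sigma,b_\sigma)$ the left tail near zero is exponentially small, so $a_n\asymp 1/(s\log p)$ is fine; but the right tail obeys $\Pi(\sigma^2>b_n)\asymp b_n^{-a_\sigma}$, which is only \emph{polynomially} small in $b_n$. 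With the ``polynomial window'' you specify this gives $\Pi(\sigma^2>b_n)\asymp\exp(-Ka_\sigma\log n)$, which is not $\exp(-Cs\log p)$ once $s\log p/\log n\to\infty$ (and the theorem allows $s\to\infty$ and $\log p\gg\log n$). To make your covering route work you would need $b_n$ exponentially large in $s\log p$, after which the one-dimensional $\log$-covering cost $\log(b_n/\eps_n)\asymp s\log p$ is still admissible, so the gap is local and repairable; but as written the claim does not hold. The cleanest fix is simply to replace the radial/interval covering with the paper's Lemma~\ref{lemma:existence_test}, whose oracle test absorbs all $\sigma^2$ and all large $\|\bB\|$ in the alternative without any radial covering, so that the sieve $\calF_n$ need not constrain $\sigma^2$ at all.
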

\begin{remark}
We briefly compare the posterior contraction rates obtained in Theorem \ref{thm:posterior_contraction} with some related results in the literature. 
In \cite{pati2014posterior} the authors consider the posterior contraction with respect to the operator norm loss $\|\bSigma-\bSigma_0\|_2$ of the entire covariance matrix, while
in \cite{gao2015rate}, the authors consider the posterior contraction with respect to the projection Frobenius norm loss $\|\bU\bU\transpose-\bU_0\bU_0\transpose\|_{\mathrm{F}}$ for estimating $\mathrm{Span}\{\bU_{*1},\ldots,\bU_{*r}\}$. In \cite{pati2014posterior}, the notion of sparsity is slightly different than the joint sparsity notion presented here, as they assume that under the latent factor model representation \eqref{eqn:latent_factor_model}, the individual supports of columns of $\bB$ are not necessarily the same. When $r = O(1)$, the assumption in \cite{pati2014posterior} coincides with this paper, and our rate $\eps_n = \sqrt{(s\log p)/n}$ is superior to the rate $\sqrt{(s\log p\log n)/n}$ obtained in \cite{pati2014posterior} by a logarithmic factor. The assumptions in \cite{gao2015rate} are the same as those in \cite{pati2014posterior}, and in \cite{gao2015rate} the authors focus on designing a prior that yields rate-optimal posterior contraction with respect to the Frobenius norm loss of the projection matrices as well as adapting to the prior sparsity $s$ and the rank $r$. Our result in equation \eqref{eqn:posterior_contraction_Projection}, which focuses on the projection operator norm loss, serves as a complement to the rate-optimal posterior contraction for principal subspaces under the joint sparsity assumption in constrast to \cite{gao2015rate}, in which the authors work on the projection Frobenius norm loss.
\end{remark}
To derive the posterior contraction rate for the principal subspace with respect to the two-to-infinity norm loss, we need the posterior contraction result for $\bSigma$ with respect to the stronger matrix infinity norm. These two results are summarized in the following theorem. 
\begin{theorem}\label{thm:posterior_contraction_two_to_infinity}
Assume the conditions in Theorem \ref{thm:posterior_contraction} hold. 
Further assume that the eigenvector matrix $\bU_0$ exhibits bounded coherence: $\|\bU_0\|_{2\to\infty}\leq C_\mu\sqrt{r/s}$ for some constant $C_\mu\geq 1$, and the number of spikes $r$ is sufficiently small in the sense that $r^3/s= O(1)$. 
Then there exists some constants $M_{2\to\infty}>0$ depending on $\sigma_0$ and $\bLambda_0$, and hyperparameters, such that the following posterior contraction for $\bSigma = \bB\bB\transpose + \sigma^2\eye_p$ holds for all $M\geq M_{2\to\infty}$ when $n$ is sufficiently large:
\begin{align}
\label{eqn:posterior_contraction_Sigma}
\expect_0\left\{\Pi\left(\|\bSigma-\bSigma_0\|_\infty > M r\sqrt{\frac{s\log p}{n}}\mathrel{\bigg|}\bY_n\right)\right\}
&\leq R_0\exp(-C_0 s\log p),
\end{align}
For each $\bB$, let $\bU_\bB\in\mathbb{O}(p, r)$ be the left-singular vector matrix of $\bB$. Then the following posterior contraction for $\bU_\bB$ holds for all $M\geq M_0$:
\begin{align}
\label{eqn:posterior_contraction_two_to_infinity}
&\expect_0\left[\Pi\left\{\|\bU_\bB-\bU_0\bW_\bU\|_{2\to\infty} > M
\left(\sqrt{\frac{r^3\log p}{n}}\vee{\frac{s\log p}{n}}\right)
\right\}
\right]
\leq 2R_0\exp(-C_0s\log p),
\end{align}
where $\bW_\bU$ is the Frobenius orthogonal alignment matrix 
\[
\bW_\bU = \arginf_{\bW\in\mathbb{O}(r)}\|\bU_\bB-\bU_0\bW\|_{\mathrm{F}}.
\]
\end{theorem}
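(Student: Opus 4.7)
The plan is to derive both contraction rates by upgrading the operator- and Frobenius-norm contractions already proved in Theorem \ref{thm:posterior_contraction} to the stronger matrix-infinity and two-to-infinity norms, and to do this upgrading in two deterministic steps on a single high-posterior-probability event $\mathcal{E}_n$. Concretely, $\mathcal{E}_n$ bundles together the posterior events on which the conclusions of Theorem \ref{thm:posterior_contraction} hold, the effective row-sparsity bound $|\mathrm{supp}_\delta(\bB)|\lesssim s$ of Lemma \ref{lemma:prior_sparsity} is valid, and the large-deviation bound of Lemma \ref{lemma:prior_deviation} on $\sum_{j}\|\bB_{j*}\|_1\mathbbm{1}\{j\in \mathrm{supp}_\delta(\bB)\cup\mathrm{supp}(\bB_0)\}$ is valid; a union bound gives $\Pi(\mathcal{E}_n^c\mid \bY_n)\lesssim \exp(-C_0 s\log p)$, so absorbing into the advertised exponential tail is routine.

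The first step is the matrix-infinity-norm contraction for $\bSigma$. Decomposing $\bSigma - \bSigma_0 = \bB\bB\transpose - \bB_0\bB_0\transpose + (\sigma^2-\sigma_0^2)\eye_p$ and using the identity $\bB\bB\transpose - \bB_0\bB_0\transpose = \bB(\bB-\bB_0)\transpose + (\bB-\bB_0)\bB_0\transpose$, a row-sum bound yields
\begin{align*}
\|\bB\bB\transpose - \bB_0\bB_0\transpose\|_\infty \leq \|\bB\|_{2\to\infty}\sum_{j}\|\bB_{j*}-\bB_{0,j*}\|_2 + \|\bB-\bB_0\|_{2\to\infty}\sum_{j}\|\bB_{0,j*}\|_2.
\end{align*}
The deterministic pieces $\sum_j\|\bB_{0,j*}\|_2\lesssim\sqrt{sr}$ and $\|\bB_0\|_{2\to\infty}\lesssim\sqrt{r/s}$ follow from joint $s$-sparsity of $\bU_0$, boundedness of $\bLambda_0$, and the bounded-coherence assumption. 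The sum $\sum_j\|\bB_{j*}-\bB_{0,j*}\|_2$ is split at the effective support $T = \mathrm{supp}_\delta(\bB)\cup\mathrm{supp}(\bB_0)$: off $T$ each summand is bounded by $\delta$, giving a total $\leq p\delta$ that is negligible for polynomially small $\delta$, while on $T$ Cauchy--Schwarz gives $\sum_{j\in T}\|\bB_{j*}-\bB_{0,j*}\|_2 \leq \sqrt{|T|}\,\|\bB-\bB_0\|_{\mathrm{F}}$, with $|T|\lesssim s$ on $\mathcal{E}_n$ and the Frobenius contraction $\|\bB-\bB_0\|_{\mathrm{F}}\lesssim\sqrt{s\log p/n}$ read off from the evidence-lower-bound argument behind Theorem \ref{thm:posterior_contraction}. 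The factor $\|\bB\|_{2\to\infty}$ is controlled by combining $\|\bB_0\|_{2\to\infty}\lesssim\sqrt{r/s}$ with a bootstrap of $\|\bB-\bB_0\|_{2\to\infty}$ from Lemma \ref{lemma:prior_deviation} restricted to $T$. Assembling the pieces and absorbing the lower-order $|\sigma^2-\sigma_0^2|\lesssim\sqrt{s\log p/n}$ yields the rate $r\sqrt{s\log p/n}$.

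The second step is the two-to-infinity contraction for $\bU_\bB$. A Cape-type eigenvector perturbation identity, which is the analytic heart of \cite{cape2017two}, produces the deterministic bound
\begin{align*}
\|\bU_\bB - \bU_0\bW_\bU\|_{2\to\infty} \lesssim \frac{\|(\bSigma-\bSigma_0)\bU_0\|_{2\to\infty}}{\lambda_{0r}} + \|\bU_0\|_{2\to\infty}\,\|\bU_\bB\bU_\bB\transpose - \bU_0\bU_0\transpose\|_2^2.
\end{align*}
Since $\bU_0$ is row-supported on a set of size at most $s$, an entrywise argument gives $\|(\bSigma-\bSigma_0)\bU_0\|_{2\to\infty}\leq \|\bSigma-\bSigma_0\|_\infty\,\|\bU_0\|_{2\to\infty}$. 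Inserting the step-one rate together with $\|\bU_0\|_{2\to\infty}\leq C_\mu\sqrt{r/s}$ yields the $\sqrt{r^3\log p/n}$ contribution from the first summand, while the quadratic summand contributes $\sqrt{r/s}\cdot(s\log p/n)\leq s\log p/n$ via Theorem \ref{thm:posterior_contraction}. The assumption $r^3/s=O(1)$ is precisely what keeps the overall perturbation small enough for the identity to hold and for the displayed maximum to express the correct rate.

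The main obstacle is the first step: matching the advertised $r\sqrt{s\log p/n}$ rate for $\|\bSigma-\bSigma_0\|_\infty$ requires the joint posterior control of (i) the effective row-sparsity of $\bB$ via Lemma \ref{lemma:prior_sparsity}, (ii) the row-$\ell_1$ mass outside and inside $T$ via Lemma \ref{lemma:prior_deviation}, and (iii) the two-to-infinity factor on $\bB$ bootstrapped from bounded coherence of $\bU_0$ together with the spectrum-level contraction of $\bSigma$, all on a common event with exponentially small complement. Once this is in place, step two is an essentially deterministic perturbation calculation into which the step-one rate is substituted.
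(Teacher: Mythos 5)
The main gap is in your first step, and it is a conflation of prior-probability control with posterior-probability control, compounded by an unaddressed identifiability issue. You claim to ``read off'' a posterior Frobenius contraction $\|\bB-\bB_0\|_{\mathrm{F}}\lesssim\sqrt{s\log p/n}$ from the evidence-lower-bound argument, and likewise to get $\Pi(\mathcal{E}_n^c\mid\bY_n)\lesssim\exp(-C_0 s\log p)$ by union-bounding Lemmas \ref{lemma:prior_sparsity} and \ref{lemma:prior_deviation}. But Lemma \ref{lemma:evidence_lower_bound} only lower-bounds the denominator $D_n$ using the \emph{prior} mass near $\bB_0$ — it says nothing about posterior contraction of $\bB$ in Frobenius norm. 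Similarly, Lemmas \ref{lemma:prior_sparsity} and \ref{lemma:prior_deviation} bound \emph{prior} probabilities $\Pi(\calG_n^c)$; turning those into posterior statements is exactly what the testing step in the standard argument accomplishes, so it cannot be bypassed. Moreover, $\bB$ is identifiable only up to right-multiplication by an element of $\mathbb{O}(r)$, so the posterior may place most of its mass near $\{\bB_0\bW:\bW\in\mathbb{O}(r)\}$ rather than near $\bB_0$ itself, and the raw quantity $\|\bB-\bB_0\|_{\mathrm{F}}$ need not be small. Your deterministic decomposition $\bB\bB\transpose-\bB_0\bB_0\transpose = \bB(\bB-\bB_0)\transpose + (\bB-\bB_0)\bB_0\transpose$ therefore rests on unestablished posterior control of $\bB$ (in Frobenius norm, in two-to-infinity norm, and in row $\ell_1$ mass simultaneously). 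The paper avoids all of this by running a genuine testing argument for $\|\bSigma-\bSigma_0\|_\infty$ directly: it builds an oracle test for the infinity-norm alternative (Lemma \ref{lemma:existence_test_infinity_norm}, supported by the matrix infinity-norm concentration inequality in Lemma \ref{lemma:matrix_concentration_infinity_norm}) and aggregates over effective supports exactly as in Theorem \ref{thm:posterior_contraction}, never needing posterior control of $\bB$ itself.

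Your second step is closer in spirit but not quite what is needed. The inequality
\[
\|\bU_\bB-\bU_0\bW_\bU\|_{2\to\infty}\lesssim\frac{\|(\bSigma-\bSigma_0)\bU_0\|_{2\to\infty}}{\lambda_{0r}}+\|\bU_0\|_{2\to\infty}\|\bU_\bB\bU_\bB\transpose-\bU_0\bU_0\transpose\|_2^2
\]
is stated as if it were an off-the-shelf deterministic identity, but it is not; it also drops the spectral mismatch between $\bLambda$ (the eigenvalues of $\bB\bB\transpose$) and $\bLambda_0$, which contributes a middle term. The paper instead derives an exact decomposition
\[
\bU_\bB-\bU_0\bW_\bU=\mathbf{E}\bU_\bB\bLambda^{-1}+\bU_0\bLambda_0\bigl(\bU_0\transpose\bU_\bB\bLambda^{-1}-\bLambda_0^{-1}\bU_0\transpose\bU_\bB\bigr)+\bU_0(\bU_0\transpose\bU_\bB-\bW_\bU),
\]
and controls the first summand via a Neumann series expansion (Bhatia's Theorem VII.2.2) whose leading term involves $\|\mathbf{E}\|_\infty\|\bU_0\|_{2\to\infty}$; this is where $\|\bSigma-\bSigma_0\|_\infty$ enters, with the higher-order Neumann terms giving the extra $(s\log p/n)$ contribution. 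Before the series can even be used, the paper needs Weyl's inequality on the event $\calU_n$ to separate the spectra of $\bLambda$ and $\mathbf{E}$; that step is absent from your sketch. The idea of bounding $\|(\bSigma-\bSigma_0)\bU_0\|_{2\to\infty}$ by $\|\bSigma-\bSigma_0\|_\infty\|\bU_0\|_{2\to\infty}$ is correct and does appear in the paper, but it arises from the Neumann expansion rather than from the ``Cape identity'' as you have stated it. In short: the second step would become workable if you replaced the asserted identity by the Sylvester/Neumann decomposition, verified the spectral gap condition on the intersection of the two posterior events, and tracked the middle Hadamard-type term; the first step needs a wholesale replacement by a testing argument for the infinity-norm loss.
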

\begin{remark}
We also present some remarks concerning the posterior contraction with respect to the two-to-infinity norm loss $\|\bU-\bU_0\bW_\bU\|_{2\to\infty}$. In \cite{cape2017two}, the authors show that 
\[
\|\bU-\bU_0\bW_\bU\|_{2\to\infty}\leq \|\bU-\bU_0\bW_\bU\|_2\lesssim\|\bU\bU\transpose-\bU_0\bU_0\transpose\|_2,
\]
meaning that $\|\bU-\bU_0\bW_\bU\|_{2\to\infty}$ can be coarsely upper bounded by the projection operator norm loss $\|\bU\bU\transpose-\bU_0\bU_0\transpose\|_2$. 
This naive bound immediately yields 
\[
\expect_0\left\{\Pi\left(\|\bU_\bB - \bU_0\bW_\bU\|_{2\to\infty} > M\sqrt{\frac{s\log p}{n}}\mathrel{\Big|}\bY_n\right)\right\}\leq R_0\exp(-C_0s\log p)
\]
for some large $M$, which is the same as \eqref{eqn:posterior_contraction_Projection}.
Our result \eqref{eqn:posterior_contraction_two_to_infinity} improves this rate by a factor of $\{\sqrt{r^3/s}\vee \sqrt{(s\log p)/n}\}$ and, thus yielding a tighter posterior contraction rate with respect to the two-to-infinity norm loss. In particular, when $r\ll s$ (\emph{i.e.}, $\bU_0$ is a ``tall and thin'' rectangular matrix), the factor $\sqrt{r^3/s}$ can be much smaller than $1$. 
\end{remark}
The posterior contraction rate \eqref{eqn:posterior_contraction_Projection} also leads to the following risk bound for a point estimator of the principal subspace $\mathrm{Span}\{\bU_{*1},\ldots,\bU_{*r}\}$:
\begin{theorem}\label{thm:risk_bound_point_estimate}
Assume the conditions in Theorem \ref{thm:posterior_contraction} hold. Let
\[
\widehat{\bOmega} = \int \bU_\bB\bU_\bB\transpose\Pi(\mathrm{d}\bB\mid\bY_n)
\]
be the posterior mean of the projection matrix $\bU_\bB\bU_\bB\transpose$, and set $\widehat\bU\in\mathbb{O}(p, r)$ be the orthonormal $r$-frame in $\mathbb{R}^p$ with columns being the first $r$ eigenvectors corresponding to the first $r$ largest eigenvalues of $\widehat\bOmega$. Then the following risk bound holds for $\widehat\bU$ for sufficiently large $n$:
\begin{align}
\expect_0 \left(\|\widehat\bU\widehat\bU\transpose - \bU_0\bU_0\transpose\|_2\right) &\leq \left(\frac{4M_0}{\lambda_{0r}} + 4\sqrt{R_0}\right)\sqrt{\frac{s\log p}{n}}
.\nonumber
\end{align}
\end{theorem}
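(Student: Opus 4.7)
The plan is to reduce the claim to the posterior contraction bound in equation \eqref{eqn:posterior_contraction_Projection} via two standard ingredients: (a) the Davis--Kahan $\sin\Theta$ theorem, which lets us compare the top-$r$ eigenspace projection of $\widehat\bOmega$ with $\bU_0\bU_0\transpose$ using only the eigengap of the target $\bU_0\bU_0\transpose$; and (b) Jensen's inequality for the operator norm, which pushes $\|\cdot\|_2$ inside the posterior integral defining $\widehat\bOmega$.

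For step (a), note that $\widehat\bOmega$ is symmetric positive semidefinite with spectrum contained in $[0,1]$ (being a posterior average of rank-$r$ orthogonal projections $\bU_\bB\bU_\bB\transpose$), and $\widehat\bU$ collects its top-$r$ eigenvectors. Since $\bU_0\bU_0\transpose$ has eigenvalues $(1,\ldots,1,0,\ldots,0)$ with eigengap exactly $1$ between the $r$-th and $(r+1)$-th, the standard operator-norm $\sin\Theta$ bound, computed against this fixed target's gap, gives
\[
\|\widehat\bU\widehat\bU\transpose - \bU_0\bU_0\transpose\|_2 \;\leq\; 2\,\|\widehat\bOmega - \bU_0\bU_0\transpose\|_2.
\]
Crucially, this avoids any perturbed-eigenvalue condition on $\widehat\bOmega$ itself.

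For step (b), by convexity of the operator norm and Jensen's inequality applied to the posterior $\Pi(\cdot\mid\bY_n)$,
\[
\|\widehat\bOmega - \bU_0\bU_0\transpose\|_2 \;\leq\; \int \|\bU_\bB\bU_\bB\transpose - \bU_0\bU_0\transpose\|_2\,\Pi(\mathrm{d}\bB\mid\bY_n).
\]
Setting $\eps_n = (2M_0/\lambda_{0r})\sqrt{(s\log p)/n}$, I would split the integrand on the event $\{\|\bU_\bB\bU_\bB\transpose - \bU_0\bU_0\transpose\|_2 \leq \eps_n\}$ and its complement. Using the universal deterministic bound $\|\bU_\bB\bU_\bB\transpose - \bU_0\bU_0\transpose\|_2 \leq 2$ on the tail piece, the right-hand side is at most $\eps_n + 2\,\Pi(\|\bU_\bB\bU_\bB\transpose - \bU_0\bU_0\transpose\|_2 > \eps_n \mid \bY_n)$. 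Taking $\expect_0$ and invoking \eqref{eqn:posterior_contraction_Projection} at $M=M_0$ gives $\expect_0\|\widehat\bOmega - \bU_0\bU_0\transpose\|_2 \leq \eps_n + 2R_0\exp(-C_0 s\log p)$.

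Combining the two steps yields $\expect_0\|\widehat\bU\widehat\bU\transpose - \bU_0\bU_0\transpose\|_2 \leq 2\eps_n + 4R_0\exp(-C_0 s\log p)$. Under the theorem's scaling ($p/n \to \infty$ forces $p \to \infty$ and hence $s\log p \to \infty$), the exponential term decays much faster than $1/\sqrt{n}$, so for sufficiently large $n$ it is absorbed by $4\sqrt{R_0}\sqrt{(s\log p)/n}$, producing exactly the claimed constant $4M_0/\lambda_{0r} + 4\sqrt{R_0}$. The only conceptually delicate step is the Davis--Kahan invocation; once the target's unit eigengap is exploited so that no control over the spectrum of $\widehat\bOmega$ is needed, the remainder is essentially bookkeeping combining Jensen's inequality with a Markov-style truncation against the exponential posterior tail from Theorem \ref{thm:posterior_contraction}.
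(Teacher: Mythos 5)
Your argument is correct and relies on the same three ingredients as the paper's proof: Jensen's inequality for the operator norm, a split of the posterior integral on the contraction event from \eqref{eqn:posterior_contraction_Projection}, and the Davis--Kahan $\sin\Theta$ theorem (Theorem \ref{thm:Davis_Kahan_sin_theta}) applied with the unit eigengap of $\bU_0\bU_0\transpose$. The only structural difference is harmless: the paper bounds the second moment $\expect_0\|\widehat\bOmega - \bU_0\bU_0\transpose\|_2^2$ and passes to the first moment via a Cauchy--Schwarz step before applying $\sin\Theta$ at the end, whereas you invoke $\sin\Theta$ first and bound the first moment directly, arriving at the same constant $4M_0/\lambda_{0r}+4\sqrt{R_0}$.
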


The setup so far is concerned with the case where $r$ is known and fixed. When $r$ is unknown, \cite{cai2013sparse} provides a diagonal thresholding method for consistently estimating $r$. 
 In such a setting, the posterior contraction in Theorem \ref{thm:posterior_contraction} reduces to the following weaker version:
\begin{corollary}\label{corr:weak_contraction}
Assume the data $\by_1,\ldots,\by_n$ are independently sampled from $\mathrm{N}_p(\zero_p, \bSigma_0)$ with $\bSigma_0 = \bU_0\bLambda_0\bU_0\transpose + \sigma_0^2\eye_p$, 
$\bLambda_0 = \mathrm{diag}(\lambda_{01},\ldots,\lambda_{0r})$, 
$|\mathrm{supp}(\bU_0)|\leq s$, and $1\leq r\leq s\leq p$. 
Suppose $(s\log p)/n \to 0$, $p/n\to\infty$, and $r\log n\lesssim \log p$, but $r$ is unknown and instead is consistently estimated by $\hat r$ (\emph{i.e.}, $\prob_0(\hat r = r) \to 1$). Let $\bB\sim\mathrm{MSSL}_{p\times \hat r}(\lambda, 1/p^2, p^{1+\kappa})$ for some positive $\lambda>0$ and $\kappa \leq 1$, and $\sigma^2\sim\mathrm{IGamma}(a_\sigma,b_\sigma)$ for some $a_\sigma,b_\sigma\geq1$. Then there exists some large constant $M_0>0$, such that the following posterior contraction for $\bSigma$ holds for all $M\geq M_0$:
\begin{align}
\lim_{n\to\infty}
\expect_0\left\{\Pi\left(\|\bSigma-\bSigma_0\|_2 > M\sqrt{\frac{s\log p}{n}}\mathrel{\bigg|}\bY_n\right)\right\}
&\to 0.\nonumber
\end{align}
For each $\bB$, let $\bU_\bB\in\mathbb{O}(p, \hat r)$ be the left-singular vector matrix of $\bB$. Then the following posterior contraction for $\bU$ holds for all $M\geq M_0$:
\begin{align}
\lim_{n\to\infty}
\expect_0\left\{\Pi\left(\|\bU_\bB\bU_\bB\transpose-\bU_0\bU_0\transpose\|_{2} > \frac{2M}{\lambda_{0r}}\sqrt{\frac{s\log p}{n}}\mathrel{\bigg|}\bY_n\right)\right\}
&\to 0.\nonumber
\end{align}
\end{corollary}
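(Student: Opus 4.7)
The strategy is to reduce the corollary to Theorem~\ref{thm:posterior_contraction} by decomposing over whether $\hat r$ correctly recovers $r$. For each positive integer $k$, let $\Pi_k(\cdot\mid\bY_n)$ denote the posterior produced by $\bB\sim\mathrm{MSSL}_{p\times k}(\lambda,1/p^2,p^{1+\kappa})$ combined with $\sigma^2\sim\mathrm{IGamma}(a_\sigma,b_\sigma)$. Because the prior is entirely determined once $\hat r$ is fixed, and $\hat r$ is itself a statistic, the posterior $\Pi(\cdot\mid\bY_n)$ appearing in the corollary equals $\Pi_{\hat r}(\cdot\mid\bY_n)$ path-wise; in particular, on the event $\{\hat r = r\}$ it agrees with $\Pi_r(\cdot\mid\bY_n)$, which is exactly the posterior analyzed in Theorem~\ref{thm:posterior_contraction}.

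Given this identification, let $A_n$ denote either of the tail events $\{\|\bSigma-\bSigma_0\|_2 > M\sqrt{(s\log p)/n}\}$ or $\{\|\bU_\bB\bU_\bB\transpose-\bU_0\bU_0\transpose\|_2 > (2M/\lambda_{0r})\sqrt{(s\log p)/n}\}$. I would split
\[
\expect_0\{\Pi(A_n\mid\bY_n)\}
= \expect_0\{\Pi(A_n\mid\bY_n)\,\mathbbm{1}(\hat r = r)\}
+ \expect_0\{\Pi(A_n\mid\bY_n)\,\mathbbm{1}(\hat r \neq r)\}.
\]
The second summand is bounded above by $\prob_0(\hat r \neq r)$ via the trivial estimate $\Pi(A_n\mid\bY_n)\leq 1$, and vanishes by the assumed consistency of $\hat r$. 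For the first summand, the identity $\Pi(A_n\mid\bY_n)=\Pi_r(A_n\mid\bY_n)$ on $\{\hat r = r\}$ together with a direct application of Theorem~\ref{thm:posterior_contraction} (valid for all $M\geq M_0$) yields
\[
\expect_0\{\Pi(A_n\mid\bY_n)\,\mathbbm{1}(\hat r = r)\}\leq \expect_0\{\Pi_r(A_n\mid\bY_n)\}\leq R_0\exp(-C_0\,s\log p).
\]
Since $p/n\to\infty$ together with $s\geq 1$ forces $s\log p\to\infty$, this upper bound also tends to zero, and summing the two vanishing contributions establishes the corollary for both choices of $A_n$.

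The only point requiring care is that the prior itself depends on the data through the random dimension $\hat r$; restricting to $\{\hat r = r\}$ cleanly removes this dependence and reduces the problem to the fixed-rank regime already handled in Theorem~\ref{thm:posterior_contraction}. No new concentration or prior-mass inequality is needed, and the argument is essentially a measure-theoretic decomposition. Note that the strong exponential contraction rate available in the fixed-rank result is necessarily weakened here to merely qualitative convergence, because the consistency assumption on $\hat r$ supplies no quantitative rate for $\prob_0(\hat r \neq r)$; any explicit rate for this probability could be propagated through the above decomposition to yield a quantitative version of the corollary.
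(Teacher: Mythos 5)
Your argument is correct and takes the standard reduction: condition on the consistency event $\{\hat r = r\}$, apply Theorem~\ref{thm:posterior_contraction} verbatim there (noting the posterior coincides path-wise with the fixed-rank posterior), and bound the complementary event trivially by $\prob_0(\hat r \neq r)\to 0$. The paper does not spell out a proof of this corollary, which suggests it regards exactly this decomposition as routine; your note that the corollary's conclusion is only qualitative precisely because no rate is assumed for $\prob_0(\hat r\neq r)$ is a correct and useful observation.
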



\subsection{Proof Sketch and Auxiliary Results} 
\label{sub:proof_sketch_and_auxiliary_results}
Now we sketch the proof of Theorem \ref{thm:posterior_contraction} along with some important auxiliary results. The proof strategy is based on a modification of the standard testing-and-prior-concentration approach, which was originally developed in \cite{ghosal2000convergence} for proving convergence rates of posterior distributions, and later adopted to a variety of statistical contexts. Specialized to the sparse spiked covariance matrix models, 
let us consider the posterior contraction for $\bSigma$ with respect to the operator norm loss as an example. The posterior contraction for $\bSigma$ with respect to the infinity norm loss can be proved in a similar fashion. 
Denote $\calU_n = \{\bSigma:\|\bSigma-\bSigma_0\|_2\leq M\eps_n\}$, and write the posterior distribution as
\begin{align}
\label{eqn:posterior_distribution}
\Pi(\calU_n^c\mid \bY_n) = \frac{\int_{\calU_n^c}\exp\{\ell_n(\bSigma) - \ell_n(\bSigma_0)\}\Pi(\mathrm{d}\bSigma)}{\int\exp\{\ell_n(\bSigma) - \ell_n(\bSigma_0)\}\Pi(\mathrm{d}\bSigma)} = \frac{N_n(\calU_n)}{D_n},
\end{align}
where $\ell_n(\bSigma)$ is the log-likelihood function of $\bSigma$ given by
\[
\ell_n(\bSigma) = \sum_{i=1}^n\log p(\by_i\mid \bSigma) = \sum_{i=1}^n\left\{-\frac{1}{2}\log\det(2\pi\bSigma) - \frac{1}{2}\by_i\transpose\bSigma^{-1}\by_i\right\}.
\]
To provide a useful upper bound for $\expect_0\{\Pi(\calU_n^c\mid\bY_n)\}$ (\emph{e.g.}, $\exp(-C_0s\log p)$ appearing in Theorem \ref{thm:posterior_contraction}), we modify the original testing-and-prior-concentration approach and require that the following three conditions hold:
\begin{enumerate}
  \item \textbf{Prior concentration condition. }The prior distribution provides sufficient concentration around the true $\bSigma_0$: There exists some constant $C_3>0$ such that
  \[
  \Pi(\|\bSigma-\bSigma_0\|_{\mathrm{F}}^2\leq sr/n)\geq\exp(-C_3s\log p)
  \]
  for sufficient large $n$.
  \item \textbf{Existence of Tests. }There exists a sequence of subsets $(\calF_n)_{n=1}^\infty$ of $\Theta(p, r, s)$, such that $\Pi(\bSigma\in\calF_n^c)\leq \exp(-C_4s\log p)$ for some sufficiently large constant $C_4>0$, and there exists a sequence of test functions $(\phi_n)_{n=1}^\infty$, such that
  \begin{align}
  \expect_0(\phi_n)&\lesssim \exp\left(-C_{41}\sqrt{M}n\eps_n^2\right),
  \nonumber\\
  \sup_{\bSigma\in\calU_n^c\cap\calF_n}\expect_\bSigma(1-\phi_n)&\lesssim \exp(-C_{42}Mn\eps_n^2)\nonumber
  \end{align}
  for some constants $C_{41}, C_{42}>0$.
\end{enumerate}


The prior concentration condition can be verified by invoking Lemma \ref{lemma:prior_concentration}. This condition is useful, as it guarantees that the denominator $D_n$ appearing in the right-hand side of \eqref{eqn:posterior_distribution} can be lower bounded with high probability. The following lemma formalizes this result. 
\begin{lemma}
\label{lemma:evidence_lower_bound}
Let $\calK_n(\eta)=\{\|\bSigma-\bSigma_0\|_{\mathrm{F}}\leq\eta\}$ and $\eta<\sigma_0^2/2$. Then there exists some event $\calA_n$ such that
\[
\calA_n\subset\left\{D_n\geq \Pi_n\{\bSigma\in \calK_n(\eta)\}\exp\left[-\left\{\frac{C_3\log\rho}{2(\lambda_{0r}+\sigma_0^2)}+1\right\}n\eta^2\right]\right\}
\]
for some absolute constant $C_3>0$, and 
\[
\prob_0(\calA_n^c)
\leq2\exp\left\{-\tilde C_3\min\left(\frac{n\eta^2}{\|\bSigma_0^{-1}\|_2^2}, n\eta^2\right)\right\},
\]
where $\rho = 2(\lambda_{01}+\sigma_0^2)/(\lambda_{0r}+\sigma_0^2)$ depends on the spectra of $\bSigma$ only, and $\tilde C_3>0$ is an absolute constant.
\end{lemma}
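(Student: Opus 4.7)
The plan is to combine a Jensen-type integration restriction with a Gaussian quadratic-form concentration inequality. First I would restrict the integral defining $D_n$ to $\calK_n(\eta)$ and apply Jensen's inequality to the exponential with respect to the conditional prior $\Pi_{\calK_n(\eta)}(\mathrm{d}\bSigma):=\Pi(\mathrm{d}\bSigma)/\Pi\{\bSigma\in\calK_n(\eta)\}$, yielding
\[
D_n \geq \Pi\{\bSigma\in\calK_n(\eta)\}\exp\!\left[\int_{\calK_n(\eta)}\!\{\ell_n(\bSigma)-\ell_n(\bSigma_0)\}\,\Pi_{\calK_n(\eta)}(\mathrm{d}\bSigma)\right].
\]
A direct computation with the Gaussian log-likelihoods produces the decomposition
\[
\ell_n(\bSigma)-\ell_n(\bSigma_0) = -nK(\bSigma_0,\bSigma) - \tfrac{n}{2}\,\mathrm{tr}\!\left[(\bSigma^{-1}-\bSigma_0^{-1})(\widehat\bSigma_n-\bSigma_0)\right],
\]
where $\widehat\bSigma_n:=n^{-1}\sum_{i=1}^n\by_i\by_i\transpose$ and $K(\bSigma_0,\bSigma):=\tfrac{1}{2}[\log\det(\bSigma\bSigma_0^{-1})+\mathrm{tr}(\bSigma^{-1}\bSigma_0)-p]$ is the Kullback--Leibler divergence. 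It therefore suffices to bound (i) the KL term uniformly on $\calK_n(\eta)$ and (ii) the trace term on a high-probability event.

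For (i), the hypothesis $\eta<\sigma_0^2/2$ combined with Weyl's inequality forces $\lambda_{\min}(\bSigma)\geq\sigma_0^2/2$ for every $\bSigma\in\calK_n(\eta)$, so the eigenvalues $\mu_i$ of $\bM:=\bSigma_0^{-1/2}\bSigma\bSigma_0^{-1/2}$ stay in a window whose endpoints are controlled by the spiked spectrum of $\bSigma_0$ through the ratio $\rho=2(\lambda_{01}+\sigma_0^2)/(\lambda_{0r}+\sigma_0^2)$. Writing $2K(\bSigma_0,\bSigma)=\sum_i[\log\mu_i+\mu_i^{-1}-1]$ and Taylor-expanding about $\mu_i=1$ on this window, then converting back to $\|\bSigma-\bSigma_0\|_{\mathrm{F}}$ by exploiting the low-rank-plus-bulk structure of $\bSigma_0$, should deliver the uniform bound $\sup_{\bSigma\in\calK_n(\eta)}K(\bSigma_0,\bSigma)\leq \{C_3\log\rho/[2(\lambda_{0r}+\sigma_0^2)]\}\eta^2$ for an absolute constant $C_3>0$. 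The logarithmic factor $\log\rho$ is what an eigenvalue-window argument yields through a telescoping-type bound on $\log\mu_i$; a direct Frobenius-norm second-order bound would instead produce the cruder $1/\sigma_0^4$ dependence, which is too lossy.

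For (ii), Fubini reduces the integrand to a deterministic-coefficient quadratic form in the data: with $\bar\bA:=\int_{\calK_n(\eta)}(\bSigma^{-1}-\bSigma_0^{-1})\,\Pi_{\calK_n(\eta)}(\mathrm{d}\bSigma)$,
\[
\int_{\calK_n(\eta)}\mathrm{tr}\!\left[(\bSigma^{-1}-\bSigma_0^{-1})(\widehat\bSigma_n-\bSigma_0)\right]\Pi_{\calK_n(\eta)}(\mathrm{d}\bSigma)=\mathrm{tr}\!\left[\bar\bA(\widehat\bSigma_n-\bSigma_0)\right].
\]
Whitening $\bx_i:=\bSigma_0^{-1/2}\by_i\sim\mathrm{N}_p(\zero,\eye_p)$ rewrites this as the centered quadratic form $n^{-1}\sum_i\bx_i\transpose\bB\bx_i-\mathrm{tr}(\bB)$ with $\bB:=\bSigma_0^{1/2}\bar\bA\bSigma_0^{1/2}$, and the Hanson--Wright (equivalently, Bernstein-for-subexponentials) inequality yields
\[
\prob_0\!\left\{\bigl|\mathrm{tr}[\bar\bA(\widehat\bSigma_n-\bSigma_0)]\bigr|>t\right\}\leq 2\exp\!\left[-c\min\!\left(\tfrac{nt^2}{\|\bB\|_{\mathrm{F}}^2},\tfrac{nt}{\|\bB\|_2}\right)\right].
\]
The resolvent identity $\bSigma^{-1}-\bSigma_0^{-1}=-\bSigma^{-1}(\bSigma-\bSigma_0)\bSigma_0^{-1}$ together with $\|\bSigma^{-1}\|_2\leq 2\|\bSigma_0^{-1}\|_2$ on $\calK_n(\eta)$ give $\|\bar\bA\|_{\mathrm{F}}\lesssim\|\bSigma_0^{-1}\|_2^2\eta$, whence $\|\bB\|_2,\|\bB\|_{\mathrm{F}}\lesssim\|\bSigma_0\|_2\|\bSigma_0^{-1}\|_2^2\eta$. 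Taking $t=2\eta^2$ defines the event $\calA_n:=\{|\mathrm{tr}[\bar\bA(\widehat\bSigma_n-\bSigma_0)]|\leq 2\eta^2\}$, whose complement has probability at most $2\exp\{-\tilde C_3\min(n\eta^2/\|\bSigma_0^{-1}\|_2^2,n\eta^2)\}$. On $\calA_n$, inserting (i) and (ii) into the Jensen bound gives exactly the claimed inequality for $D_n$. I expect step (i) to be the main obstacle: the sharp $\log\rho/(\lambda_{0r}+\sigma_0^2)$ dependence is non-generic and requires carefully exploiting the spiked-plus-bulk structure of $\bSigma_0$ in the log-determinant expansion rather than stopping at a routine second-order Taylor bound.
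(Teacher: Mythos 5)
Your proposal mirrors the paper's proof step for step: Jensen's inequality with respect to the restricted prior $\Pi_{\calK_n(\eta)}$, a uniform Kullback--Leibler bound over $\calK_n(\eta)$, and sub-exponential (Hanson--Wright) concentration of the whitened quadratic form. Your $\bB=\bSigma_0^{1/2}\bar\bA\bSigma_0^{1/2}$ is exactly the paper's $-2\bOmega$, and your centered trace term is exactly the paper's $\sum_{i}\{w_{ni}-\expect_0(w_{ni})\}$. Two notes. First, the paper does not re-derive the KL bound: it imports it as Lemma 1.3 from the supplement of Pati et al.\ (2014), which gives $\log\det(\bSigma_0\bSigma^{-1})-\mathrm{tr}(\bSigma_0\bSigma^{-1}-\eye)\geq -C_3\,\eta^2\log\rho/\lambda_r(\bSigma_0)$ whenever $\eta<2\lambda_r(\bSigma_0)$, which is equivalent to the uniform bound on $K(\bSigma_0,\bSigma)$ you want; your eigenvalue-window sketch is a plausible route to that statement but you are correct that it is the non-trivial ingredient. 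Second, your chain $\|\bB\|_{\mathrm{F}}\leq\|\bSigma_0\|_2\|\bar\bA\|_{\mathrm{F}}\lesssim\|\bSigma_0\|_2\|\bSigma_0^{-1}\|_2^2\,\eta$ carries a spurious condition-number factor $\|\bSigma_0\|_2\|\bSigma_0^{-1}\|_2$, which would make $\tilde C_3$ depend on $\bSigma_0$ rather than be absolute as claimed. The paper instead bounds $\|\bSigma_0^{1/2}\bar\bA\bSigma_0^{1/2}\|_{\mathrm{F}}\leq\|\bar\bA\bSigma_0\|_{\mathrm{F}}$ and writes $\bar\bA\bSigma_0=-\int\bSigma^{-1}(\bSigma-\bSigma_0)\,\Pi_{\calK_n(\eta)}(\mathrm{d}\bSigma)$, so that $\|\bSigma^{-1}\|_2\leq 2\|\bSigma_0^{-1}\|_2$ on $\calK_n(\eta)$ yields $\|\bOmega\|_{\mathrm{F}}\leq\|\bSigma_0^{-1}\|_2\,\eta$ and hence $\|\bOmega\|_2<1/2$ by the hypothesis $\eta<\sigma_0^2/2=1/(2\|\bSigma_0^{-1}\|_2)$, which is what keeps $\tilde C_3$ genuinely absolute. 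Redo your Frobenius bound this way and the argument is identical to the paper's.
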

Verifying the existence of tests is slightly more involved. It relies on Lemma \ref{lemma:prior_sparsity}, Lemma \ref{lemma:prior_deviation}, and the following auxiliary lemma. 
\begin{lemma}\label{lemma:existence_test}
Assume the data $\by_1,\ldots,\by_n$ follow $\mathrm{N}_p(\zero_p, \bSigma)$, $1\leq r\leq p$. Suppose $\bU_0\in\mathbb{O}(p, r)$ satisfies $|\mathrm{supp}(\bU_0)|\leq s$, and $r\leq s\leq p$. 
For any positive $\delta$, $t$, and $\tau$, define 
\begin{align*}
\calF(\delta,\tau,t)& =\Bigg\{\bB\in\mathbb{R}^{p\times r}:|\mathrm{supp}_\delta(\bB)|\leq \tau,
\sum_{j=1}^p\|\bB_{j*}\|_2^2\mathbbm{1}\{j\in\mathrm{supp}_\delta(\bB)\cup \mathrm{supp}(\bU_0)\}\leq t^2
\Bigg\}.
\end{align*}
Let the positive sequences $(\delta_n, \tau_n, t_n,\eps_n)_{n=1}^\infty$ satisfy $(\sqrt{p}\delta_n + 2t_n)\sqrt{p}\delta_n\leq M_1 \eps_n$ for some constant $M_1>0$, and $\eps_n\leq 1$. Consider testing 
\[
H_0:\bSigma = \bSigma_0 = \bU_0\bLambda_0\bU_0\transpose + \sigma_0^2\eye_p
\]
versus
\[
H_1:\bSigma \in\left\{\bSigma = \bB\bB\transpose + \sigma^2\eye_p:\|\bSigma-\bSigma_0\|_2>M\eps_n, \bB\in\calF(\delta_n, \tau_n, t_n)\right\}.
\]
Then for each $M\geq \max\{M_1/2, (128\|\bSigma_0\|_2^4)^{1/3}\}$, there exists a test function $\phi_n:\mathbb{R}^{n\times p}\to [0,1]$, such that
\begin{align}
\expect_0(\phi_n)&\leq 3\exp\left\{(2 + C_4)(\tau_n\log p + 2s_n) - \frac{C_4\sqrt{M}}{\sqrt{2}}n\eps_n^2\right\},\nonumber\\
\sup_{\bSigma\in H_1}\expect_\bSigma(1-\phi_n)&\leq \exp\left\{C_4(\tau_n + 2s_n) - \frac{C_4M}{8}n\eps_n^2\right\}\nonumber
\end{align}
for some absolute constant $C_4>0$.
\end{lemma}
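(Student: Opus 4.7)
The plan is to construct an aggregate test that exploits the effective row-support structure built into $\calF(\delta_n,\tau_n,t_n)$. With $\bS = n^{-1}\bY_n^{\transpose}\bY_n$, I would take
\[
\phi_n \;=\; \mathbbm{1}\Bigl\{\,\max_{S\subset[p],\,|S|\leq \tau_n+s}\|\bS_{SS}-(\bSigma_0)_{SS}\|_2 \,>\, \gamma_n\Bigr\},
\]
with threshold $\gamma_n\asymp M^{1/4}\eps_n$, chosen so that Gaussian Wishart deviation on each $|S|$-dimensional sub-block yields the $\exp(-C\sqrt{M}\,n\eps_n^2)$ rate once the subset-union entropy is paid. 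Three ingredients drive the argument: a union bound over subsets of size at most $\tau_n+s$; a Davidson--Szarek/Hanson--Wright-type operator-norm concentration for the sample sub-block (combined with an $\eps$-net on the unit sphere in $\mathbb{R}^{|S|}$); and an effective-support reduction on the alternative that localizes each $\bSigma\in H_1$ to a deterministic set $S^{\star}$ of size at most $\tau_n+s$.

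For the Type~I error, I would apply the sub-block Wishart bound $\prob_0(\|\bS_{SS}-(\bSigma_0)_{SS}\|_2 > \gamma_n)\leq 2\exp\{C'|S|-cn\gamma_n^2/\|\bSigma_0\|_2^2\}$ for $\gamma_n\lesssim\|\bSigma_0\|_2$, and then union-bound over the $\binom{p}{\tau_n+s}\leq\exp\{(\tau_n+s)\log p\}$ candidate supports. Substituting $\gamma_n\asymp M^{1/4}\eps_n$ absorbs the entropy into $-C\sqrt{M}\,n\eps_n^2$ once $M\geq\max\{M_1/2,(128\|\bSigma_0\|_2^4)^{1/3}\}$, producing the stated Type~I bound.

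For the Type~II error, I fix $\bSigma = \bB\bB^{\transpose}+\sigma^2\eye_p\in H_1$ and set $S^{\star}:=\mathrm{supp}_{\delta_n}(\bB)\cup\mathrm{supp}(\bU_0)$, so $|S^{\star}|\leq \tau_n+s$. Once $\|(\bSigma-\bSigma_0)_{S^{\star}S^{\star}}\|_2\gtrsim M\eps_n$ is verified, a single Wishart concentration on the $|S^{\star}|$-dimensional sub-block controls the remainder and contributes the $(\tau_n+s)$ entropy (no $\log p$ factor). To establish the lower bound, split $\bB=\bB^{(1)}+\bB^{(2)}$ into rows inside and outside $S^{\star}$; using $\|\bB^{(1)}\|_{\mathrm{F}}\leq t_n$ and $\|\bB^{(2)}\|_{\mathrm{F}}\leq\sqrt{p}\,\delta_n$ from the definition of $\calF(\delta_n,\tau_n,t_n)$, the off-$S^{\star}$ leakage of $\bD_{\mathrm{sp}}:=\bB\bB^{\transpose}-\bU_0\bLambda_0\bU_0^{\transpose}$ has operator norm at most $(\sqrt{p}\,\delta_n+2t_n)\sqrt{p}\,\delta_n\leq M_1\eps_n$. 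A two-case analysis then finishes: when $|\sigma^2-\sigma_0^2|\leq M\eps_n/4$, the triangle inequality yields $\|(\bSigma-\bSigma_0)_{S^{\star}S^{\star}}\|_2\geq\|\bSigma-\bSigma_0\|_2 - M_1\eps_n - |\sigma^2-\sigma_0^2|\geq 3M\eps_n/4 - M_1\eps_n$; when $|\sigma^2-\sigma_0^2|>M\eps_n/4$, the restricted spike difference $(\bD_{\mathrm{sp}})_{S^{\star}S^{\star}}$ has rank at most $2r<|S^{\star}|$, so $(\bSigma-\bSigma_0)_{S^{\star}S^{\star}} = (\bD_{\mathrm{sp}})_{S^{\star}S^{\star}}+(\sigma^2-\sigma_0^2)\eye_{|S^{\star}|}$ has an eigenvalue equal to $\sigma^2-\sigma_0^2$, giving $\|(\bSigma-\bSigma_0)_{S^{\star}S^{\star}}\|_2\geq M\eps_n/4$. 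In either case $\|(\bSigma-\bSigma_0)_{S^{\star}S^{\star}}\|_2\geq M\eps_n/8 > 2\gamma_n$ for $M$ large, and the Wishart concentration closes the argument.

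The principal obstacle is the Type~II algebraic step, namely guarding against cancellation between $\bD_{\mathrm{sp}}$ and the noise shift $\bD_{\mathrm{ns}}=(\sigma^2-\sigma_0^2)\eye_p$ when both are large and oppositely signed on the restricted block. The rank-deficiency of $(\bD_{\mathrm{sp}})_{S^{\star}S^{\star}}$ (rank at most $2r$) is the key structural fact: it produces a direction in $\mathbb{R}^{|S^{\star}|}$ on which the identity shift acts alone, provided $|S^{\star}|>2r$, a condition met in the operative regime since $s\geq r$ and $\tau_n$ dominates $r$ once Lemma~\ref{lemma:prior_sparsity} is invoked. A secondary nuisance is reconciling the asymmetric $\sqrt{M}$ (Type~I) and $M$ (Type~II) scalings, which is accommodated by the intermediate threshold $\gamma_n\asymp M^{1/4}\eps_n$: it is just large enough to close the subset-union bound in Type~I with a $\sqrt{M}$ margin, yet is dwarfed by the separation $\|(\bSigma-\bSigma_0)_{S^{\star}S^{\star}}\|_2\gtrsim M\eps_n$ that drives the Type~II bound.
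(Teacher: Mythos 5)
Your strategy has the right skeleton — restrict to a data-driven sub-block of size at most $\tau_n + s$, pay the subset-union entropy for Type~I, and use the intermediate threshold $\gamma_n\asymp M^{1/4}\eps_n$ to reconcile the $\sqrt{M}$ and $M$ scalings. Your Case~1 in the Type~II analysis is also salvageable once the triangle inequality is stated correctly as the block-diagonal bound $\|\bSigma-\bSigma_0\|_2\leq\max\bigl(\|(\bSigma-\bSigma_0)_{S^\star S^\star}\|_2,\ \|(\bSigma-\bSigma_0)_{(S^\star)^c(S^\star)^c}\|_2\bigr) + \|(\bSigma-\bSigma_0)_{S^\star(S^\star)^c}\|_2$ (note that $|\sigma^2-\sigma_0^2|$ belongs to both diagonal blocks and should not be subtracted as you wrote). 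But there are two genuine gaps.

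The principal gap is in your Case~2. The rank bound $\mathrm{rank}\,(\bD_{\mathrm{sp}})_{S^\star S^\star}\leq 2r$ only creates a surviving direction for $(\sigma^2-\sigma_0^2)\eye$ if $|S^\star|>2r$, and your justification for this conflates upper and lower bounds: $\tau_n$ only upper-bounds $|\mathrm{supp}_{\delta_n}(\bB)|$, while the only lower bound available on $|S^\star|$ is $|\mathrm{supp}(\bU_0)|\geq r$. When $\mathrm{supp}_{\delta_n}(\bB)\subset\mathrm{supp}(\bU_0)$ and $|\mathrm{supp}(\bU_0)|\leq 2r$, the spike difference can be full-rank on $S^\star$ and cancel $(\sigma^2-\sigma_0^2)\eye_{|S^\star|}$ exactly, while the entire $M\eps_n$-separation sits in the block $\bA_\delta\bA_\delta^\transpose+(\sigma^2-\sigma_0^2)\eye$ outside $S^\star$, which your test never examines — so the inclusion $H_1\subset\bigcup_S\{\|(\bSigma-\bSigma_0)_{SS}\|_2\gtrsim M\eps_n\}$ fails. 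The paper avoids this entirely by augmenting the restricted matrix with a single extra scalar diagonal entry carrying the pure noise variance: it tests $\bSigma_{S(\delta)}=\mathrm{diag}\bigl(\bB_\delta\bB_\delta^\transpose+\sigma^2\eye,\ \sigma^2\bigr)$ against $\bSigma_{S(\delta)}^{(0)}=\mathrm{diag}\bigl(\bU_{0\delta}\bLambda_0\bU_{0\delta}^\transpose+\sigma_0^2\eye,\ \sigma_0^2\bigr)$, which forces $\|\bSigma_{S(\delta)}-\bSigma_{S(\delta)}^{(0)}\|_2\geq|\sigma^2-\sigma_0^2|$ unconditionally and requires no rank or cardinality assumption. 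That augmentation is the step you are missing.

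The second issue is that your single fixed threshold expects Wishart concentration to close the Type~II bound, but under the alternative the concentration rate scales as $\exp(-cn\gamma_n^2/\|\bSigma\|_2^2)$ with $\|\bSigma\|_2$, not $\|\bSigma_0\|_2$, and $H_1$ places no bound on $\|\bSigma\|_2$. The paper does not re-derive this from scratch: it cites the Gao--Zhou oracle test (Lemma~\ref{lemma:matrix_test_bound}), whose internal peeling over shells of $\|\bSigma\|_2$ is precisely what generates the $\max\{1,M/(\sqrt{M}+2)^2\|\bSigma^{(1)}\|_2^2\}$ factor and the asymmetric $\sqrt{M}$ vs.\ $M$ structure in the two error bounds. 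A from-scratch replacement must reproduce that device (the paper's own Lemma~\ref{lemma:oracle_test_infinity_norm} shows how in the infinity-norm setting); your account asserts that ``the Wishart concentration closes the argument,'' but it does not without this step.
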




\section{Numerical examples} 
\label{sec:numerical_examples}

\subsection{Synthetic examples} 
\label{sub:synthetic_examples}
We evaluate the numerical performance of the proposed Bayesian method for estimating sparse spiked covariance matrices via simulation studies. We set the sample size $n = 100$ and the number of features $p = 200$. The support size $s$ of the eigenvector matrix $\bU_0$ ranges over $\{8, 12, 20, 40\}$, and the number of spikes $r$ takes values in $\{1, 4\}$. 
The indices of the non-zero rows of $\bU_0$ are uniformly sampled from $\{1,\ldots,p\}$, and we set the diagonal elements of $\bLambda_0$ to be equally spaced over the interval $[10, 20]$, with $\lambda_{01} = 20$ and $\lambda_{0r} = 10$. 
The non-zero rows of $\bU_0$, themselves forming an orthonormal $r$-frame in $\mathbb{R}^s$, denoted by $\bU_0^\star$, are generated as the left singular vector matrix of $\bL$, an 
$s\times r$ matrix consisting of independent $\mathrm{Unif}(1, 2)$ elements. 

Posterior inference is carried out using a standard Metropolis-within-Gibbs sampler, and $1000$ post burn-in samples are collected after $1000$ iterations of burn-in phase. We then take the posterior mean $\widehat\bSigma$ of $\bSigma$ as the point estimator for $\bSigma$, and the   $\widehat\bU$ given by Theorem \ref{thm:risk_bound_point_estimate} as the point estimator for the subspace $\mathrm{Span}\{\bU_{*1},\ldots,\bU_{*r}\}$. 
For comparison, several competitors are considered, 
 including 
{\color{black}the sparse Bayesian factor model with multiplicative Gamma process shrinkage prior (MGPS, \cite{doi:10.1093/biomet/asr013})},
the principal orthogonal complement thresholding method (POET, \cite{fan2013large}), and the sparse principal component analysis method (SPCA, \cite{doi:10.1198/106186006X113430}). 
In each simulation setup (\emph{i.e.}, each $(r, s)$ pair), $50$ replicates of synthetic datasets are generated, and for each synthetic dataset, we compute the point estimators $\widehat\bSigma$, $\widehat\bU$ as well as those offered by the three competing approaches,   the operator norm loss $\|\widehat\bSigma-\bSigma_0\|_2$ for $\bSigma$, the two-to-infinity norm loss and the projection operator norm loss for $\mathrm{Span}\{\bU_{*1},\ldots,\bU_{*r}\}$ ($\|\widehat\bU-\bU_0\bW_\bU\|_{2\to\infty}$ and $\|\widehat\bU\widehat\bU\transpose-\bU_0\bU_0\transpose\|_2$), and compute the medians of these losses.
The results are tabulated in Table \ref{table:Synthetic_Error}.  

\begin{table}[t!]
    \caption{The operator norm loss $\|\widehat\bSigma-\bSigma_0\|_2$ with the posterior mean $\widehat\bSigma$, the squared projection operator norm loss $\|\widehat\bU\widehat\bU\transpose-\bU_0\bU_0\transpose\|_2^2$, and the squared two-to-infinity norm loss $\|\widehat\bU-\bU_0\bW_\bU\|_{2\to\infty}^2$, where $\widehat\bU$ is the point estimator of $\bU$ given by Theorem \ref{thm:risk_bound_point_estimate}. The medians across $50$ replicates of synthetic datasets are tabulated. MSSL stands for the sparse Bayesian spiked covariance matrix model with the matrix spike-and-slab LASSO prior. }
    \begin{subtable}{\linewidth}
        \caption{The operator norm loss $\|\widehat\bSigma-\bSigma_0\|_2$}
        \centering
        \begin{tabular}{c | c  c | c  c | c  c | c  c }
      \hline\hline
      $s$& \multicolumn{2}{c}{$8$} & \multicolumn{2}{c}{$12$} & \multicolumn{2}{c}{$20$} & \multicolumn{2}{c}{$40$}\\
      \hline
      $r$& $1$ & $ 4$ & $1$ & $ 4$ & $1$ & $ 4$ & $1$ & $ 4$\\
      \hline
      MSSL &{\bf 1.85} & \bf{6.68}  & \bf{1.97}  & \bf{6.76} & \bf{2.61}  & \bf{8.11}  & \bf{5.12}  & \bf{10.35} \\
      MGPS & $9.86$ & $16.54$ & $9.88$ & $17.78$ & $9.88$ & $18.52$ & $9.88$ & $19.05$ \\
      POET & $7.54$  & $11.17$  & $7.47$  & $11.10$  & $7.61$  & $11.60$  & $7.60$  & $10.97$ \\
      SPCA &$8.08$ & $18.03$  & $8.09$  &$18.04$ & $8.11$  & $18.07$  & $8.17$  & $18.10$  \\
      \hline\hline
    \end{tabular}%
    \end{subtable}\par
    \vspace*{1ex}
    \begin{subtable}{\linewidth}
        \caption{The squared projection operator norm loss $\|\widehat\bU\widehat\bU\transpose-\bU_0\bU_0\transpose\|_2^2$}
        \centering
        \begin{tabular}{c | c  c | c  c | c  c | c  c }
      \hline\hline
      $s$& \multicolumn{2}{c}{$8$} & \multicolumn{2}{c}{$12$} & \multicolumn{2}{c}{$20$} & \multicolumn{2}{c}{$40$} \\
      \hline
      $r$& $1$ & $ 4$ & $1$ & $ 4$ & $1$ & $ 4$ & $1$ & $ 4$\\
      \hline
      MSSL&\bf{0.0099} & \bf{0.033}  & \bf{0.018}  & \bf{0.036} & \bf{0.026} & \bf{0.046} & \bf{0.10} & \bf{0.061} \\
      MGPS& $0.18$ & $0.27$ & $0.19$ & $0.47$ & $0.20$ & $0.35$ & $0.20$ & $0.27$\\
      POET& $0.18$  & $0.21$ & $0.18$  & $0.20$  & $0.19$ & $0.20$ & $0.18$ & $0.20$\\
      SPCA &$0.05$ & $0.092$ & $0.068$ & $0.11$ & $0.10$ & $0.15$ & $0.18$ & $0.22$\\
      \hline\hline
    \end{tabular}%
    \end{subtable}\par
    \vspace*{1ex}
    \begin{subtable}{\linewidth}
        \caption{The squared two-to-infinity norm loss $\|\widehat\bU-\bU_0\bW_\bU\|_{2\to\infty}^2$}
        \centering
        \begin{tabular}{c | c  c | c  c | c  c | c  c }
    \hline\hline
    $s$& \multicolumn{2}{c}{$8$} & \multicolumn{2}{c}{$12$} & \multicolumn{2}{c}{$20$} & \multicolumn{2}{c}{$40$} \\
    \hline
    $r$& $1$ & $ 4$ & $1$ & $ 4$ & $1$ & $ 4$ & $1$ & $ 4$\\
    \hline
    MSSL&\bf{0.0038} & \bf{0.011} & \bf{0.0058} & \bf{0.012} & $0.014$ & \bf{0.012} & $0.016$ & \bf{0.011}\\
    MGPS & $0.0093$ & $0.085$ & $0.0096$ & $0.14$ & $0.0092$ & $0.14$ & $0.01$ & $0.077$\\
    POET& $0.0082$ & $0.013$ & $0.0082$ & $0.013$ & \bf{0.0086} & $0.012$ & \bf{0.0088} & $0.013$ \\
    SPCA &$0.024$& $0.027$ & $0.022$ &$0.040$ & $0.022$ & $0.039$ & $0.025$ & $0.038$\\
    \hline\hline
  \end{tabular}%
    \end{subtable}
    \label{table:Synthetic_Error}
\end{table}
The numerical results in Tables \ref{table:Synthetic_Error}(a) and \ref{table:Synthetic_Error}(b) indicate that the proposed 
Bayesian approach yields smallest operator norm losses for  $\bSigma$ and smallest projection operator norm losses for the subspace estimation, respectively. 
In terms of the two-to-infinity norm loss for the subspace estimation, Table \ref{table:Synthetic_Error}(c) shows that  the point estimates $\widehat\bU$ using the proposed approach yield smaller losses
compared to the competitors when $s = 8$ and $s = 12$ for both $r = 1$ and $r = 4$, while POET is more accurate for the single-spike cases when $s = 20$ and $s = 40$. The comparison between the two losses for the subspace estimation is also visualized in Figure \ref{fig:Loss_comparison}, {\color{black}suggesting that the two-to-infinity norm loss is less sensitive to the row support size $s$ than the projection operator norm loss as $s$ increases.}
\begin{figure}[t!]
  \centerline{\includegraphics[width=1\textwidth]{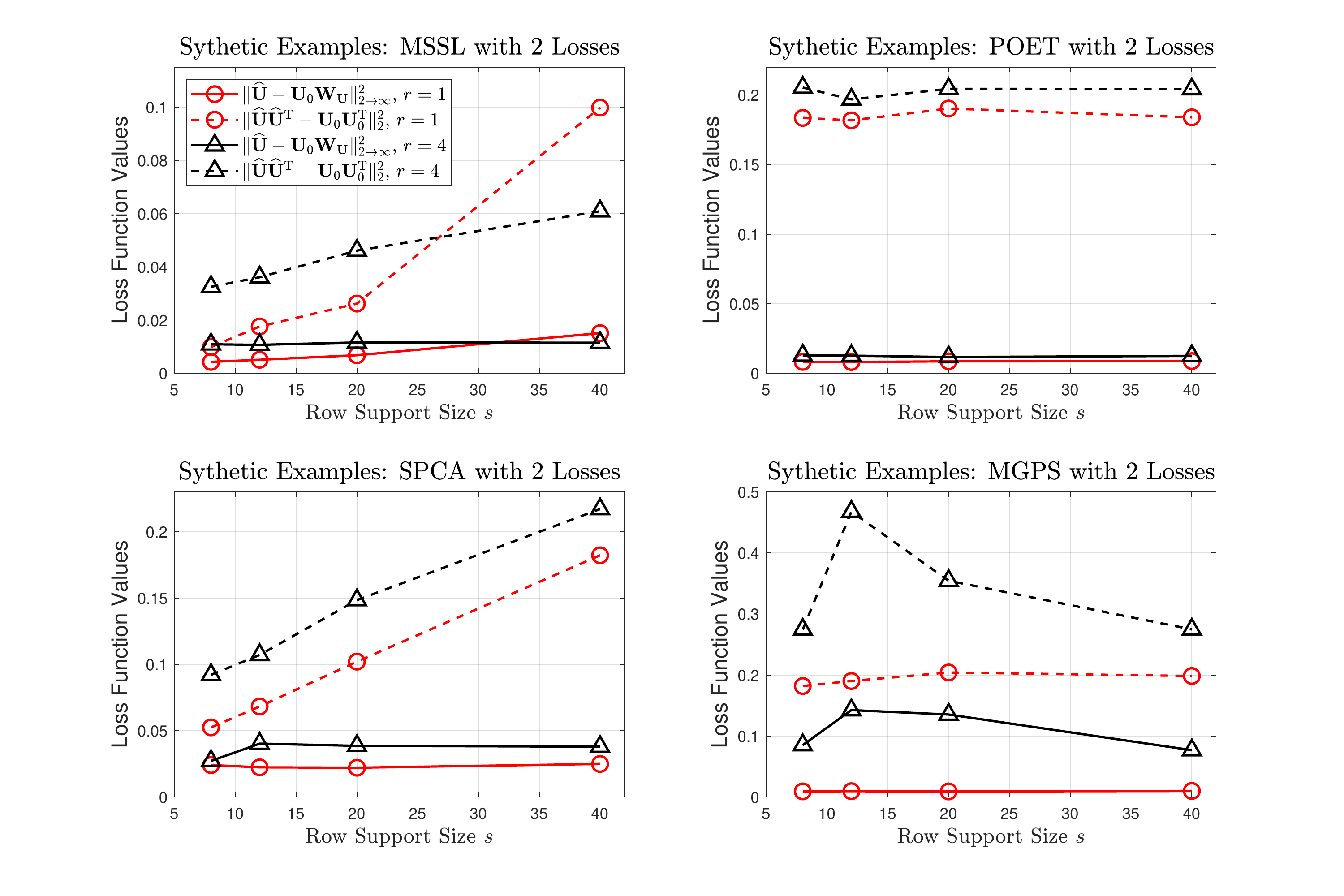}}
  \caption{Comparison of the two-to-infinity norm loss ($\|\widehat\bU-\bU_0\bW_\bU\|_{2\to\infty}$) and the projection operator norm loss ($\|\widehat\bU\widehat\bU\transpose-\bU_0\bU_0\transpose\|_2$) for synthetic examples. MSSL stands for the sparse Bayesian spiked covariance matrix model with the matrix spike-and-slab LASSO prior. }
  \label{fig:Loss_comparison}
\end{figure} 

We further evaluate the performance of estimating the principal subspace $\mathrm{Span}\{\bU_{*1},\ldots,\bU_{*r}\}$ when $s = 20$, $r = 1$ and $s = 40$, $r = 4$ through a single replicate in  Figures \ref{fig:Synthetic_Rank1}, \ref{fig:Synthetic_Rank4_fig1}, and \ref{fig:Synthetic_Rank4_fig2}, respectively. For visualization of recovering $\bU_0$ across different methods, we rotate the estimates according to the Frobenius orthogonal alignment (see section \ref{sub:spiked_covariance_matrix_models} for more details). It can clearly be seen that POET is able to capture the signal but fails to recover the joint sparsity of the principal subspace, whereas SPCA is able to recover the subspace sparsity but is not accurate in estimating the signal. MGPS performs similarly to POET, but its estimated credible intervals are wider than those using the proposed approach. 

Overall, the proposed sparse Bayesian spiked covariance matrix model is able to estimate the signals accurately, recover the row support of $\bU_0$, and provides better uncertainty quantification with narrower credible intervals for simulation setting. 
\begin{figure}[htbp!]
  \centerline{\includegraphics[width=1\textwidth]{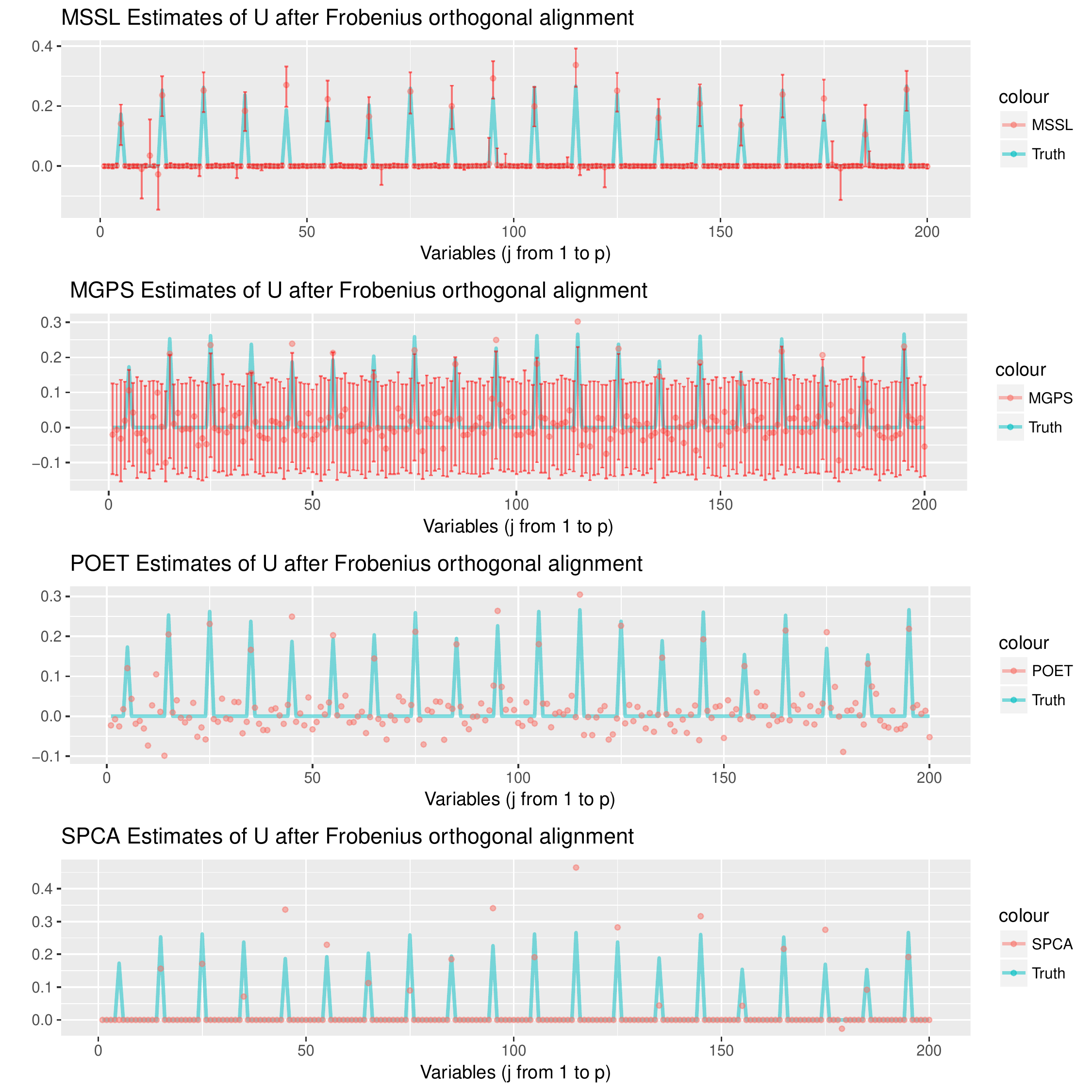}}
  \caption{Simulation performance from a single replicate with $s = 20$ and $r = 1$. The estimates are rotated to the simulation truth $\bU_0$ according to the Frobenius orthogonal alignment. The red bars in the top panels are estimated $95\%$ credible intervals using the proposed approach. MSSL stands for the sparse Bayesian spiked covariance matrix model with the matrix spike-and-slab LASSO prior. }
  \label{fig:Synthetic_Rank1}
\end{figure} 
\begin{figure}[htbp!]
  \centerline{\includegraphics[width=1.05\textwidth]{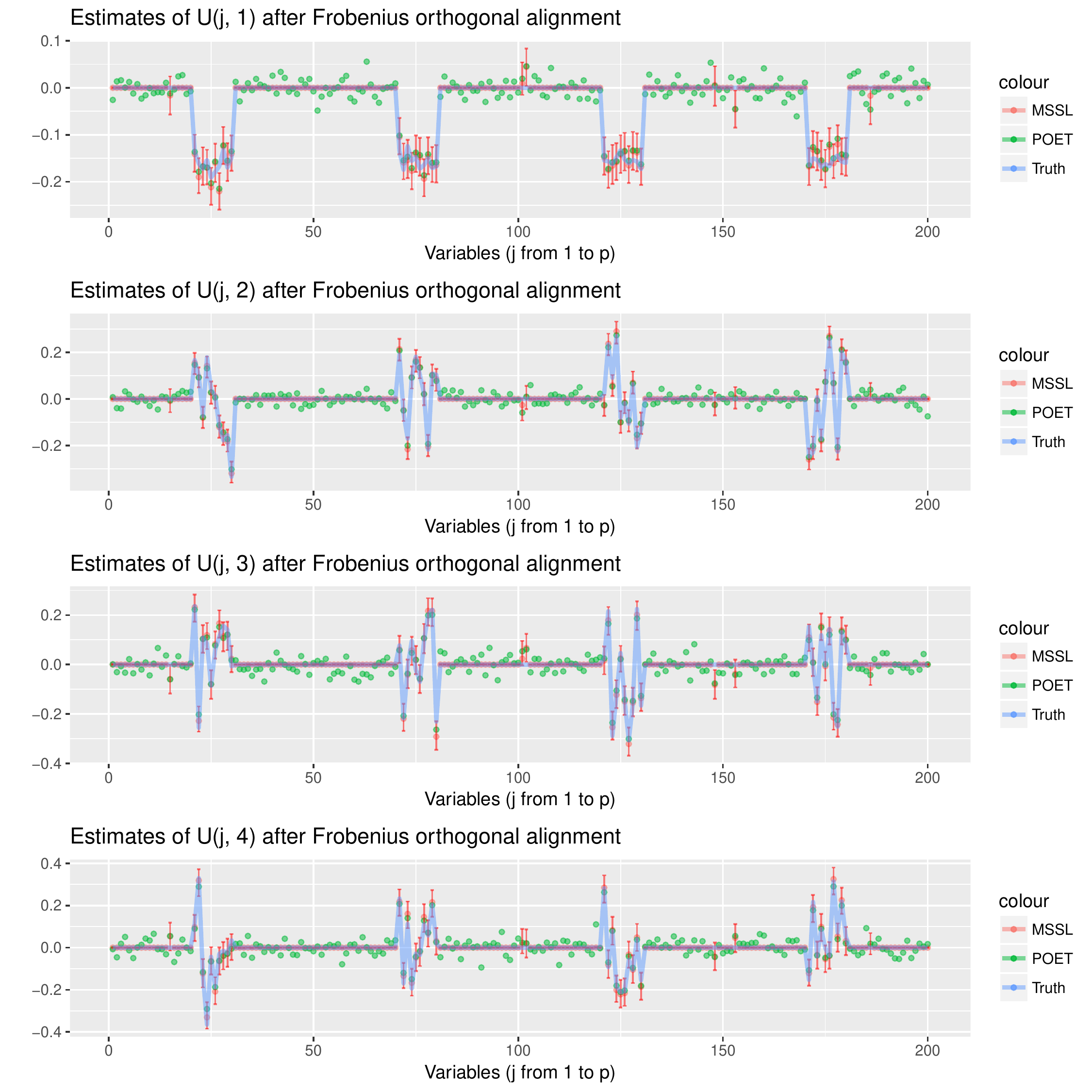}}
  \caption{Simulation performance from a single replicate with $s = 40$ and $r = 4$. The estimates are rotated to the simulation truth $\bU_0$ according to the Frobenius orthogonal alignment. The red bars in the four panels are estimated $95\%$ credible intervals using the proposed approach. }
  \label{fig:Synthetic_Rank4_fig1}
\end{figure}
\begin{figure}[htbp!]
  \centerline{\includegraphics[width=1.05\textwidth]{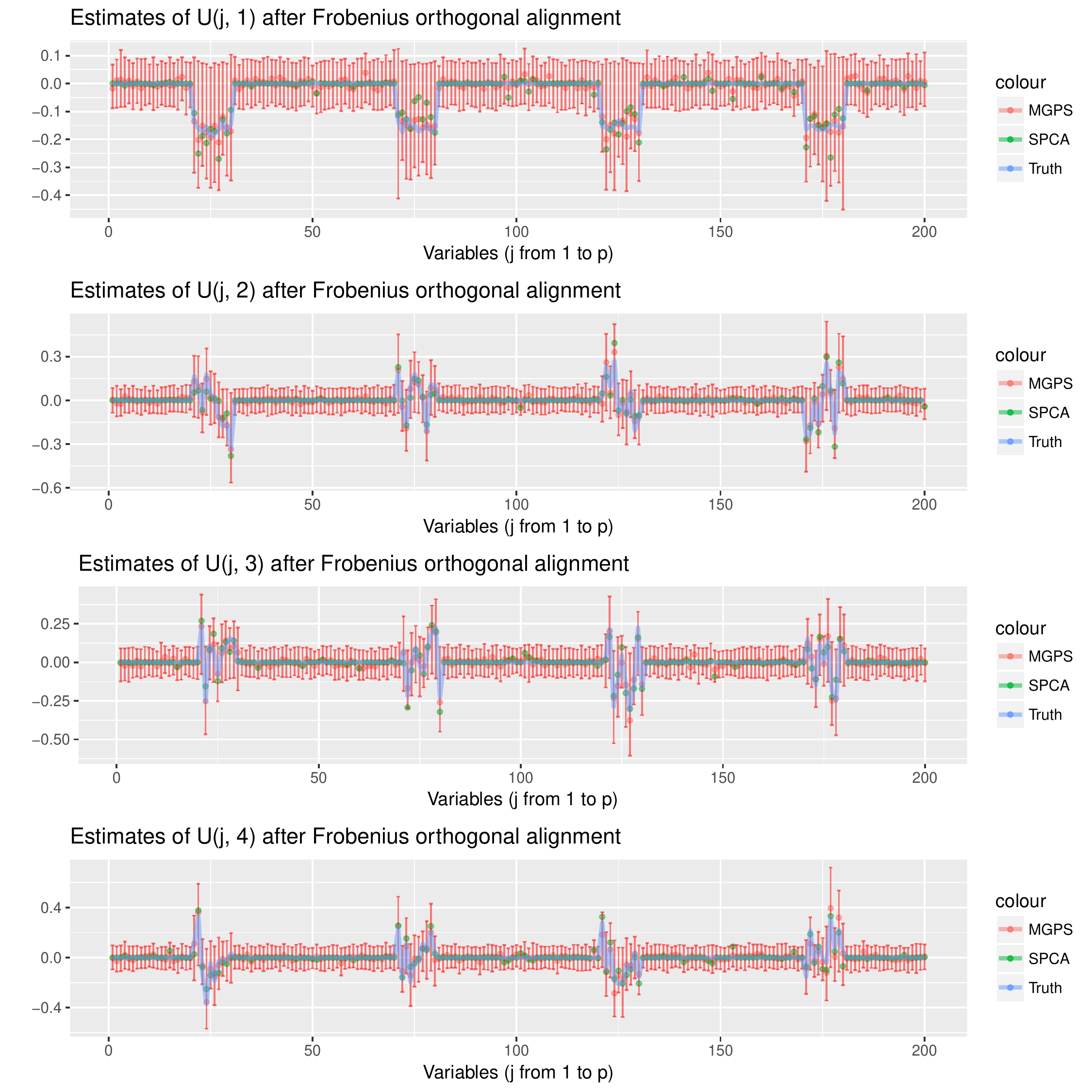}}
  \caption{Simulation performance from a single replicate with $s = 40$ and $r = 4$. The estimates are rotated to the simulation truth $\bU_0$ according to the Frobenius orthogonal alignment. The red bars in the four panels are estimated $95\%$ credible intervals for MGPS.}
  \label{fig:Synthetic_Rank4_fig2}  
\end{figure} 


\subsection{A face data example} 
\label{sub:real_data_example}
The joint sparsity of columns of the eigenvector matrix $\bU$ is highly desired in   feature extraction for high-dimensional data. In this subsection we illustrate how the proposed Bayesian approach is able to extract key features through a real data example in computer vision. 

  We consider a subset of the Extended Yale Face Database B \citep{927464,1407873}. It consists of face images for 38 subjects, and for each subject, 64 aligned images of size $192\times 168$ are taken under different illumination conditions. Here we focus on the 22nd subject and reduce the size of each image to $96\times 84$ (8064 pixels in total), following \cite{doi:10.1093/biomet/asw059}. In doing so we obtain a data matrix $\bY = [\by_1,\ldots,\by_n]\transpose$ of size $64\times 8064$. 

In computer vision, principal component analysis has been widely applied to obtain low-dimensional features, known as eigenfaces, from high-dimensional face image data. Under the proposed Bayesian framework, we perform posterior inference by implementing a Metropolis-within-Gibbs sampler. 
The number of spikes $r$ is estimated using the diagonal thresholding method proposed in \cite{cai2013sparse}.
For comparison, we also implement MGPS \cite{doi:10.1093/biomet/asr013}. 
 Instead of obtaining eigenfaces, we focus on directly extracting the key pixels via thresholding the obtained estimated eigenvector matrix $\widehat\bU$ using the obtained posterior samples. Specifically, for the proposed approach, the estimate $\widehat\bU$ can be computed according to Theorem \ref{thm:risk_bound_point_estimate}, and for MGPS, $\widehat\bU$ can be obtained by computing the left singular vectors of the loading matrix. The key pixels are then obtained by finding $\{j\in[8064]:\|\widehat\bU_{j*}\|_1/r> \tau\}$ for some small tolerance $\tau>0$. 

We present sample images of the 22nd subject in the first row of Figure \ref{fig:FaceData}, and the key pixels of the sample image $\#1$ extracted under the two models with different threshold values of $\tau$ are provided in the second and the third rows of Figure \ref{fig:FaceData}. 
\begin{figure}[t!]
  \centerline{\includegraphics[width=.9\textwidth]{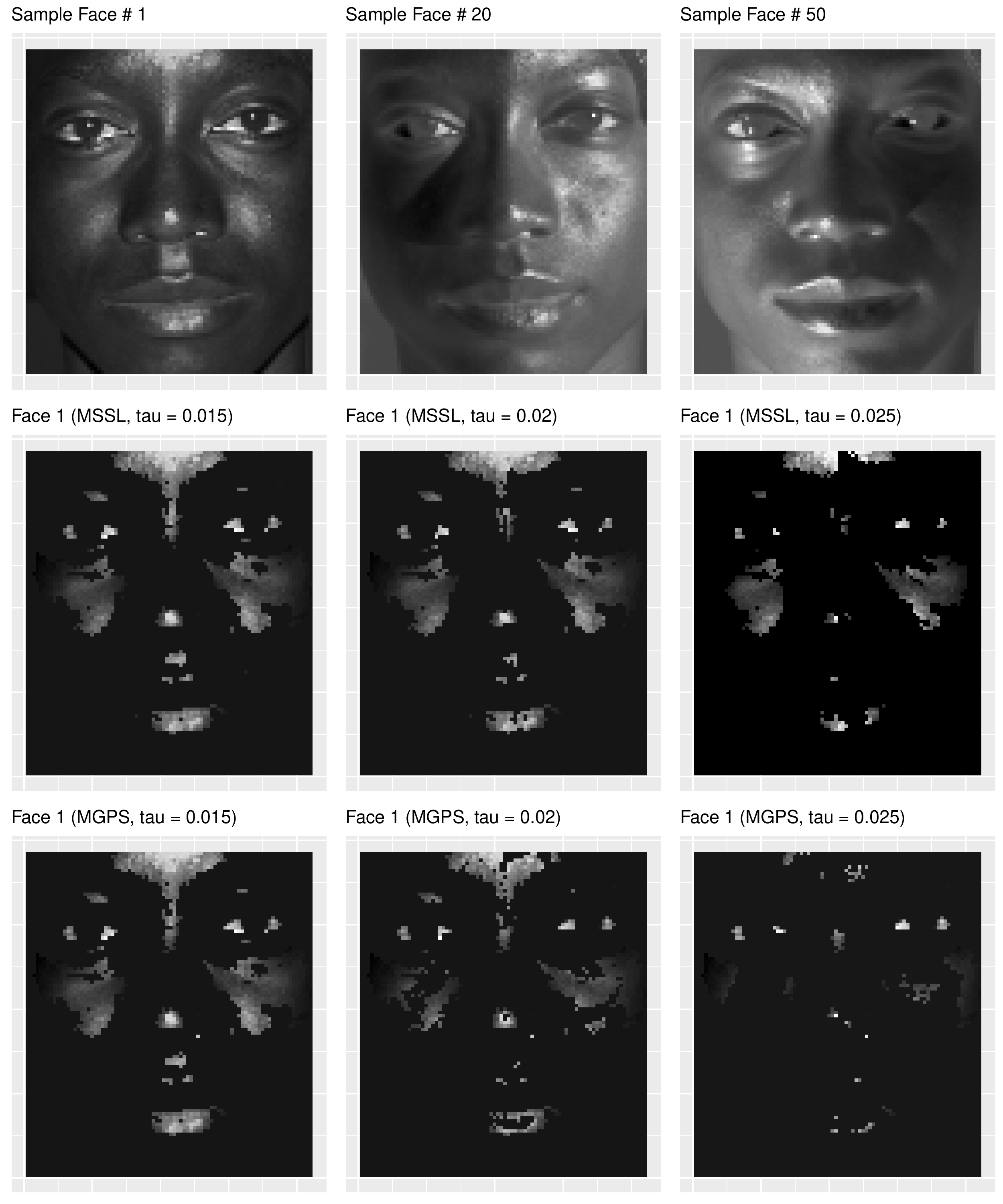}}
  \caption{The face data example: The first row corresponds to sample images of the 22nd subject (image number 1, 20, and 50, respectively).
  The second and the third rows are the key pixels of the $\#1$ image using the proposed Bayesian approach with the matrix spike-and-slab LASSO prior (MSSL) and MGPS with different threshold values of $\tau$. }
  \label{fig:FaceData}
\end{figure}
Under both models, pixels with higher values (corresponding to eyes, cheeks, forehead, and nose tips of the subject) are recovered. This observation is also in accordance with the conclusion from \cite{doi:10.1093/biomet/asw059}. 
{\color{black}
Nevertheless, as the threshold value $\tau$ increases, the number of key pixels captured using MGPS decreases significantly, whereas the proposed approach is more robust to the threshold value $\tau$ and maintains the key pixels that are sensitive to illumination. This phenomenon is expected, since MGPS is not designed to model joint sparsity and feature extraction, but rather column-specific sparsity for each individual factor loading, unlike the matrix spike-and-slab LASSO prior. 
}



\section{Discussion} 
\label{sec:discussion}

We have shown that the two-to-infinity norm loss for principal subspace estimation is superior to the routinely used projection operator norm loss in that the former is able to capture element-wise perturbations of the eigenvector matrix $\bU$ compared to the latter. 
We have derived the contraction rate of the full posterior distribution for the principal subspace with respect to the two-to-infinity norm loss, which is tighter than that with respect to the usual projection operator norm loss, provided that $\bU$ exhibits certain low-rank and bounded coherence features.
In future work, we intend to study whether a point estimator can be found from the posterior distribution with a risk bound that coincides with the posterior contraction rate with respect to the two-to-infinity norm loss. 
In addition, it is also worth exploring the minimax-optimal rates of convergence with respect to the two-to-infinity norm loss. 

Throughout the paper, the number of spikes $r$ is either assumed to be known, or unknown but can be consistently estimated using a frequentist procedure. Alternatively, it is feasible to adaptively estimate $r$ in the literature of Bayesian latent factor models (see, for example, \cite{doi:10.1093/biomet/asr013,gao2015rate,pati2014posterior}). Hence exploring rank-adaptive Bayesian procedure and obtain attractive theoretical properties or computation tractability could also be interesting. 

Markov chain Monte Carlo (MCMC) can be computationally intensive for high-dimensional settings in general. In this paper we 
explored MCMC for Bayesian estimation of the sparse spiked covariance matrix models.  It would be attractive to design efficient computational methods, such as expectation-maximization algorithm for the maximum \emph{a posteriori} estimation instead of computing the full posterior distribution \citep{doi:10.1080/01621459.2015.1100620}, or penalized least-squared estimation \citep{doi:10.1093/biomet/asw059}, and explore the underlying theoretical guarantees in future work. 


\bibliographystyle{apalike}
\bibliography{reference}

\clearpage
\begin{center}
	\begin{Large}
		\textbf{Supplementary Material for ``Bayesian Estimation of Sparse Spiked Covariance Matrix in High Dimensions''}
	\end{Large}
\end{center}
\appendix
\counterwithin{lemma}{section}
\counterwithin{theorem}{section}

\section{Proof of Lemma \ref{lemma:two_to_infinity_and_projection}} 
\label{sec:proof_of_lemma_lemma:two_to_infinity_and_projection}
\begin{customlemma}{2.1}
Let $\bU$ and $\bU_0$ be two orthonormal $r$-frames in $\mathbb{R}^p$, where $2r< p$. Then there exists an orthonormal $2r$-frame $\bV_\bU$ in $\mathbb{R}^p$ depending on $\bU$ and $\bU_0$, such that
\[
\|\bU-\bU_0\bW_\bU\|_{2\to\infty}\leq \|\bV_\bU\|_{2\to\infty}
\left(\|\bU\bU\transpose - \bU_0\bU_0\transpose\|_2 + \|\bU\bU\transpose - \bU_0\bU_0\transpose\|_2^2\right),
\]
where $\bW_\bU = \arginf_{\bW\in\mathbb{O}(r)}\|\bU - \bU_0\bW\|_{\mathrm{F}}$ is the Frobenius orthogonal alignment matrix. 
\end{customlemma}

We will need the following CS matrix decomposition of a partitioned orthonormal matrix to prove Lemma \ref{lemma:two_to_infinity_and_projection}. 
\begin{theorem}[Theorem 5.1 in \citealp{Stewart90}]
Let the orthonormal matrix $\bW \in \mathbb{O}(p, p)$ be partitioned in the form
\[
\bW = \begin{bmatrix*}
\bW_{11}&\bW_{12}\\ \bW_{21}& \bW_{22}
\end{bmatrix*},
\]
where $\bW_{11}\in\mathbb{R}^{r\times r}$, $\bW_{22}\in\mathbb{R}^{(p - r)\times (p - r)}$, and $2r\leq p$. Then there exists orthonormal matrices $\bU = \mathrm{diag}(\bU_{11},\bU_{22})$ and $\bV = \mathrm{diag}(\bV_{11}, \bV_{22})$ with $\bU_{11}, \bV_{11}\in\mathbb{O}(r)$, such that
\[
\bW = \bU\begin{bmatrix*}
\bC & -\bS & \zero \\ \bS & \bC & \zero \\ \zero & \zero & \eye_{(p - 2r)}
\end{bmatrix*}\bV\transpose,
\]
where $\bC = \mathrm{diag}(c_1,\ldots,c_r)$ and $\bS = \mathrm{diag}(s_1,\ldots,s_r)$ are diagonal with non-negative entries, and $\bC^2 + \bS^2 = \eye_r$. 
\end{theorem}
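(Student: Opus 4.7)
The plan is to construct the four orthogonal factors $\bU_{11},\bU_{22},\bV_{11},\bV_{22}$ explicitly via the singular value decomposition and orthogonal completions. First I would compute the SVD of the upper-left block, $\bW_{11} = \bU_{11}\bC\bV_{11}\transpose$, with $\bU_{11},\bV_{11}\in\mathbb{O}(r)$ and $\bC = \mathrm{diag}(c_1,\ldots,c_r)$ having non-negative diagonal entries. Since $\bW$ is orthogonal one has $\|\bW_{11}\|_2\leq 1$, so $0\leq c_i\leq 1$; then define $\bS = \mathrm{diag}(s_1,\ldots,s_r)$ with $s_i = \sqrt{1-c_i^2}\geq 0$, which makes the relation $\bC^2 + \bS^2 = \eye_r$ hold by construction.

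Next I would exploit the column-orthogonality relation $\bW_{11}\transpose\bW_{11} + \bW_{21}\transpose\bW_{21} = \eye_r$, which gives $(\bW_{21}\bV_{11})\transpose(\bW_{21}\bV_{11}) = \eye_r - \bC^2 = \bS^2$. In the generic case that all $s_i>0$, this forces the $(p-r)\times r$ matrix $\bW_{21}\bV_{11}\bS^{-1}$ to have orthonormal columns, which I would extend to a full orthogonal matrix $\bU_{22}\in\mathbb{O}(p-r)$; this yields a product $\bU_{22}\transpose\bW_{21}\bV_{11}$ whose top $r\times r$ block equals $\bS$ and whose remaining $(p-2r)\times r$ rows vanish. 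A parallel argument applied to $\bU_{11}\transpose\bW_{12}$, using the row-orthogonality identity $\bW_{11}\bW_{11}\transpose + \bW_{12}\bW_{12}\transpose = \eye_r$, produces $\bV_{22}\in\mathbb{O}(p-r)$ for which $\bU_{11}\transpose\bW_{12}\bV_{22}$ equals $-\bS$ in its first $r$ columns and vanishes afterwards, the overall minus sign being secured by negating the first $r$ columns of the orthogonal completion.

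With the four factors in hand, set $\bW'' = \mathrm{diag}(\bU_{11},\bU_{22})\transpose\,\bW\,\mathrm{diag}(\bV_{11},\bV_{22})$. Three of the blocks of $\bW''$ now match the target form, leaving only the $(2,2)$ block to be analyzed. Two further orthogonality identities of $\bW$ complete the picture: the cross identity $\bW_{11}\transpose\bW_{12} + \bW_{21}\transpose\bW_{22} = \zero$ forces the $(1,1)$ sub-block of $\bW''_{22}$ to equal $\bC$ and the $(1,2)$ sub-block to vanish, while the row-norm identity $\bW_{12}\bW_{12}\transpose + \bW_{22}\bW_{22}\transpose = \eye$, combined with the three already-fixed blocks, implies that the residual $(p-2r)\times(p-2r)$ sub-block of $\bW''_{22}$ is orthogonal. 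A final change of basis carried out purely within the last $p-2r$ columns of $\bU_{22}$ then absorbs this orthogonal factor to the identity without disturbing any entry already pinned down.

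The principal obstacle is the degenerate case in which some $s_i = 0$ (equivalently $c_i = 1$): there $\bS^{-1}$ does not exist and the corresponding columns of $\bW_{21}\bV_{11}$ vanish, so the orthonormal-extension argument must be performed only on the non-degenerate columns while the remaining columns of $\bU_{22}$ are chosen compatibly so that the block identities above still close up. I plan to handle this either by splitting $[r]$ into the two groups $\{i:s_i>0\}$ and $\{i:s_i=0\}$ and repeating the argument on each in parallel, or by invoking a continuity argument based on the density of orthogonal matrices $\bW$ whose $\bW_{11}$ has no unit singular value, together with the compactness of $\mathbb{O}(p)$.
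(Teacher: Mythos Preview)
The paper does not prove this theorem; it is quoted verbatim as Theorem~5.1 of \citet{Stewart90} and used as a black box in the proof of Lemma~\ref{lemma:two_to_infinity_and_projection}. So there is no ``paper's own proof'' to compare against.

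That said, your outline is the standard constructive proof of the CS decomposition and is essentially correct. Two small points of care. First, in your analysis of the $(2,2)$ block of $\bW''$ you invoke the column cross identity to pin down the top-left $r\times r$ sub-block as $\bC$ and kill the top-right $r\times(p-2r)$ sub-block, and then the row-norm identity to show the bottom-right $(p-2r)\times(p-2r)$ sub-block is orthogonal; but you also need the \emph{row} cross identity $\bW''_{21}\bW''_{11}\transpose + \bW''_{22}\bW''_{12}\transpose = \zero$ to force the bottom-left $(p-2r)\times r$ sub-block to vanish before the row-norm argument goes through cleanly. This is easy and you have all the ingredients, but it should be stated. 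Second, the degenerate case $s_i=0$ is indeed the only real subtlety; your group-splitting plan is the cleaner of the two options and is what most textbook proofs do, whereas the continuity/compactness route needs an extra sentence explaining why the diagonal forms of $\bC,\bS$ survive the limit.
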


\vspace*{1ex}
\noindent
Let $\bU_\perp$ and $\bU_{0\perp}\in\mathbb{O}(p, p - r)$ be such that $[\bU,\bU_\perp]$ and $[\bU_0,\bU_{0\perp}]\in\mathbb{O}(p)$. By the CS decomposition, there exists $\bU_{11},\bV_{11}\in\mathbb{O}(r)$ and $\bU_{22},\bV_{22}\in\mathbb{O}(p - r)$, such that
\[
\begin{bmatrix*}
\bU_0\transpose\bU & \bU_0\transpose\bU_\perp\\
\bU_{0\perp}\transpose\bU & \bU_{0\perp}\transpose\bU_{\perp}
\end{bmatrix*} = 
\begin{bmatrix*}
\bU_{11} & \zero \\ \zero & \bU_{22}
\end{bmatrix*} 
\begin{bmatrix*}
\bC & -\bS & \zero \\ \bS & \bC & \zero \\ \zero & \zero & \eye_{(p - 2r)}
\end{bmatrix*}
\begin{bmatrix*}
\bV_{11}\transpose & \zero\\ \zero & \bV_{22}\transpose
\end{bmatrix*}
\]
where $\bC = \mathrm{diag}(c_1,\ldots,c_r)$ and $\bS = \mathrm{diag}(s_1,\ldots,s_r)$ are diagonal with non-negative entries, and $\bC^2 + \bS^2 = \eye_r$. Write $\bU_{22}$ into two blocks $\bU_{22} = [\bU_{221},\bU_{222}]$ with $\bU_{221}\in\mathbb{O}(p - r, r)$. Take $\bQ = [\bU_0\bU_{11}, \bU_{0\perp}\bU_{22}]$. Clearly, we have
\begin{align}
\bQ\transpose \bU_0 \bU_{11} = \begin{bmatrix*}
\bU_{11}\transpose & \zero \\ \zero & \bU_{22}\transpose
\end{bmatrix*}
\begin{bmatrix*}
\bU_0\transpose \\ \bU_{0\perp}\transpose
\end{bmatrix*}\bU_0\bU_{11}
= \begin{bmatrix*}
\eye_r\\\zero_r\\ \zero_{p - 2r}
\end{bmatrix*}\nonumber
\end{align}
and
\[
\bQ\transpose\bU \bV_{11} = 
\begin{bmatrix*}
\bU_{11}\transpose & \zero \\ \zero & \bU_{22}\transpose
\end{bmatrix*}
\begin{bmatrix*}
\bU_0\transpose\bU \\ \bU_{0\perp}\transpose\bU
\end{bmatrix*}\bV_{11}
= \begin{bmatrix*}
\bC\\\bS\\ \zero_{p - 2r}
\end{bmatrix*}
\]
Observe that $\|\bU\bU\transpose - \bU_0\bU_0\transpose\|_2 = \|\bS\|_2$, and that $\bU_0\transpose \bU = \bU_{11}\bC\bV_{11}\transpose$ is the singular value decomposition of $\bU_0\transpose\bU$, implying that $\bW_\bU = \bU_{11}\bV_{11}\transpose$. We proceed to compute
\begin{align}
\|\bU - \bU_0\bW_\bU\|_{2\to\infty}
&\leq \|\bU - \bU_0\bU_0\transpose\bU\|_{2\to\infty} + \|\bU_0(\bU_0\transpose \bU - \bW_\bU)\|_{2\to\infty}\nonumber\\
& = \left\|\bQ \begin{bmatrix*}
\bS^2\\-\bS\bC\\\zero
\end{bmatrix*}\right\|_{2\to\infty} + \left\|\bQ \begin{bmatrix*}
\bC(\bC - \eye_r)\\-\bS(\bC - \eye_r)\\\zero
\end{bmatrix*}\right\|_{2\to\infty}\nonumber\\
& = \left\|
\begin{bmatrix*}
\bU_0 & \bU_{0\perp}
\end{bmatrix*}
\begin{bmatrix*}
\bU_{11} & \zero & \zero \\ \zero & \bU_{221} & \bU_{222}
\end{bmatrix*}
\begin{bmatrix*}
\bS^2\\-\bS\bC\\\zero
\end{bmatrix*}\right\|_{2\to\infty} \nonumber\\
&\quad + \left\|\begin{bmatrix*}
\bU_0 & \bU_{0\perp}
\end{bmatrix*}
\begin{bmatrix*}
\bU_{11} & \zero & \zero \\ \zero & \bU_{221} & \bU_{222}
\end{bmatrix*}
\begin{bmatrix*}
\bC(\bC - \eye_r)\\-\bS(\bC - \eye_r)\\\zero
\end{bmatrix*}\right\|_{2\to\infty}\nonumber\\
& = \left\|
\begin{bmatrix*}
\bU_0 & \bU_{0\perp}
\end{bmatrix*}
\begin{bmatrix*}
\bU_{11} & \zero & \\ \zero & \bU_{221}
\end{bmatrix*}
\begin{bmatrix*}
\bS^2\\-\bS\bC
\end{bmatrix*}\right\|_{2\to\infty} 
\nonumber\\&\quad 
+ \left\|\begin{bmatrix*}
\bU_0 & \bU_{0\perp}
\end{bmatrix*}
\begin{bmatrix*}
\bU_{11} & \zero & \\ \zero & \bU_{221} &
\end{bmatrix*}
\begin{bmatrix*}
\bC(\bC - \eye_r)\\-\bS(\bC - \eye_r)
\end{bmatrix*}\right\|_{2\to\infty}\nonumber\\
& = \left\|
\begin{bmatrix*}
\bU_0\bU_{11} & \bU_{0\perp}\bU_{221}
\end{bmatrix*}
\begin{bmatrix*}
\bS^2\\-\bS\bC
\end{bmatrix*}\right\|_{2\to\infty} 
 + \left\|\begin{bmatrix*}
\bU_0\bU_{11} & \bU_{0\perp}\bU_{221}
\end{bmatrix*}
\begin{bmatrix*}
\bC(\bC - \eye_r)\\-\bS(\bC - \eye_r)
\end{bmatrix*}\right\|_{2\to\infty}\nonumber.
\end{align}
Denote $\bV_\bU = [\bU_0\bU_{11}, \bU_{0\perp}\bU_{221}]$. Clearly, $\bV_\bU\in\mathbb{O}(p, 2r)$:
\[
\bV_\bU\transpose \bV_\bU = 
\begin{bmatrix*}
\bU_{11}\transpose & \zero & \\ \zero & \bU_{221}\transpose &
\end{bmatrix*}
\begin{bmatrix*}
\bU_0\transpose\\ \bU_{0\perp}\transpose
\end{bmatrix*}
\begin{bmatrix*}
\bU_0 & \bU_{0\perp}
\end{bmatrix*}
\begin{bmatrix*}
\bU_{11} & \zero & \\ \zero & \bU_{221} &
\end{bmatrix*} = \eye_{2r}.
\]
Furthermore, by the previous derivation and the fact that $\|\bA\bB\|_{2\to\infty}\leq \|\bA\|_{2\to\infty}\|\bB\|_2$, we have
\begin{align}
\|\bU - \bU_0\bW_\bU\|_{2\to\infty}
&\leq \|\bV_\bU\|_{2\to\infty}\left(\left\|\begin{bmatrix*}
\bS^2\\-\bS\bC
\end{bmatrix*}\right\|_{2} + \left\|\begin{bmatrix*}
\bC(\bC - \eye_r)\\-\bS(\bC - \eye_r)
\end{bmatrix*}\right\|_{2}\right)\nonumber\\
& = \|\bV_\bU\|_{2\to\infty}\left(
\left\|\bS^4  + \bS\bC^2\bS\right\|_2^{1/2} + \left\|(\bC - \eye_r)^2\right\|_2^{1/2}
\right)\nonumber\\
& = \|\bV_\bU\|_{2\to\infty}\left(
\left\|\bS\right\|_2 + \left\|\eye_r - \bC\right\|_2
\right)\nonumber\\
& \leq \|\bV_\bU\|_{2\to\infty}\left(
\left\|\bS\right\|_2 + \left\|\eye_r - \bC^2\right\|_2
\right)\nonumber\\
& = \|\bV_\bU\|_{2\to\infty}\left(\left\|\bU\bU\transpose - \bU_0\bU_0\transpose\right\|_2 + \left\|\bU\bU\transpose - \bU_0\bU_0\transpose\right\|_2^2\right)\nonumber,
\end{align}
and the proof is thus completed.

\section{Proofs of Results in Section \ref{sub:posterior_contraction_for_bayesian_sparse_spiked_covariance_matrices}} 
\label{sec:proofs_of_results_in_section_3_2}

\begin{customthm}{3.1}
Assume the data $\by_1,\ldots,\by_n$ are independently sampled from $\mathrm{N}_p(\zero_p, \bSigma_0)$ with $\bSigma_0 = \bU_0\bLambda_0\bU_0\transpose + \sigma_0^2\eye_p$, 
$\bLambda_0 = \mathrm{diag}(\lambda_{01},\ldots,\lambda_{0r})$, 
$|\mathrm{supp}(\bU_0)|\leq s$, and $1\leq r\leq s\leq p$. 
Suppose $\eps_n^2 = (s\log p)/n \to 0$, $p/n\to\infty$, and $r\log n\lesssim \log p$. Let $\bB\sim\mathrm{MSSL}_{p\times r}(\lambda, 1/p^2, p^{1+\kappa})$ for some positive $\lambda>0$ and $\kappa \leq 1$, and $\sigma^2\sim\mathrm{IGamma}(a_\sigma,b_\sigma)$ for some $a_\sigma,b_\sigma\geq1$. Then there exists some constants $M_0>0$, $R_0$, and $C_0$ depending on $\sigma_0$ and $\bLambda_0$, and hyperparameters, such that the following posterior contraction for $\bSigma = \bB\bB\transpose + \sigma^2\eye_p$ holds for all $M\geq M_0$ when $n$ is sufficiently large:
\begin{align}
\expect_0\left\{\Pi\left(\|\bSigma-\bSigma_0\|_2 > M\eps_n\mathrel{\Big|}\bY_n\right)\right\}
&\leq R_0\exp(-C_0s\log p).\nonumber
\end{align}
For each $\bB$, let $\bU_\bB\in\mathbb{O}(p, r)$ be the left-singular vector matrix of $\bB$. Then the following posterior contraction for $\bU_\bB$ holds for all $M\geq M_0$:
\begin{align}
\expect_0\left\{\Pi\left(\|\bU_\bB\bU_\bB\transpose-\bU_0\bU_0\transpose\|_{2} > \frac{2M\eps_n}{\lambda_{0r}}\mathrel{\Big|}\bY_n\right)\right\}
&\leq R_0\exp(-C_0s\log p).\nonumber
\end{align}
\end{customthm}

\begin{proof}[{\bf Proof of Theorem \ref{thm:posterior_contraction}}]
Recall that $\calU_n = \{\|\bSigma - \bSigma_0\|_2\leq M\eps_n\}$ and the posterior probability $\Pi(\calU_n^c\mid\bY_n)$ can be written as $\Pi(\calU_n^c\mid\bY_n) = {N_n(\calU_n^c)}/{D_n}$,  where 
\[
N_n(\calU_n^c) = \int_\calA \exp\{\ell_n(\bSigma) - \ell_n(\bSigma_0)\}\Pi(\mathrm{d}\bSigma),\quad D_n = \int \exp\{\ell_n(\bSigma) - \ell_n(\bSigma_0)\}\Pi(\mathrm{d}\bSigma),
\]
and $\ell_n(\bSigma) = \sum_{i=1}^n\log p(\by_i\mid\bSigma)$ is the log-likelihood function of $\bSigma$. 

\vspace*{1ex}
\noindent\textbf{Step 1: Prior concentration. }Let $\eta_n = \sqrt{(s\log p)/n}$. Then by Lemma 
\ref{lemma:evidence_lower_bound}, there exists a sequence of events $(\calA_n)_{n=1}^\infty$ such that
\[
\calA_n\subset\left\{D_n\geq\Pi(\|\bSigma-\bSigma_0\|_{\mathrm{F}}\leq\eta_n)\exp\left(-C_3's\log p\right)\right\}
\]
for $\eta_n = \sqrt{(s\log p)/n}\leq \sigma_0^2/2$, and 
\begin{align}\label{eqn:prior_concentration_probabilistic_bound}
\prob_0(\calA_n^c)\leq2\exp\left\{-\tilde C_3\min\left(1, \|\bSigma_0^{-1}\|_2^{-2}\right)s\log p\right\},
\end{align}
where $C_3'$ and $\tilde C_3$ are some absolute constants. Denote $\bB_0 = \bU_0\bLambda_0^{1/2}$, where $\bLambda_0^{1/2} = \mathrm{diag}(\lambda_{01}^{1/2},\ldots,\lambda_{0r}^{1/2})$. Then we analyze the prior concentration using a union bound as follows:
\begin{align}
\Pi(\|\bSigma-\bSigma_0\|_{\mathrm{F}}\leq \eta_n)& \geq \Pi\left(
\|\bB\bB\transpose- \bB_0\bB_0\transpose\|_{\mathrm{F}} + \|\sigma^2\eye_p - \sigma_0^2\eye_p\|_{\mathrm{F}} \leq \eta_n
\right)\nonumber\\
&\geq \Pi\left(\|\bB\bB\transpose - \bB_0\bB_0\transpose\|_{\mathrm{F}}\leq\frac{\eta_n}{2}\right)\Pi\left(|\sigma_0^2 - \sigma^2|\leq\frac{\eta_n}{2\sqrt{p}}\right)\nonumber.
\end{align}
On one hand, for $\eta_n = \sqrt{(s\log p)/n}\leq \sigma_0^2/2$, we have
\begin{align}
\Pi\left(|\sigma_0^2 - \sigma^2|\leq\frac{\eta_n}{2\sqrt{p}}\right)&\geq \left\{\min_{\sigma\in[\sigma_0^2/2,3\sigma_0^2/2]}\pi_\sigma(\sigma^2)\right\}\frac{\eta_n}{\sqrt{p}}
\geq C(\sigma_0^2)\mathrm{e}^{-\log p}\nonumber,
\end{align}
where the constant $C(\sigma_0^2) = \min_{\sigma_0^2/2\leq\sigma^2\leq 3\sigma_0^2/2}\pi_\sigma(\sigma^2) > 0$ depends only on $\sigma_0^2$. 
On the other hand, for $\eta_n = \sqrt{(s\log p)/n}\leq\min(\sigma_0^2/2, 16\|\bB_0\|^{1/2}_2)$, we proceed by union bound to derive
\begin{align}
\Pi\left(\|\bB\bB\transpose - \bB_0\bB_0\transpose\|_{\mathrm{F}}\leq\frac{\eta_n}{2}\right)
&\geq \Pi\left(\|\bB - \bB_0\|_{\mathrm{F}}\|\bB-\bB_0+\bB_0\|_2 + \|\bB_0\|_2\|\bB - \bB_0\transpose\|_{\mathrm{F}}\leq\frac{\eta_n}{2}\right)
\nonumber\\&
\geq \Pi\left\{\|\bB - \bB_0\|_{\mathrm{F}}\left(\|\bB-\bB_0\|_{\mathrm{F}} + 2\|\bB_0\|_2\right)\leq\frac{\eta_n}{2}\right\}\nonumber\\&
\geq \Pi\left\{\|\bB-\bB_0\|_{\mathrm{F}}\leq \min\left(\frac{\eta_n}{8\|\bB_0\|_2}, 2\|\bB_0\|_2\right)\right\}\nonumber\\&
= \Pi\left(\|\bB-\bB_0\|_{\mathrm{F}}\leq\frac{\eta_n}{8\|\bB_0\|_2}\right)\nonumber.
\end{align}
Invoking Lemma \ref{lemma:prior_concentration}, we see that there exists some constant $C(\lambda,\bB_0)$ depending on $\lambda$ and $\|\bB_0\|_{2\to\infty}$ only, such that
\begin{align}
\Pi\left(\|\bB\bB\transpose - \bB_0\bB_0\transpose\|_{\mathrm{F}}\leq\frac{\eta_n}{2}\right)
&\geq
\Pi\left(\|\bB-\bB_0\|_{\mathrm{F}}\leq\frac{\eta_n}{8\|\bB_0\|_2}\right)\nonumber\\
&\geq
\exp\left[-C_1\max\left\{\lambda^2s\|\bB_0\|_{2\to\infty}^2, s\log p, sr\left|\log\left(\lambda\frac{\sqrt{\log p}}{\sqrt{rn}}\right)\right|\right\}\right]\nonumber\\
&\geq \exp\left\{-C(\lambda,\bB_0)s\log p\right\}\nonumber.
\end{align}
Therefore, for $\eta_n = \sqrt{(s\log p)/n}\leq \min(\sigma_0^2,16\|\bB_0\|_2^{1/2})$ we obtain
\[
\Pi(\|\bSigma-\bSigma_0\|_{\mathrm{F}}\leq\eta_n)\geq C(\sigma_0^2)\exp\left[-\{1 + C(\lambda,\bB_0)\}s\log p\right],
\]
and over $\calA_n$, we have
\begin{align}\label{eqn:evidence_lower_bound}
D_n\geq C(\sigma_0^2)\exp\left(-C_{0\lambda}s\log p\right)
\end{align}
for some constant $C_{0\lambda}$ depending only on $\lambda$ and $\|\bB_0\|_{2\to\infty}$. 

\vspace*{1ex}\noindent
\textbf{Step 2: Construct subsets $(\calF_n)_{n=1}^\infty$. }Take $\eps_n = \sqrt{(s\log p)/n}$, $\tau_n = \beta s_n$, $t_n = (sr\log p)^2$, and $\delta_n = \eps_n/(t_n\sqrt{p})$, where $\beta>0$ is some constant to be specified later. Clearly, there exists some $\gamma>0$ such that
\[
\delta_n = \frac{\eps_n}{t_n\sqrt{p}} = \frac{\sqrt{s\log p}}{\sqrt{np}(sr\log p)^2} = 
\frac{1}{\sqrt{nps^3r^4(\log p)^3}}\geq \frac{1}{p^\gamma}.
\]
Now let $\beta>4\mathrm{e}\gamma$ and $\calF_n = \calF(\delta_n,\tau_n,t_n)$ be defined in Lemma \ref{lemma:existence_test}. 
Since
\begin{align}
\min\left\{\left(\frac{t_n}{\beta sr}\right)^2, \left(\frac{t_n}{r}\right)^2,\frac{t_n}{r}\right\}& = 
\min\left\{\frac{(sr)^2(\log p)^4}{\beta^2}, s^4r^2(\log p)^4, s^{2}r(\log p)\right\}\nonumber\\
& = \min\left\{\frac{sr^2(\log p)^3}{\beta^2}, s^3r^2(\log p)^3, sr\log p\right\}s\log p\nonumber\\
&\geq \beta s\log p\nonumber
\end{align}
for sufficiently large $n$, 
and $t_n/(sr) = (sr)\log p\to\infty$, we then can invoke Lemmas \ref{lemma:prior_sparsity} and \ref{lemma:prior_deviation} to obtain
\begin{align}
\Pi(\calF_n^c)&\leq \Pi(|\mathrm{supp}_{\delta_n}(\bB)|>\beta s_n) + \Pi\left[\sum_{j=1}^p\|\bB_{j*}\|_2^2\mathbbm{1}\{j\in\mathrm{supp}_{\delta_n}(\bB)\cup\mathrm{supp}(\bU_0)\} > t_n^2\right]\nonumber\\
&\leq 2\exp(-\beta s\log p) + 5\exp\left\{
-\min\left(\frac{\beta\kappa}{2},\frac{\beta}{2\mathrm{e}} - 2\gamma\right)s\log p
\right\}\nonumber\\
\label{eqn:prior_concentration_upper_bound}
&\leq 7\exp\left\{
-\min\left(\frac{\beta\kappa}{2},\frac{\beta}{2\mathrm{e}} - 2\gamma\right)s\log p
\right\}
\end{align}
for sufficiently large $n$ (and hence sufficiently small $s\log p/p$). 

\vspace*{1ex}\noindent
\textbf{Step 3: Decompose the integral $\expect_0\{\Pi(\calU_n^c\mid\bY_n)\}$. }Since by construction we have
\[
(\sqrt{p}\delta_n + 2t_n)\sqrt{p}\delta_n = \left(\sqrt{p}\frac{\eps_n}{t_n\sqrt{p}} + 2t_n\right)\sqrt{p}\delta_n \leq 3t_n\sqrt{p}\delta_n  = 3\eps_n.
\]
Then by Lemma \ref{lemma:existence_test}, for each $M\geq \max\{3/2, (128\|\bSigma_0\|_2^4)^{1/3}\}$, there exists a test function $\phi_n$ such that 
\begin{align}
\label{eqn:testing_typeI_error}
\expect_0(\phi_n)&\leq 3\exp\left[- \left\{\frac{C_4\sqrt{M}}{\sqrt{2}} - (2 + C_4)(\beta + 2)\right\}s\log p\right],\\
\label{eqn:testing_typeII_error}
\sup_{\bSigma\in \calU_n^c\cap\calF_n}\expect_\bSigma(1-\phi_n)&\leq \exp\left[- \left\{\frac{C_4M}{8} - C_4(\beta + 2)\right\}s\log p\right]
\end{align}
for some absolute constant $C_4>0$ for sufficiently large $n$. Now we decompose the target integral $\expect_0\{\Pi(\calU_n^c\mid\bY_n)\}$ using \eqref{eqn:prior_concentration_probabilistic_bound} and \eqref{eqn:testing_typeI_error} as follows:
\begin{align}
\expect_0\{\Pi(\calU_n^c\mid\bY_n)\}&\leq
\expect_0(\phi_n) + \expect_0\left\{(1-\phi_n)\Pi(\calU_n\mid\bY_n)\mathbbm{1}(\calA_n)\right\} + \prob_0(\calA_n^c)\nonumber\\
&\leq 3\exp\left[- \left\{\frac{C_4\sqrt{M}}{\sqrt{2}} - (2 + C_4)(\beta + 2)\right\}s\log p\right]
+ 2\exp\left\{-\tilde C_3\min\left(1,\|\bSigma_0^{-1}\|_2^{-2}\right)s\log p\right\} \nonumber\\
&\quad + \expect_0\left[(1 - \phi_n)\left\{\frac{N_n(\calU_n^c)}{D_n}\right\}\mathbbm{1}(\calA_n)\right]\nonumber.
\end{align}
Now we focus on the third term on the right-hand side of the preceding display. By \eqref{eqn:evidence_lower_bound}, we obtain
\begin{align}
&\expect_0\left[(1 - \phi_n)\left\{\frac{N_n(\calU_n^c)}{D_n}\right\}\mathbbm{1}(\calA_n)\right]
\leq \frac{\exp\left(C_{0\lambda}s\log p\right)}{C(\sigma_0^2)}\expect_0\left\{
(1 - \phi_n)\int_{\calU_n^c}\prod_{i=1}^n\frac{p(\by_i\mid\bSigma)}{p(\by_i\mid\bSigma_0)}\Pi(\mathrm{d}\bSigma)
\right\}\nonumber.
\end{align}
Observe that by Fubini's theorem,
\begin{align}
&\expect_0\left\{
(1 - \phi_n)\int_{\calU_n^c}\prod_{i=1}^n\frac{p(\by_i\mid\bSigma)}{p(\by_i\mid\bSigma_0)}\Pi(\mathrm{d}\bSigma)
\right\}\nonumber\\
&\quad\leq 
\expect_0\left\{
(1 - \phi_n)\int_{\calU_n^c\cap \calF_n}\prod_{i=1}^n\frac{p(\by_i\mid\bSigma)}{p(\by_i\mid\bSigma_0)}\Pi(\mathrm{d}\bSigma)
\right\}
+ \expect_0\left\{\int_{\calF_n^c}\prod_{i=1}^n\frac{p(\by_i\mid\bSigma)}{p(\by_i\mid\bSigma_0)}\Pi(\mathrm{d}\bSigma)
\right\}\nonumber\\
&\quad = \int_{\calU_n^c\cap \calF_n}\expect_0\left\{
(1 - \phi_n)\prod_{i=1}^n\frac{p(\by_i\mid\bSigma)}{p(\by_i\mid\bSigma_0)}
\right\}\Pi(\mathrm{d}\bSigma)
+ \int_{\calF_n^c}\left\{\expect_0\prod_{i=1}^n\frac{p(\by_i\mid\bSigma)}{p(\by_i\mid\bSigma_0)}\right\}\Pi(\mathrm{d}\bSigma)\nonumber\\
&\quad \leq \int_{\calU_n^c\cap\calF_n}\expect_{\bSigma}(1-\phi_n)\Pi(\mathrm{d}\bSigma) + 
\Pi(\calF_n^c)\nonumber\\
&\quad\leq \exp\left[- \left\{\frac{C_4M}{8} - C_4(\beta + 2)\right\}s\log p\right] + 
7\exp\left\{
-\min\left(\frac{\beta\kappa}{2},\frac{\beta}{2\mathrm{e}} - 2\gamma\right)s\log p
\right\}\nonumber,
\end{align}
where the testing type II error probability bound \eqref{eqn:testing_typeII_error} and \eqref{eqn:prior_concentration_upper_bound} are applied to the last inequality. Then by taking
\begin{align}
\beta &= \max\left\{\frac{4}{\kappa}C_{0\lambda}, 2\mathrm{e}\left(2\gamma + 2C_{0\lambda}\right)\right\},\nonumber\\
M &= M_0 = \max\left[\frac{8}{C_4}\left\{C_4(\beta + 2) + 2C_{0\lambda}\right\},\frac{2}{C_4^2}
\left\{C_{0\lambda} + (2 + C_4)(\beta + 2)\right\}^2
\right]
\nonumber,
\end{align}
we obtain the following result:
\begin{align}
\expect_0\left[(1 - \phi_n)\left\{\frac{N_n(\calU_n^c)}{D_n}\right\}\mathbbm{1}(\calA_n)\right]
&\leq \frac{1}{C(\sigma_0^2)}\exp\left[- \left\{\frac{C_4M}{8} - C_4(\beta + 2) - C_{0\lambda}\right\}s\log p\right]\nonumber\\
&\quad + \frac{1}{C(\sigma_0^2)}
7\exp\left[
-\left\{\min\left(\frac{\beta\kappa}{2},\frac{\beta}{2\mathrm{e}} - 2\gamma\right)
 - C_{0\lambda}
\right\}
s\log p
\right]\nonumber\\
&\leq \frac{8}{C(\sigma_0^2)}\exp\left\{-C_{0\lambda}s\log p\right\}.\nonumber
\end{align}
Combining the above results, we finally obtain
\begin{align}
\expect_0\{\Pi(\calU_n^c\mid\bY_n)\}&\leq
\left\{3 + \frac{8}{C(\sigma_0^2)}\right\}\exp\left\{ - C_{0\lambda}s\log p\right\}
+ 2\exp\left\{-\tilde C_3\min\left(1, \left\|\bSigma_0^{-1}\right\|_2^{-2}\right)s\log p\right\}\nonumber\\
&\leq \left\{5 + \frac{11}{C(\sigma_0^2)}\right\}\exp\left[
-\min\left\{ C_{0\lambda}, \tilde C_3, \tilde C_3\|\bSigma_0^{-1}\|_2^{-2} \right\}s\log p
\right]\nonumber\\
& = R_0\exp(-C_0s\log p)\nonumber
\end{align}
by taking $C_0 = \min\left\{ C_{0\lambda}, \tilde C_3, \tilde C_3\|\bSigma_0^{-1}\|_2^{-2} \right\}$ and $R_0 = \left\{5 + {11}/{C(\sigma_0^2)}\right\}$. Therefore, there exists some constant $M_0$, such that for all sufficiently large $n$, we have
\[
\expect_0\left\{\Pi\left(\|\bSigma-\bSigma_0\|_2 > M\eps_n\mathrel{\Big|}\bY_n\right)\right\}
\leq
\expect_0\left\{\Pi\left(\|\bSigma-\bSigma_0\|_2 > M_0\eps_n\mathrel{\Big|}\bY_n\right)\right\}
\leq R_0\exp(-C_0s\log p)
\]
for some absolute constants $C_0$ and $R_0$ depending on $\bSigma_0$ and the hyperparameters only. 

\vspace*{1ex}\noindent
\textbf{Step 4: Bounding the projection operator norm loss using the sine-theta theorem. }To prove the posterior contraction for $\bU$ with respect to the projection operator norm loss \eqref{eqn:posterior_contraction_Projection}, we need the following version of the Davis-Kahan sine-theta theorem, which follows as a recasting of Theorem VII.3.7 in \cite{bhatia2013matrix} in the language of \cite{doi:10.1093/biomet/asv008}:
\begin{theorem}
\label{thm:Davis_Kahan_sin_theta}
Let $\bX$, $\widehat\bX\in\mathbb{R}^{p\times p}$ be symmetric matrices with eigenvalues $\lambda_1\geq\ldots\geq \lambda_p$ and $\hat\lambda_1\geq\ldots\geq\hat\lambda_p$, respectively. Write $\bE = \widehat\bX - \bX$ and fix $1\leq r\leq s\leq p$. Assume that $\delta_{\mathrm{gap}}: = \min(\lambda_{r-1}-\lambda_r, \lambda_s - \lambda_{s + 1})>0$ where $\lambda_0:=\infty$ and $\lambda_{p+1}:=-\infty$. Let $d = s - r + 1$ and let $\bV = [\bv_1,\ldots,\bv_s]\in\mathbb{R}^{p\times d}$ and $\widehat\bV = [\hat\bv_r,\ldots,\hat\bv_s]\in\mathbb{R}^{p\times d}$ have orthonormal columns satisfying $\bX\bv_j = \lambda_j\bv_j$ and $\widehat\bX\hat\bv_j = \hat\lambda_j\hat\bv_j$ for $j = r,r + 1,\ldots,s$. Then 
\[
\|\widehat\bV\widehat\bV\transpose - \bV\bV\transpose\|_2\leq \frac{2}{\delta_{\mathrm{gap}}}\|\bE\|_2.
\]
\end{theorem}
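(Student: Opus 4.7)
The plan is to execute the classical Davis--Kahan $\sin\Theta$ argument via a Sylvester-equation analysis of the off-diagonal block $\bV_\perp\transpose\widehat\bV$. First I would fix an orthonormal $\bV_\perp\in\mathbb{R}^{p\times(p-d)}$ whose columns are the eigenvectors of $\bX$ with indices outside $\{r,\ldots,s\}$, giving the block-diagonal form $\bX = \bV\bLambda_{[r:s]}\bV\transpose + \bV_\perp\bLambda_\perp\bV_\perp\transpose$, with $\bLambda_{[r:s]} = \mathrm{diag}(\lambda_r,\ldots,\lambda_s)$ and $\bLambda_\perp$ collecting the remaining eigenvalues, and analogously decompose $\widehat\bX$ using $\widehat\bV$ and $\widehat\bLambda_{[r:s]}$. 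The reduction to this off-diagonal block is via the standard projection identity $\|\widehat\bV\widehat\bV\transpose - \bV\bV\transpose\|_2 = \|\bV_\perp\transpose\widehat\bV\|_2$, which equals the largest canonical sine $\|\sin\Theta(\bV,\widehat\bV)\|_2$.

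Second, starting from the eigenrelation $\widehat\bX\widehat\bV = \widehat\bV\widehat\bLambda_{[r:s]}$, I would left-multiply by $\bV_\perp\transpose$, substitute $\widehat\bX = \bX + \bE$, and apply $\bV_\perp\transpose\bX = \bLambda_\perp\bV_\perp\transpose$ to arrive at the Sylvester equation
\[
\bLambda_\perp(\bV_\perp\transpose\widehat\bV) - (\bV_\perp\transpose\widehat\bV)\widehat\bLambda_{[r:s]} = -\bV_\perp\transpose\bE\widehat\bV.
\]
Because $\bLambda_\perp$ and $\widehat\bLambda_{[r:s]}$ are diagonal, the Sylvester operator $\bY\mapsto\bLambda_\perp\bY - \bY\widehat\bLambda_{[r:s]}$ acts entrywise and its inverse has operator norm bounded by $1/\mathrm{dist}(\sigma(\bLambda_\perp),\sigma(\widehat\bLambda_{[r:s]}))$, yielding $\|\bV_\perp\transpose\widehat\bV\|_2 \leq \|\bE\|_2/\mathrm{dist}(\sigma(\bLambda_\perp),\sigma(\widehat\bLambda_{[r:s]}))$.

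The main obstacle, and the source of the factor $2$, is to convert this perturbed spectral distance into the \emph{unperturbed} gap $\delta_{\mathrm{gap}}$. I would dispose of this by a case split. If $\|\bE\|_2 \geq \delta_{\mathrm{gap}}/2$, the bound is trivial because $2\|\bE\|_2/\delta_{\mathrm{gap}} \geq 1 \geq \|\widehat\bV\widehat\bV\transpose - \bV\bV\transpose\|_2$. Otherwise, Weyl's inequality gives $|\hat\lambda_j - \lambda_j| \leq \|\bE\|_2 \leq \delta_{\mathrm{gap}}/2$ for every $j$, so each element of $\sigma(\widehat\bLambda_{[r:s]})$ lies in $[\lambda_s - \delta_{\mathrm{gap}}/2, \lambda_r + \delta_{\mathrm{gap}}/2]$ while $\sigma(\bLambda_\perp)$ lies outside the open interval $(\lambda_s - \delta_{\mathrm{gap}}, \lambda_r + \delta_{\mathrm{gap}})$, whence $\mathrm{dist}(\sigma(\bLambda_\perp),\sigma(\widehat\bLambda_{[r:s]})) \geq \delta_{\mathrm{gap}}/2$. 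Combined with the Sylvester bound this yields $\|\widehat\bV\widehat\bV\transpose - \bV\bV\transpose\|_2 \leq 2\|\bE\|_2/\delta_{\mathrm{gap}}$, reproducing the reformulation of Bhatia's Theorem VII.3.7 due to Yu, Wang and Samworth.
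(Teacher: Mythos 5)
The paper does not prove this theorem; it presents it as ``a recasting of Theorem VII.3.7 in Bhatia (2013) in the language of Yu, Wang and Samworth (2015)'' and invokes it by citation. Your argument is, accordingly, not a different route from the paper but a faithful reconstruction of the cited Yu--Wang--Samworth proof: reduce the projection-difference norm to the off-diagonal block $\|\bV_\perp\transpose\widehat\bV\|_2$, pass to the Sylvester equation for that block, and split into the trivial case $\|\bE\|_2\geq\delta_{\mathrm{gap}}/2$ (where $\|\widehat\bV\widehat\bV\transpose-\bV\bV\transpose\|_2\leq 1$ makes the bound automatic) and the case where Weyl's inequality confines the perturbed spectrum so tightly that the two blocks' spectra are separated by $\delta_{\mathrm{gap}}/2$.

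One step deserves tightening. You attribute the bound $\|\bV_\perp\transpose\widehat\bV\|_2\leq\|\bE\|_2/\mathrm{dist}(\sigma(\bLambda_\perp),\sigma(\widehat\bLambda_{[r:s]}))$ to the fact that the two matrices in the Sylvester operator are diagonal so that inversion ``acts entrywise.'' That reasoning is sound for the Frobenius norm, but not for the operator norm: the Schur multiplier $\bigl[\,1/(\mu_i-\hat\lambda_j)\,\bigr]$ can have operator-norm-to-operator-norm norm strictly larger than $1/\delta$ even when $|\mu_i-\hat\lambda_j|\geq\delta$ for all $i,j$; in the generic separated-Hermitian case the sharp constant is $\pi/(2\delta)$, not $1/\delta$. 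The constant $1/\delta$ you actually need is available precisely because of the annular geometry your case analysis establishes: after Weyl, $\sigma(\widehat\bLambda_{[r:s]})$ lies inside $[\lambda_s-\delta_{\mathrm{gap}}/2,\,\lambda_r+\delta_{\mathrm{gap}}/2]$ while $\sigma(\bLambda_\perp)$ lies outside $(\lambda_s-\delta_{\mathrm{gap}},\,\lambda_r+\delta_{\mathrm{gap}})$, which is exactly the hypothesis of the original Davis--Kahan $\sin\Theta$ theorem (Bhatia, Theorem VII.3.1) giving the operator-norm bound with constant $1$. So the numerical conclusion is correct, but the justification should point to that interval-plus-gap separation rather than to diagonality.
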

\noindent
To apply the sine-theta theorem, we let $\bX = \bSigma_0 = \bU_0\bLambda_0\bU_0\transpose + \sigma_0^2\eye_p$, $\widehat\bX = \bB\bB\transpose + \sigma^2\eye_p$, and take ``$s$''$ = r$ and ``$r$''$ = 1$, in which case $\delta_{\mathrm{gap}} = \min\{\infty, \lambda_r(\bSigma_0) - \lambda_{r + 1}(\bSigma_0)\} = \lambda_{0r}$, $\bV = \bU_0$, $\widehat\bV = \bU_\bB$, and $\bE = \bSigma - \bSigma_0$. Then by the sine-theta theorem and \eqref{eqn:posterior_contraction_Sigma}, we have
\[
\|\bU_\bB\bU_\bB\transpose-\bU_0\bU_0\transpose\|_{2}\leq \frac{2}{\lambda_{0r}}\|\bE\|_2= \frac{2}{\lambda_{0r}}\|\bSigma - \bSigma_0\|_2
\]
and hence, by the posterior contraction for $\bSigma$, we have
\begin{align}
\expect_0\left\{\Pi\left(\|\bU_\bB\bU_\bB\transpose-\bU_0\bU_0\transpose\|_{2} > \frac{2M\eps_n}{\lambda_{0r}}\mathrel{\Big|}\bY_n\right)\right\}
&\leq \expect_0\left\{\Pi\left(\|\bSigma-\bSigma_0\|_2 > M\eps_n\mathrel{\Big|}\bY_n\right)\right\}
\leq R_0\exp(-C_0s\log p)\nonumber.
\end{align}
\end{proof}

\begin{customthm}{3.2}
Assume the conditions in Theorem \ref{thm:posterior_contraction} hold. 
Further assume that the eigenvector matrix $\bU_0$ exhibits bounded coherence: $\|\bU_0\|_{2\to\infty}\leq C_\mu\sqrt{r/s}$ for some constant $C_\mu\geq1$, and the number of spikes $r$ is sufficiently small in the sense that $r^3/s= O(1)$. 
Then there exists some constants $M_{2\to\infty}>0$ depending on $\sigma_0$ and $\bLambda_0$, and hyperparameters, such that the following posterior contraction for $\bSigma = \bB\bB\transpose + \sigma^2\eye_p$ holds for all $M\geq M_{2\to\infty}$ when $n$ is sufficiently large:
\begin{align*}
\expect_0\left\{\Pi\left(\|\bSigma-\bSigma_0\|_\infty > M r\sqrt{\frac{s\log p}{n}}\mathrel{\bigg|}\bY_n\right)\right\}
&\leq R_0\exp(-C_0 s\log p),\\
\expect_0\left[\Pi\left\{\|\bU_\bB-\bU_0\bW_\bU\|_{2\to\infty} > M
\max\left(\sqrt{\frac{r^3\log p}{n}},{\frac{s\log p}{n}}\right)
\right\}
\right]
&\leq 2R_0\exp(-C_0s\log p),
\end{align*}
where $\bW_\bU$ is the Frobenius orthogonal alignment matrix $\bW_\bU = \arginf_{\bW\in\mathbb{O}(r)}\|\bU_\bB-\bU_0\bW\|_{\mathrm{F}}$.
\end{customthm}

\begin{proof}[\bf Proof of Theorem \ref{thm:posterior_contraction_two_to_infinity}]
The proof is similar to that of Theorem \ref{thm:posterior_contraction}, but we need the following testing lemma dealing with the infinity norm loss $\|\bSigma - \bSigma_0\|_\infty$, which is analogous to Lemma \ref{lemma:existence_test} in the manuscript. The proof is deferred to Section \ref{sec:proof_of_lemma_test_infinity_norm}.

\begin{lemma}\label{lemma:existence_test_infinity_norm}
Assume the data $\by_1,\ldots,\by_n$ follows $\mathrm{N}_p(\zero_p, \bSigma)$, $1\leq r\leq p$. Suppose $\bU_0\in\mathbb{O}(p, r)$ satisfy $|\mathrm{supp}(\bU_0)|\leq s$, and $r\leq s\leq p$. 
For any positive $\delta$, $t$, and $\tau$, define 
\[
\calG(\delta,\tau,t) = \left\{\bB\in\mathbb{R}^{p\times r}:|\mathrm{supp}_\delta(\bB)|\leq \tau, \sum_{j=1}^p\|\bB_{j*}\|_1\mathbbm{1}\{j\in\mathrm{supp}_\delta(\bB)\cup \mathrm{supp}(\bU_0)\}\leq t
\right\}.
\]
Let the positive sequences $(\delta_n, \tau_n, t_n,\eps_n)_{n=1}^\infty$ satisfy $\max(p\delta_nt_n, \delta_nt_n + p\delta_n^2)\leq M_1 \eps_n$ for some constant $M_1>0$, and $\eps_n\leq 1$. Consider testing $H_0:\bSigma = \bSigma_0 = \bU_0\bLambda_0\bU_0\transpose + \sigma_0^2\eye_p$ versus
\[
H_1:\bSigma \in\left\{\bSigma = \bB\bB\transpose + \sigma^2\eye_p:\|\bSigma-\bSigma_0\|_{\infty}>M\eps_n, \bB\in\calG(\delta_n, \tau_n, t_n)\right\}.
\]
Then there exists some absolute constant $C_6 > 0$, such that for each 
\[
M\in\left[\max\left\{\frac{M_1}{2}, 8, \frac{8(\log 2)^2}{C_6^2}\right\}, \frac{2\min(1, 2\|\bSigma_0\|_2)}{\eps_n}\right],
\]
 there exists a test function $\phi_n:\mathbb{R}^{n\times p}\to [0,1]$, such that
\begin{align}
\expect_0(\phi_n)&\leq 
12\exp\left\{6(\tau_n\log p + 2s_n) - 
C_6\min\left(\frac{1}{2},\frac{\|\bSigma_0\|_\infty^2}{\sqrt{2}}\right){\frac{\sqrt{M}n\eps_n^2}{\|\bSigma_0\|_\infty^2}}
\right\}
,\nonumber\\
\sup_{\bSigma\in H_1}\expect_\bSigma(1-\phi_n)&\leq 
4\exp\left\{4(\tau_n + 2s_n) - C_6\min\left(\frac{\|\bSigma_0\|_\infty^2}{8},\frac{1}{32}\right)\frac{Mn\eps_n^2}{\|\bSigma_0\|_\infty^2}\right\}.
\nonumber
\end{align}
\end{lemma}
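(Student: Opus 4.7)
The plan is to construct $\phi_n$ as the maximum of a family of linear-quadratic tests indexed by a candidate ``row'' $j^*\in[p]$ and a sign pattern $\bw\in\{-1,0,+1\}^p$ supported on a small set, and to control its error probabilities via a Hanson--Wright style bound for Gaussian chaos combined with a union bound over support/sign configurations. The first step is a sparsity reduction: for $\bB\in\calG(\delta_n,\tau_n,t_n)$, the set $S(\bB)=\mathrm{supp}_{\delta_n}(\bB)\cup\mathrm{supp}(\bU_0)$ has cardinality at most $\tau_n+s$, and one checks that the contribution to $\|\bB\bB\transpose-\bU_0\bLambda_0\bU_0\transpose\|_\infty$ coming from rows or columns outside $S(\bB)$ is at most of order $p\delta_nt_n+p\delta_n^2+\delta_nt_n$. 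Under the hypothesis $\max(p\delta_nt_n,\delta_nt_n+p\delta_n^2)\leq M_1\eps_n$, this contribution is absorbed by $M_1\eps_n$, so that if $M\geq M_1/2$ every alternative $\bSigma_1=\bB\bB\transpose+\sigma^2\eye_p\in H_1$ admits some $j^*\in S(\bB)$ and $\bw\in\{-1,0,1\}^p$ with $\mathrm{supp}(\bw)\subseteq S(\bB)$ satisfying $\be_{j^*}\transpose(\bSigma_1-\bSigma_0)\bw\geq M\eps_n/2$ (with an analogous diagonal-entry test to absorb the $\sigma^2-\sigma_0^2$ shift).

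The second step is the concentration analysis for a single $(j^*,\bw)$. Define the symmetric test matrix $\bA=\tfrac{1}{2}(\be_{j^*}\bw\transpose+\bw\be_{j^*}\transpose)$ and the statistic $T_{j^*,\bw}=\mathrm{tr}(\bA\bS_n)-\mathrm{tr}(\bA\bSigma_0)$, which is a centered degree-two Gaussian chaos in $\by_1,\ldots,\by_n$ under $H_0$. The Hanson--Wright inequality gives
\[
\prob_0(|T_{j^*,\bw}|>t)\leq 2\exp\left\{-cn\min\left(\frac{t^2}{\|\bSigma_0^{1/2}\bA\bSigma_0^{1/2}\|_{\mathrm{F}}^2},\frac{t}{\|\bSigma_0^{1/2}\bA\bSigma_0^{1/2}\|_2}\right)\right\}.
\]
Because $\bw$ has $\ell_\infty$-norm one and $\be_{j^*}$ is a standard basis vector, both norms of $\bSigma_0^{1/2}\bA\bSigma_0^{1/2}$ can be bounded by a constant multiple of $\|\bSigma_0\|_\infty^2$, yielding a deviation bound of the form $2\exp\{-c'n\min(t^2,t)/\|\bSigma_0\|_\infty^2\}$.

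The third step sets the threshold and performs the union bound. Choose the rejection threshold to be $\sqrt{M}\eps_n$ (rather than $M\eps_n/2$), which sits between zero and the smallest signal level $M\eps_n/2$ under any alternative. Under $H_0$, the Hanson--Wright bound with $t=\sqrt{M}\eps_n$ lands in the sub-exponential regime (guaranteed by the upper bound on $M$ in the hypothesis, which ensures $t\leq1$), yielding the $\sqrt{M}n\eps_n^2/\|\bSigma_0\|_\infty^2$ exponent. Under the alternative $\bSigma_1$, the corresponding test has signal-minus-threshold of order $M\eps_n$, so the same inequality yields the $Mn\eps_n^2/\|\bSigma_0\|_\infty^2$ exponent with the min again switching between a quadratic and linear regime (producing the two cases inside the $\min$ of the stated bounds). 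Then one takes $\phi_n=\max_{(j^*,\bw)}\phi_{j^*,\bw}$ over all $j^*\in[p]$ and all $\bw\in\{-1,0,+1\}^p$ whose support lies in some subset $S\subseteq[p]$ of size at most $\tau_n+s$. For Type I one union-bounds over $p\cdot\binom{p}{\tau_n+s}\cdot 3^{\tau_n+s}\leq \exp\{6(\tau_n\log p+2s_n)\}$ choices; for Type II the support $S(\bB)$ is fixed by the alternative, so only $3^{\tau_n+s}$ sign patterns contribute, yielding the $\exp\{4(\tau_n+2s_n)\}$ prefactor.

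The main obstacle is the careful bookkeeping in Step three: one must track the dependence on $\|\bSigma_0\|_\infty$ (rather than $\|\bSigma_0\|_2$) by bounding $\|\bSigma_0^{1/2}\bA\bSigma_0^{1/2}\|_{\mathrm{F}}$ and $\|\bSigma_0^{1/2}\bA\bSigma_0^{1/2}\|_2$ using only the structure of $\bA=\tfrac12(\be_{j^*}\bw\transpose+\bw\be_{j^*}\transpose)$, and one must verify that the chosen threshold $\sqrt{M}\eps_n$ remains in the appropriate sub-exponential regime of Hanson--Wright for every $M$ in the allowed range $[M_1/2,\,2\min(1,2\|\bSigma_0\|_2)/\eps_n]$. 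This accounts simultaneously for the $\sqrt{M}$ scaling in the Type I exponent, the $M$ scaling in the Type II exponent, and the two cases in the $\min$ appearing in the stated bound. A secondary bookkeeping issue is separating the contribution of $\sigma^2-\sigma_0^2$ to the diagonal from that of $\bB\bB\transpose-\bU_0\bLambda_0\bU_0\transpose$, which is handled by augmenting the family of linear tests with a negligible-complexity trace-based test.
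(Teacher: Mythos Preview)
Your sparsity reduction in Step~1 is exactly what the paper does: for $\bB\in\calG(\delta_n,\tau_n,t_n)$ the rows outside $S(\bB)=\mathrm{supp}_{\delta_n}(\bB)\cup\mathrm{supp}(\bU_0)$ contribute at most $M_1\eps_n$ to $\|\bSigma-\bSigma_0\|_\infty$, so the problem localizes to a block of dimension at most $\tau_n+2s$. The gap is in the oracle test. Your single-threshold Hanson--Wright scheme cannot control Type~II uniformly over $H_1$: under an alternative $\bSigma_1$ the fluctuation of $T_{j^*,\bw}$ is governed by $\|\bSigma_1^{1/2}\bA\bSigma_1^{1/2}\|$, not $\|\bSigma_0^{1/2}\bA\bSigma_0^{1/2}\|$, and nothing in $H_1$ bounds $\|\bSigma_1\|_\infty$ (neither the membership $\bB\in\calG$ nor the deviation condition caps $\sigma^2$ or the size of $\bB$). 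The paper resolves this by a \emph{shelling} argument that your proposal omits: the alternative is partitioned into shells $\calH_{1j}$ according to the magnitude of $\|\bSigma\|_\infty$, and to each shell one attaches a scale-adapted test that rejects when $\|n^{-1}\sum_i\by_i\by_i\transpose\|_\infty$ exceeds the shell's lower boundary.

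This shelling is also the true origin of the $\sqrt{M}$ in the Type~I exponent, not your threshold choice $\sqrt{M}\eps_n$. Plugging $t=\sqrt{M}\eps_n$ into Hanson--Wright under $H_0$ yields either $nM\eps_n^2/\|\bSigma_0\|_\infty^2$ (sub-Gaussian branch) or $n\sqrt{M}\eps_n/\|\bSigma_0\|_\infty$ (sub-exponential branch); neither equals the target $\sqrt{M}\,n\eps_n^2/\|\bSigma_0\|_\infty^2$. In the paper the $\sqrt{M}n$ piece is instead the $H_0$-probability that the sample covariance reaches the first shell, i.e.\ a deviation of order $\sqrt{M}\|\bSigma_0\|_\infty$, which by sub-exponential concentration costs $\exp(-c\sqrt{M}n)$ and is then weakened to $\sqrt{M}n\eps_n^2$ via $\eps_n\le 1$. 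A smaller issue: your Step~2 bound on $\|\bSigma_0^{1/2}\bA\bSigma_0^{1/2}\|$ by a constant multiple of $\|\bSigma_0\|_\infty$ is not justified---with $\|\bw\|_\infty=1$ and $|\mathrm{supp}(\bw)|\le d$ one only gets $\bw\transpose\bSigma_0\bw\le d\,\|\bSigma_0\|_\infty$, introducing a $\sqrt{d}$ factor your bookkeeping ignores; the paper avoids this by first passing to the $d$-dimensional block and paying the $\exp(4d)$ covering-number price up front.
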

\vspace*{1ex}\noindent
Before we proceed to the proof, observe that the bounded coherence assumption on $\bU_0$ (\emph{i.e.}, $\|\bU_0\|_{2\to\infty}\leq C_\mu\sqrt{r/s}$ for some $C_\mu\geq1$) implies the following bound for the infinity norm on $\bSigma_0$:
\begin{align}
\|\bSigma_0\|_\infty 
&\leq \|\bU_0\bLambda_0\bU_0\transpose\|_\infty + \sigma_0^2
\leq \lambda_{01}\|\bU_0\|_\infty\|\bU_0\transpose\|_\infty + \sigma_0^2\nonumber\\
&\leq \lambda_{01}\left(\sqrt{r}\|\bU_0\|_{2\to\infty}\right)\left(\sqrt{s}\|\bU_0\transpose\|_{2\to\infty}\right) + \sigma_0^2
\leq C_\mu r\|\bSigma_0\|_2\nonumber.
\end{align}
Hence, 
\[
\frac{n\eps_n^2}{\|\bSigma_0\|_\infty^2} = \frac{r^2s\log p}{C_\mu^2 r^2 \|\bSigma_0\|_2^2} = \frac{s\log p}{C_\mu^2 \|\bSigma_0\|_2^2}.
\]

\vspace*{1ex}\noindent
\textbf{Step 1} remains the same as that in the proof of Theorem \ref{thm:posterior_contraction}. In what follows we will make use of inequalities \eqref{eqn:prior_concentration_probabilistic_bound} and \eqref{eqn:evidence_lower_bound}.

\vspace*{1ex}\noindent
\textbf{Step 2: Construct subsets $(\calG_n)_{n=1}^\infty$. }
This step is also similar to that in the proof of Theorem \ref{thm:posterior_contraction}.
Take $\eps_n = r\sqrt{(s\log p)/n}$, $\tau_n = \beta s_n$, $t_n = (sr\log p)^2$, and $\delta_n = \eps_n/(pt_n)$, where $\beta>0$ is some constant to be specified later. Clearly, there exists some $\gamma>0$ such that
\[
\delta_n = \frac{\eps_n}{pt_n} = \frac{r\sqrt{s\log p}}{p\sqrt{n}(sr\log p)^2} = 
\frac{1}{\sqrt{np^2s^3r^2(\log p)^3}}\geq \frac{1}{p^\gamma}.
\]
Now let $\beta>4\mathrm{e}\gamma$ and $\calG_n = \calG(\delta_n,\tau_n,t_n)$ be defined in Lemma \ref{lemma:existence_test_infinity_norm}. 
Since
\begin{align}
\min\left\{\left(\frac{t_n}{\beta sr}\right)^2, \left(\frac{t_n}{r}\right)^2,\frac{t_n}{r}\right\}
&\geq 
\beta s\log p\nonumber
\end{align}
for sufficiently large $n$, 
and $t_n/(sr) = (sr)\log p\to\infty$, we then can invoke Lemmas \ref{lemma:prior_sparsity} and \ref{lemma:prior_deviation} to obtain
\begin{align}
\Pi(\calG_n^c)&\leq \Pi(|\mathrm{supp}_{\delta_n}(\bB)|>\beta s_n) + \Pi\left[\sum_{j=1}^p\|\bB_{j*}\|_1\mathbbm{1}\{j\in\mathrm{supp}_{\delta_n}(\bB)\cup\mathrm{supp}(\bU_0)\} > t_n\right]\nonumber\\
&\leq 2\exp(-\beta s\log p) + 5\exp\left\{
-\min\left(\frac{\beta\kappa}{2},\frac{\beta}{2\mathrm{e}} - 2\gamma\right)s\log p
\right\}\nonumber\\
\label{eqn:prior_concentration_upper_bound_infinity_norm}
&\leq 7\exp\left\{
-\min\left(\frac{\beta\kappa}{2},\frac{\beta}{2\mathrm{e}} - 2\gamma\right)s\log p
\right\}
\end{align}
for sufficiently large $n$ (and hence sufficiently small $(s\log p)/p$). 

\vspace*{1ex}\noindent
\textbf{Step 3: Decompose the integral. }Since by construction we have
\[
\max(p\delta_nt_n, \delta_nt_n + p\delta_n^2)\leq p\delta_nt_n + p\delta_n^2\leq 2p\delta_nt_n\leq 2\eps_n,
\]
then by Lemma \ref{lemma:existence_test_infinity_norm}, there exists some absolute constant $C_6>0$, such that for sufficiently large $n$, and for each 
\[
M\in\left[\max\left\{8, \frac{8(\log 2)^2}{C_6^2}\right\}, \frac{2\min(1, 2\|\bSigma_0\|_2)}{\eps_n}\right],
\]
there exists a test function $\phi_n$ such that 
\begin{align}
\label{eqn:testing_typeI_error_infinity_norm}
\expect_0(\phi_n)&\leq 12\exp\left[- \left\{\frac{C_6\sqrt{M}}{C_\mu^2\|\bSigma_0\|_2^2}\min\left(\frac{1}{2},\frac{\|\bSigma_0\|_2^2}{\sqrt{2}}\right) - 6(\beta + 2)\right\}s\log p\right],\\
\label{eqn:testing_typeII_error_infinity_norm}
\expect_\bSigma(1-\phi_n)&\leq 4\exp\left[- \left\{\frac{C_6M}{C_\mu^2\|\bSigma_0\|_2^2}\min\left(\frac{\|\bSigma_0\|_2^2}{8}, \frac{1}{32}\right) - 4(\beta + 2)\right\}s\log p\right]
\end{align}
for all $\bSigma\in \{\|\bSigma - \bSigma_0\|_\infty > M\eps_n\}\cap\calG_n$. Denote $\calV_n = \{\|\bSigma - \bSigma_0\|_\infty \leq M\eps_n\}$. Now we decompose the target integral $\expect_0\{\Pi(\calV_n^c\mid\bY_n)\}$ using \eqref{eqn:prior_concentration_probabilistic_bound} and \eqref{eqn:testing_typeI_error_infinity_norm} as follows:
\begin{align}
\expect_0\{\Pi(\calV_n^c\mid\bY_n)\}&\leq
\expect_0(\phi_n) + \expect_0\left\{(1-\phi_n)\Pi(\calV_n\mid\bY_n)\mathbbm{1}(\calA_n)\right\} + \prob_0(\calA_n^c)\nonumber\\
&\leq 12\exp\left[- \left\{\frac{C_6\sqrt{M}}{C_\mu^2\|\bSigma_0\|_2^2}\min\left(\frac{1}{2},\frac{\|\bSigma_0\|_2^2}{\sqrt{2}}\right) - 6(\beta + 2)\right\}s\log p\right]
\nonumber\\&\quad
 + 2\exp\left\{-\tilde C_3\min\left(1,\|\bSigma_0^{-1}\|_2^{-2}\right)s\log p\right\} 
+ \expect_0\left[(1 - \phi_n)\left\{\frac{N_n(\calV_n^c)}{D_n}\right\}\mathbbm{1}(\calA_n)\right]\nonumber.
\end{align}
Now we focus on the third term on the right-hand side of the preceding display. By \eqref{eqn:evidence_lower_bound}, we obtain
\begin{align}
&\expect_0\left[(1 - \phi_n)\left\{\frac{N_n(\calV_n^c)}{D_n}\right\}\mathbbm{1}(\calA_n)\right]
\leq \frac{\exp\left\{C_{0\lambda}\}s\log p\right\}}{C(\sigma_0^2)}\expect_0\left\{
(1 - \phi_n)\int_{\calV_n^c}\prod_{i=1}^n\frac{p(\by_i\mid\bSigma)}{p(\by_i\mid\bSigma_0)}\Pi(\mathrm{d}\bSigma)
\right\}\nonumber.
\end{align}
Observe that by Fubini's theorem,
\begin{align}
&\expect_0\left\{
(1 - \phi_n)\int_{\calV_n^c}\prod_{i=1}^n\frac{p(\by_i\mid\bSigma)}{p(\by_i\mid\bSigma_0)}\Pi(\mathrm{d}\bSigma)
\right\}\nonumber\\
&\quad\leq 
\expect_0\left\{
(1 - \phi_n)\int_{\calV_n^c\cap \calG_n}\prod_{i=1}^n\frac{p(\by_i\mid\bSigma)}{p(\by_i\mid\bSigma_0)}\Pi(\mathrm{d}\bSigma)
\right\}
+ \expect_0\left\{\int_{\calG_n^c}\prod_{i=1}^n\frac{p(\by_i\mid\bSigma)}{p(\by_i\mid\bSigma_0)}\Pi(\mathrm{d}\bSigma)
\right\}\nonumber\\
&\quad = \int_{\calV_n^c\cap \calG_n}\expect_0\left\{
(1 - \phi_n)\prod_{i=1}^n\frac{p(\by_i\mid\bSigma)}{p(\by_i\mid\bSigma_0)}
\right\}\Pi(\mathrm{d}\bSigma)
+ \int_{\calG_n^c}\left\{\expect_0\prod_{i=1}^n\frac{p(\by_i\mid\bSigma)}{p(\by_i\mid\bSigma_0)}\right\}\Pi(\mathrm{d}\bSigma)\nonumber\\
&\quad \leq \int_{\calV_n^c\cap\calG_n}\expect_{\bSigma}(1-\phi_n)\Pi(\mathrm{d}\bSigma) + 
\Pi(\calG_n^c)\nonumber\\
&\quad\leq 
4\exp\left[- \left\{\frac{C_6M}{C_\mu^2\|\bSigma_0\|_2^2}\min\left(\frac{\|\bSigma_0\|_2^2}{8}, \frac{1}{32}\right) - 4(\beta + 2)\right\}s\log p\right]
 + 
7\exp\left\{
-\min\left(\frac{\beta\kappa}{2},\frac{\beta}{2\mathrm{e}} - 2\gamma\right)s\log p
\right\}\nonumber,
\end{align}
where the testing type II error probability bound \eqref{eqn:testing_typeII_error_infinity_norm} and \eqref{eqn:prior_concentration_upper_bound_infinity_norm} are applied to the last inequality. Then by taking $M = M_\infty = \max\left(M_{\infty1},M_{\infty2}\right)$, where
\begin{align}
\beta &= \max\left[\frac{4}{\kappa}C_{0\lambda}, 2\mathrm{e}\left\{2\gamma + 2C_{0\lambda}\right\}\right],\nonumber\\
M_{\infty1}& = \max\left(\frac{32C_\mu^2\|\bSigma_0\|_2^2}{C_6}, \frac{8C_\mu^2}{C_6}\right)\left\{4(\beta + 2) + 2C_{0\lambda}\right\},\nonumber\\
M_{\infty2}& = \max\left(\frac{4C_\mu^4\|\bSigma_0\|_2^4}{C_6^2}, \frac{2C_\mu^4}{C_6^2}\right)
\left\{C_{0\lambda} + 6(\beta + 2)\right\}^2,\nonumber,
\end{align}
we obtain the following result:
\begin{align}
\expect_0\left[(1 - \phi_n)\left\{\frac{N_n(\calV_n^c)}{D_n}\right\}\mathbbm{1}(\calA_n)\right]
&\leq \frac{4}{C(\sigma_0^2)}\exp\left[- \left\{\frac{C_6M}{C_\mu^2\|\bSigma_0\|_2^2}\min\left(
\frac{\|\bSigma_0\|_2^2}{8},\frac{1}{32}
\right) - 4(\beta + 2) - C_{0\lambda}\right\}s\log p\right]\nonumber\\
&\quad + \frac{7}{C(\sigma_0^2)}
\exp\left[
-\left\{\min\left(\frac{\beta\kappa}{2},\frac{\beta}{2\mathrm{e}} - 2\gamma\right)
 - C_{0\lambda}
\right\}
s\log p
\right]\nonumber\\
&\leq \frac{11}{C(\sigma_0^2)}\exp\left(-C_{0\lambda}s\log p\right).\nonumber
\end{align}
Combining the above results, we finally obtain
\begin{align}
\expect_0\{\Pi(\calV_n^c\mid\bY_n)\}&\leq
\left\{3 + \frac{11}{C(\sigma_0^2)}\right\}\exp\left( - C_{0\lambda}s\log p\right)
+ 2\exp\left\{-\tilde C_3\min\left(1, \left\|\bSigma_0^{-1}\right\|_2^{-2}\right)s\log p\right\}\nonumber\\
&\leq \left\{5 + \frac{11}{C(\sigma_0^2)}\right\}\exp\left\{
-\min\left( C_{0\lambda}, \tilde C_3, \tilde C_3\|\bSigma_0^{-1}\|_2^{-2} \right)s\log p
\right\}\nonumber\\
& = R_0\exp(-C_0s\log p)\nonumber
\end{align}
by taking $C_0 = \min\left\{ C_{0\lambda}, \tilde C_3, \tilde C_3\|\bSigma_0^{-1}\|_2^{-2} \right\}$ and $R_0 = \left\{5 + {11}/{C(\sigma_0^2)}\right\}$. Therefore, there exists some constant $M_\infty$, such that for all sufficiently large $n$, we have
\[
\expect_0\left\{\Pi\left(\|\bSigma-\bSigma_0\|_\infty > M\eps_n\mathrel{\Big|}\bY_n\right)\right\}
\leq
\expect_0\left\{\Pi\left(\|\bSigma-\bSigma_0\|_\infty > M_0\eps_n\mathrel{\Big|}\bY_n\right)\right\}
\leq R_0\mathrm{e}^{-C_0s\log p}
\]
for some absolute constants $C_0$ and $R_0$ depending on $\bLambda_0$ and the hyperparameters only whenever $M\geq M_\infty$. Notice that $C_0$ and $R_0$ remain the same with those appearing in Theorem \ref{thm:posterior_contraction}.

\vspace*{1ex}
\noindent
\textbf{Step 4: Bounding the two-to-infinity norm loss using the Neumann trick. }
Let $\bB\bB\transpose = \bU_\bB\bLambda\bU\transpose_\bB$ be the compact spectral decomposition of $\bB\bB\transpose$. Denote $\mathbf{E} = \bB\bB\transpose - \bU_0\bLambda_0\bU_0\transpose$ to be the ``error'' matrix. 
Clearly, $(\bU_0\bLambda_0\bU_0\transpose + \mathbf{E})\bU_\bB = (\bU_\bB\bLambda\bU_\bB\transpose)\bU_\bB = \bU_\bB\bLambda$ by definition, yielding the matrix Sylvester equation 
\[
\bU_\bB\bLambda - \mathbf{E}\bU_\bB = (\bU_0\bLambda_0\bU_0\transpose)\bU_\bB.
\]
Now consider the events 
\begin{align}
\calU_n = \left\{\|\bSigma - \bSigma_0\|_2 \leq M_0\sqrt{\frac{s\log p}{n}}\right\},
\quad\calV_n = \left\{\|\bSigma - \bSigma_0\|_\infty \leq M_\infty r\sqrt{\frac{s\log p}{n}}\right\}.
\nonumber
\end{align}
Suppose $\bSigma\in\calU_n\cap\calV_n$. 
By the Weyl's inequality, for sufficiently large $n$, we have
\begin{align}
|\sigma^2 - \sigma_0^2|& = |\lambda_{r + 1}(\bSigma) - \lambda_{r + 1}(\bSigma_0)|\leq \max_{k\in[p]}|\lambda_k(\bSigma) - \lambda_k(\bSigma_0)|\leq \|\bSigma - \bSigma_0\|_2\leq M_0\sqrt{\frac{s\log p}{n}},\nonumber\\
\lambda_r(\bLambda)&\geq\lambda_{0r} - |\lambda_{0r} - \lambda_r(\bLambda)|\geq \lambda_{0r} - |(\lambda_{0r} + \sigma_0^2) - \{\lambda_r(\bLambda) + \sigma^2\}| - |\sigma_0^2 - \sigma^2|\nonumber\\
&\geq \lambda_{0r} - \max_{k\in [p]}|\lambda_k(\bSigma) - \lambda_k(\bSigma_0)| -  M_0\sqrt{\frac{s\log p}{n}}\nonumber\\
&\geq \lambda_{0r} - 2M_0\sqrt{\frac{s\log p}{n}}>\max\left\{\frac{\lambda_{0r}}{2}, 2M_0\sqrt{\frac{s\log p}{n}}\right\}\nonumber,\\
\|\mathbf{E}\|_2& \leq \|\bSigma - \bSigma_0\|_2 + \|(\sigma^2 - \sigma_0^2)\eye_p\|_2\leq 
2M_0\sqrt{\frac{s\log p}{n}}.\nonumber
\end{align}
Therefore, the spectra of $\bLambda$ and $\mathbf{E}$ are disjoint, and we can apply the Neumann's trick (see Theorem VII.2.2 in \citealp{bhatia2013matrix}) to expand $\bU_\bB$ in terms of a matrix series:
\begin{align}
\label{eqn:Neumann_series}
\bU_\bB = \sum_{m = 0}^\infty \mathbf{E}^m(\bU_0\bLambda_0\bU_0\transpose)\bU_\bB\bLambda^{-(m + 1)}
\end{align}
Now we proceed to bound $\|\bU_\bB - \bU_0\bW_\bU\|_{2\to\infty}$ using the techniques developed in \cite{cape2018signal}. Write
\begin{align}
\bU_\bB - \bU_0\bW_\bU& = (\bU_\bB\bLambda\bU_\bB\transpose - \bU_0\bLambda_0\bU_0\transpose)\bU_\bB\bLambda^{-1} + \bU_0\bLambda_0(\bU_0\transpose \bU_\bB\bLambda^{-1} - \bLambda_0^{-1}\bU_0\transpose\bU_\bB)
 + \bU_0(\bU_0\transpose\bU_\bB - \bW_\bU)\nonumber\\
& = \mathbf{E}\bU_\bB\bLambda^{-1} + \bU_0\bLambda_0(\bU_0\transpose \bU_\bB\bLambda^{-1} - \bLambda_0^{-1}\bU_0\transpose\bU_\bB) + \bU_0(\bU_0\transpose\bU_\bB - \bW_\bU)\nonumber.
\end{align}
By the CS decomposition and the sine-theta theorem, we see that the third term can be bounded:
\[
\|\bU_0(\bU_0\transpose\bU_\bB - \bW_\bU)\|_{2\to\infty}\leq \|\bU_0\|_{2\to\infty}\|\bU_\bB\bU_\bB\transpose - \bU_0\bU_0\transpose\|_2^2
\leq \frac{4M_0^2C_\mu}{\lambda_{0r}^2}\left(\frac{\sqrt{rs}\log p}{n}\right).
\]
Now we consider the second term. Denote $\mathbf{R} = \bU_0\transpose \bU_\bB\bLambda^{-1} - \bLambda_0^{-1}\bU_0\transpose\bU_\bB$. Then the $(i, j)$-th element of $\mathbf{R}$ can be represented as
\[
r_{k\ell} = (\bU_0)_{*k}\transpose(\bU_\bB)_{*\ell}\left\{\frac{1}{\lambda_{\ell}(\bLambda)} - \frac{1}{\lambda_{0k}}\right\} = \frac{1}{\lambda_{\ell}(\bLambda)\lambda_{0k}}\{\lambda_{0k} - \lambda_{\ell}(\bLambda)\}(\bU_0)_{*k}\transpose(\bU_\bB)_{*\ell}.
\]
Therefore, by defining $\mathbf{H}_1\in\mathbb{R}^{r\times r}$ by $(h_1)_{k\ell} = 1/\{\lambda_{\ell}(\bLambda)\lambda_{0k}\}$, we have
\begin{align}
\|\mathbf{R}\|_2 &= \|\mathbf{H}_1\circ (\bU_0\transpose\bU_\bB\bLambda - \bLambda_0 \bU_0\transpose\bU_\bB)\|_2
\leq r\|\mathbf{H}_1\|_{\max}\|\bU_0\transpose\mathbf{E}\bU_\bB\|_2
\leq r\|\mathbf{H}_1\|_{\max} 2M_0\sqrt{\frac{s\log p}{n}}\nonumber,
\end{align}
where $\circ$ represents the Hadamard matrix product (element-wise product), and $\|\cdot\|_{\max}$ is the maximum of the absolute values of the entries of a matrix. 
Furthermore, using the Weyl's inequality, we have
\[
\|\mathbf{H}_1\|_{\max}\leq \frac{1}{\lambda_r(\bLambda)\lambda_{0r}}\leq \frac{2}{\lambda_{0r}^2}
\]
for sufficiently large $n$, since $\|\bLambda^{-1}\|_2 = 1/\lambda_r(\bLambda)\leq 2/\lambda_{0r}$ for sufficiently large $n$. Hence, the second term can be bounded:
\[
\|\bU_0\bLambda_0(\bU_0\transpose \bU_\bB\bLambda^{-1} - \bLambda_0^{-1}\bU_0\transpose\bU_\bB)\|_{2\to\infty}
= \|\bU_0\|_{2\to\infty}\|\bLambda_0\|_2\|\mathbf{R}\|_2\leq \frac{4 M_0C_\mu\lambda_{01}}{\lambda_{0r}^2}\sqrt{\frac{r^3\log p}{n}}.
\]
Now we focus on the first term. By the Neumann matrix series \eqref{eqn:Neumann_series}, we have
\begin{align*}
\|\mathbf{E}\bU_\bB\bLambda^{-1}\|_{2\to\infty}
& = \left\|
\sum_{m = 1}^\infty\mathbf{E}^m(\bU_0\bLambda_0\bU_0\transpose)\bU_\bB\bLambda^{-(m + 1)}
\right\|_{2\to\infty}\nonumber\\
&\leq \|\mathbf{E}\bU_0\|_{2\to\infty}\|\bLambda_0\|_2\|\|\bLambda^{-1}\|_2^2 + 
\sum_{m = 2}^\infty \|\mathbf{E}\|_2^m\|\bLambda_0\|_2\|\bLambda^{-1}\|_2^{(m + 1)}\nonumber\\
&\leq \|\mathbf{E}\bU_0\|_{2\to\infty}\left\{\frac{\lambda_{01}}{\lambda_r(\bLambda)^2}\right\}
 + \left\{\frac{\lambda_{01}}{\lambda_r(\bLambda)}\right\}\frac{\|\mathbf{E}\|_2^2\|\bLambda^{-1}\|_2^2}{1 - \|\mathbf{E}\|_2\|\bLambda^{-1}\|_2}\nonumber\\
&\leq 4\|\mathbf{E}\|_\infty \|\bU_0\|_{2\to\infty}\frac{\lambda_{01}}{\lambda_{0r}^2} + \frac{8\lambda_{01}}{\lambda_{0r}^3}\|\mathbf{E}\|_2^2\nonumber\\
&\leq \frac{4M_\infty C_\mu\lambda_{01}}{\lambda_{0r}^2} \sqrt{\frac{r^3\log p}{n}} + \frac{8M_0^2\lambda_{01}}{\lambda_{0r}^3}\frac{s\log p}{n}\nonumber
\end{align*}
for sufficiently large $n$. In other words, there exists some constant $M_{2\to\infty}$ depending on $M_0$, $M_\infty$, $\bLambda_0$, and hyperparameters, such that
\[
\|\bU_\bB - \bU_0\bW_\bU\|_{2\to\infty} \leq M_{2\to\infty}\max\left(\sqrt{\frac{r^3\log p}{n}},\frac{s\log p}{n}\right)
\]
for sufficiently large $n$ whenever $\bSigma\in\calU_n\cap\calV_n$. Therefore, 
\begin{align}
&\expect_0\left[
\Pi\left\{\|\bU_\bB-\bU_0\bW_\bU\|_{2\to\infty} > M
\max\left(\sqrt{\frac{r^3\log p}{n}},\frac{s\log p}{n}\right)
\right\}
\right]\nonumber\\
&\quad\leq\expect_0\left[
\Pi\left\{\|\bU_\bB-\bU_0\bW_\bU\|_{2\to\infty} > M_{2\to\infty}
\max\left(\sqrt{\frac{r^3\log p}{n}},\frac{s\log p}{n}\right)
\right\}
\right]\nonumber\\
&\quad\leq \expect_0\left\{\Pi(\calU_n^c\mid\bY_n) + \Pi(\calV_n^c\mid\bY_n)\right\}\leq 2R_0\mathrm{e}^{-C_0s\log p}\nonumber,
\end{align}
for sufficiently large $n$ when $M\geq M_{2\to\infty}$, completing the proof.
\end{proof}

\begin{proof}[{\bf Proof of Theorem \ref{thm:risk_bound_point_estimate}}]
For any random matrix $\bX\in\mathbb{R}^{p\times p}$, we have
\[
\|\bE(\bX)\|_2^2 = \max_{\|\bu\|_2 = 1}\{\expect(\bX\bu)\}
\transpose\{\expect(\bX\bu)\} \leq \expect\|\bX\|_2^2
\]
by the Jensen's inequality. 
Now take $\bX = \bU_\bB\bU_\bB\transpose - \bU_0\bU_0\transpose$. Denote the event $\calU_n = \{\|\bU_\bB\bU_\bB\transpose - \bU_0\bU_0\transpose\|_2\leq M_0\eps_n\}$. Invoking the posterior contraction \eqref{eqn:posterior_contraction_Projection}, we have
\begin{align}
\expect_0\left(\left\|\widehat\bOmega - \bU_0\bU_0\transpose\right\|_2^2\right)
& = \expect_0\left\{\left\|\int\left(\bU_\bB\bU_\bB\transpose - \bU_0\bU_0\transpose\right)\Pi(\mathrm{d}\bB\mid\bY_n) \right\|_2^2\right\}\nonumber\\
& \leq 
\expect_0\left\{\int_{\calU_n}\left\|\left(\bU_\bB\bU_\bB\transpose - \bU_0\bU_0\transpose\right) \right\|_2^2\Pi(\mathrm{d}\bB\mid\bY_n)\right\}
\nonumber\\&\quad
 + \expect_0\left\{\int_{\calU_n^c}\left\|\left(\bU_\bB\bU_\bB\transpose - \bU_0\bU_0\transpose\right) \right\|_2^2\Pi(\mathrm{d}\bB\mid\bY_n)\right\}\nonumber\\
&\leq M_0^2\eps_n^2 + \left(\sup_{\bU\in\mathbb{O}(p, r)}\|\bU\bU\transpose - \bU_0\bU_0\transpose\|_2^2\right)\expect_0\left\{\Pi(\calU_n^c\mid\bY_n)\right\}\nonumber\\
&\leq \frac{4M_0^2}{\lambda_{0r}^2}\eps_n^2 + 4R_0\exp(-C_0s\log p).\nonumber
\end{align}
Since for sufficiently large $n$, we have
\[
\eps_n^2 = \frac{s\log p}{n} = \exp\left(\log s + \log\log p - \log n\right)
\geq\exp(-C_0s\log p),
\]
we obtain
\[
\expect_0\left(\left\|\widehat\bOmega - \bU_0\bU_0\transpose\right\|_2\right)
\leq\left\{\expect_0\left(\left\|\widehat\bOmega - \bU_0\bU_0\transpose\right\|_2^2\right)\right\}^{1/2}
\leq \eps_n\left(\frac{2M_0}{\lambda_{0r}} + 2\sqrt{R_0}\right).
\]
Since the columns of $\widehat\bU$ are the leading $r$-eigenvectors of $\widehat\bOmega$ corresponding to $\lambda_1(\widehat\bOmega),\ldots,\lambda_r(\widehat\bOmega)$, \emph{i.e.}, $\widehat\bOmega\widehat\bU_{*k} = \lambda_k(\widehat\bOmega)\widehat\bU_{*k}$, then applying the sine-theta theorem (Theorem \ref{thm:Davis_Kahan_sin_theta}) yields
\[
\expect_0 \left(\|\widehat\bU\widehat\bU\transpose - \bU_0\bU_0\transpose\|_2\right)\leq \left(\frac{4M_0}{\lambda_{0r}} + 4\sqrt{R_0}\right)\eps_n.
\]

\end{proof}


\section{Proofs of Results in Section \ref{sub:properties_of_the_matrix_spike_and_slab_lasso_priors}} 
\label{sec:proofs_of_results_in_section_3_1}
\begin{customlemma}{3.1}
Suppose $\bB\sim\mathrm{MSSL}_{p\times r}(\lambda, 1/p^2, p^{1+\kappa})$ for some fixed positive $\lambda$ and $\kappa$, and $\bB_0\in\mathbb{R}^{p\times r}$ is jointly $s$-sparse, where $1\leq r\leq s\leq p/2$. Then for small values of $\eta\in(0,1)$ with $\eta\geq1/p^\gamma$ for some $\gamma>0$, it holds that
\begin{align}
\Pi\left(\|\bB-\bB_0\|_{\mathrm{F}}<\eta\right)\geq\exp\left[-C_1\max\left\{\lambda^2s\|\bB_0\|_{2\to\infty}^2, sr\left|\log\frac{\lambda\eta}{\sqrt{sr}}\right|,s\log p\right\}\right]\nonumber
\end{align}
for some absolute constant $C_1>0$.
\end{customlemma}
\begin{proof}[{\bf Proof of Lemma \ref{lemma:prior_concentration}}]
Recall that $\pi(b_{jk}\mid\xi_j = 1)=(\lambda/2)\mathrm{e}^{-\lambda|b_{jk}|}$ follows the Laplace distribution with scale parameter $1/\lambda$, and that the Laplace distribution can be alternatively represented as a normal-variance mixture distribution as follows:
\[
(b_{jk}\mid\xi_j = 1, \phi_{jk})\sim\mathrm{N}\left(0,\frac{\phi_{jk}}{\lambda^2}\right),\quad\text{and}\quad \phi_{jk}\sim\mathrm{Exp}(1/2).
\]
On the other hand, by the prior construction $(|b_{jk}|\mid\xi_j = 0,\lambda_0)\sim\mathrm{Gamma}(1/r, \lambda_0 + \lambda)$, it follows that $(\|\bB_{j*}\|_1\mid \xi_j = 0,\lambda_0)\sim\mathrm{Exp}(\lambda_0 + \lambda)$. 
Denote $S_0 = \mathrm{supp}(\bB_0)$. Now we construct the following event
\[
\calB = \bigcap_{j\in S_0}\left\{\xi_j=1, 1\leq\phi_{jk}\leq 2, k \in[r]\right\}\cap \bigcap_{j\in S_0^c}\{\xi_j = 0\}\cap\left\{\lambda_0 + \lambda\geq\frac{\sqrt{2p}}{\eta}\left(\log\frac{p}{s}\right)\right\}
\]
and denote $\bphi = [\phi_{jk}:j\in S_0, k\in[r]]_{s\times r}$. 

\vspace*{1ex}
\noindent
\textbf{Step 1: Conditioning on the event $\calB$. }For any $(\bphi,\bxi,\lambda_0)\in\calB$, we use a union bound to derive
\begin{align}
\Pi\left(\|\bB-\bB_0\|_{\mathrm{F}}<\eta\mathrel{\big|}\bphi,\bxi,\lambda_0\right)
&\geq \Pi\left(\sum_{j\in S_0}\|\bB_{j*} - \bB_{0j*}\|_2^2<\frac{\eta^2}{2}\mathrel{\bigg|}\bphi,\bxi,\lambda_0\right)\prod_{j\in S_0^c}\Pi\left(\|\bB_{j*}\|_1\leq\frac{\eta}{\sqrt{2p}}\mathrel{\bigg|}\bphi,\bxi,\lambda_0\right)\nonumber\\
& \geq \Pi\left(\sum_{j\in S_0}\|\bB_{j*} - \bB_{0j*}\|_2^2<\frac{\eta^2}{2}\mathrel{\bigg|}\bphi,\bxi,\lambda_0\right)\prod_{j\in S_0^c}\left[1-\exp\left\{-\frac{(\lambda_0 + \lambda)\eta}{\sqrt{2p}}\right\}\right]\nonumber\\
& \geq \Pi\left(\sum_{j\in S_0}\|\bB_{j*} - \bB_{0j*}\|_2^2<\frac{\eta^2}{2}\mathrel{\bigg|}\bphi,\bxi,\lambda_0\right)\left\{\left(1-\frac{s}{p}\right)^{p/s}\right\}^s\nonumber\\
& \geq \Pi\left(\sum_{j\in S_0}\|\bB_{j*} - \bB_{0j*}\|_2^2<\frac{\eta^2}{2}\mathrel{\bigg|}\bphi,\bxi,\lambda_0\right)\exp\{-\log(2\mathrm{e})s\}\nonumber,
\end{align}
where the last inequality is due to the fact that $(1-x)^{1/x}\geq\exp\{-\log(2\mathrm{e})\}$ when $x\in[0,1/2]$. It then suffices to provide a lower bound for the first factor. We take advantage of the fact that $(b_{jk}\mid\xi_j = 1,\phi_{jk})\sim\mathrm{N}(0,\phi_{jk}/\lambda^2)$ and apply Anderson's lemma (see, for example, Lemma 1.4 in the supporting document of \citealp{pati2014posterior}) together with the union bound to derive
\begin{align}
&\Pi\left(\sum_{j\in S_0}\|\bB_{j*} - \bB_{0j*}\|_2^2<\frac{\eta^2}{2}\mathrel{\bigg|}\bphi,\bxi,\lambda_0\right)
\nonumber\\
&\quad \geq\exp\left(-\frac{1}{2}\sum_{j\in S_0}\sum_{k=1}^r\frac{\lambda^2b_{0jk}^2}{\phi_{jk}}\right)\Pi\left(\sum_{j\in S_0}\sum_{k=1}^rb_{jk}^2<\frac{\eta^2}{2}\mathrel{\bigg|}\bphi,\bxi,\lambda_0\right)\nonumber\\
&\quad  \geq\exp\left(-\frac{1}{2}\sum_{j\in S_0}\sum_{k=1}^r{\lambda^2b_{0jk}^2}\right)\prod_{j\in S_0}\prod_{k=1}^r\Pi\left(\frac{\lambda^2 b_{jk}^2}{\phi_{jk}}<\frac{\lambda^2\eta^2}{2\phi_{jk}rs}\mathrel{\bigg|}\bphi,\bxi,\lambda_0\right)\nonumber\\
&\quad  \geq\exp\left(-\frac{\lambda^2}{2}\sum_{j\in S_0}\|\bB_0\|_{2\to\infty}^2\right)\prod_{j\in S_0}\prod_{k=1}^r\left\{2\Phi\left(\frac{\lambda\eta}{2\sqrt{rs}}\right) - 1\right\}\nonumber\\
&\quad  \geq\exp\left(-\frac{\lambda^2}{2}s\|\bB_0\|_{2\to\infty}^2 - sr - sr\left|\log\frac{\lambda\eta}{2\sqrt{rs}}\right|\right)\nonumber,
\end{align}
where the fact that $\log\{2\Phi(x) - 1\}\geq -1-\log(x)$ for small $x>0$ is applied in the last inequality.

\noindent
\textbf{Step 2: Control the prior probability of the event $\calB$. }
Recall that
\[
\calB = \bigcap_{j\in S_0}\left\{\xi_j=1, 1\leq\phi_{jk}\leq 2, k \in[r]\right\}\cap \bigcap_{j\in S_0^c}\{\xi_j = 0\}\cap\left\{\lambda_0 + \lambda\geq\frac{\sqrt{2p}}{\eta}\left(\log\frac{p}{s}\right)\right\}.
\]
Then conditioning on $\theta$, we obtain by construction
\begin{align}
\Pi(\calB) 
&= \left\{\prod_{j\in S_0}\prod_{k = 1}^r\Pi(1\leq\phi_{jk}\leq2)\right\}\left\{\int_0^1\theta^s(1-\theta)^{p-s}\Pi(\mathrm{d}\theta)\right\}\Pi\left\{\lambda_0 + \lambda\geq\frac{\sqrt{2p}}{\eta}\left(\log\frac{p}{s}\right)\right\}\nonumber\\
&\geq \exp(-3sr)\left\{\int_0^1\theta^s(1-\theta)^{p-s}\Pi(\mathrm{d}\theta)\right\}\Pi\left\{\lambda_0\geq\frac{\sqrt{2p}}{\eta}\left(\log\frac{p}{s}\right)\right\}\nonumber.
\end{align}
We first focus on the third factor. By assumption $\eta>1/p^\gamma$ for some $\gamma>0$, implying that
\[
\Pi\left\{\lambda_0 > \frac{\sqrt{2p}}{\eta}\left(\log\frac{p}{s}\right)\right\}\geq \Pi(\lambda_0> p^{\gamma}) 
= 1-\frac{1}{\Gamma(1/p^2)}\int_{1/p^\gamma}^\infty x^{1/p^2 - 1}\mathrm{e}^{-x}\mathrm{d}x.
\]
Using an inequality for the incomplete Gamma function \citep{alzer1997some}:
\[
\int_{4\delta}^\infty x^{-1}\mathrm{e}^{-x/2}\mathrm{d}x = \int_{2\delta}^\infty x^{-1}\mathrm{e}^{-x}\mathrm{d}x\leq\log\frac{1}{\delta}
\]
for small values of $\delta>0$, and the fact that $\Gamma(x)\geq 1$ when $0<x\leq 1$, we have:
\[
1 - \frac{1}{\Gamma(1/p^2)}\int_{1/p^\gamma}^\infty x^{1/p^2-1}\mathrm{e}^{-x}\mathrm{d}x\geq 1 - \int_{4/(4p^\gamma)}^\infty x^{-1}\mathrm{e}^{-x/2}\mathrm{d}x\geq 1 - \log\frac{1}{4p^\gamma}\geq\mathrm{e}^{-1}
\]
for sufficiently large $p$ (sufficiently small $\eta$). Next we consider the second factor. Write
\begin{align}
\int_0^1\theta^s(1-\theta)^{p-s}\Pi(\mathrm{d}\theta)
&\geq\int_{s/p^{1+\kappa}}^{2s/p^{1+\kappa}}\exp\left\{-s\log\left(\frac{1-\theta}{\theta}\right) - p\log\left(\frac{1}{1-\theta}\right)\right\}\Pi(\mathrm{d}\theta)\nonumber\\
&\geq\int_{s/p^{1+\kappa}}^{2s/p^{1+\kappa}}\exp\left\{-s\log\left(\frac{p^{1+\kappa}}{s}\right) - p\log\left(1+\frac{s}{p^{1+\kappa}-s}\right)\right\}\Pi(\mathrm{d}\theta)\nonumber\\
&\geq\exp\left\{-(\kappa + 1)s\log p - 2s\right\}\Pi\left(\frac{s}{p^{1+\kappa}}\leq \theta\leq \frac{2s}{p^{1+\kappa}}\right)\nonumber
\end{align}
for sufficiently large $p$. Observe that
\begin{align}
\Pi\left(\frac{s}{p^{1+\kappa}}\leq \theta\leq \frac{2s}{p^{1+\kappa}}\right)
& = \Pi\left(\frac{p^{1+\kappa} - 2s}{p^{1+\kappa}}\leq 1-\theta\leq \frac{p^{1+\kappa} - s}{p^{1+\kappa}}\right)\nonumber\\
& \geq \frac{1}{4p^{1+\kappa}}\left(1 - \frac{2s}{p^{1+\kappa}}\right)^{4p^{1+\kappa}-1}\left(\frac{s}{p^{1+\kappa}}\right)\nonumber\\
& \geq \frac{1}{4p^{1+\kappa}}\left\{\left(1 - \frac{2s}{p^{1+\kappa}}\right)^{p^{1+\kappa}/(2s)}\right\}^{8s}\left(\frac{s}{p^{1+\kappa}}\right)\nonumber\\
& \geq \exp\left\{-(2\kappa + 19)s\log p\right\}\nonumber,
\end{align}
we conclude that
$\Pi(\calB)\geq \exp\left\{
-3sr - 1 - (3\kappa + 22)s\log p
\right\}$.

\vspace*{1ex}
\noindent\textbf{Lower bound prior concentration by restricting over $\calB$: }
We complete the proof by restricting over the event $\calB$ as follows: 
\begin{align}
\Pi\left(\|\bB-\bB_0\|_{\mathrm{F}}<\eta\right)
&\geq \expect_\Pi\left\{
\Pi\left(\|\bB-\bB_0\|_{\mathrm{F}}<\eta\mathrel{\Big|}\bphi,\bxi,\lambda_0\right)\mathbbm{1}(\calB)
\right\}\nonumber\\
& \geq \left\{\inf_{(\bphi,\bxi,\lambda_0)\in \calB}
\Pi\left(\|\bB-\bB_0\|_{\mathrm{F}}<\eta\mathrel{\Big|}\bphi,\bxi,\lambda_0\right)
\right\}\Pi(\calB)\nonumber\\
& \geq\exp\left[-C_1\max\left\{\lambda^2s\|\bB_0\|_{2\to\infty}^2, sr\left|\log\frac{\lambda\eta}{\sqrt{rs}}\right|, s\log p\right\}\right],\nonumber
\end{align}
where $C_1>0$ is some absolute constant. 
\end{proof}

\begin{customlemma}{3.2}
Suppose $\bB\sim\mathrm{MSSL}_{p\times r}(\lambda, 1/p^2, p^{1+\kappa})$ for some fixed positive $\lambda$ and $\kappa\leq 1$, $1\leq r\leq p$. Let $\delta\in(0,1)$ be a small number with $\delta>1/p^\gamma$ for some $\gamma>0$, and let $s$ be an integer such that $(s\log p)/p$ is sufficiently small. Then for any $\beta > 4\mathrm{e}\gamma$, it holds that
\[
\Pi\left(|\mathrm{supp}_\delta(\bB)|>\beta s\right)\leq 2\exp\left\{-\min\left(\frac{\beta\kappa}{2},\frac{\beta}{\mathrm{2e}} - 2\gamma\right)s\log p\right\}.
\]
\end{customlemma}

\begin{proof}[{\bf Proof of Lemma \ref{lemma:prior_sparsity}}]
Recall that by construction, $(\|\bB_{j*}\|_1\mid\xi_j = 1)\sim\mathrm{Gamma}(r, \lambda)$ and $(\|\bB_{j*}\|_1\mid\xi_j = 0, \lambda_0)\sim\mathrm{Exp}(\lambda_0 + \lambda)$, and $(\xi_j\mid\theta)\sim\mathrm{Bernoulli}(\theta)$ independently for each $j\in[p]$. Then with $\bxi$ integrated out, we have, independently for each $j\in[p]$,
\[
\pi(\|\bB_{j*}\|_1\mid\theta,\lambda_0) = (1-\theta)(\lambda_0 + \lambda)\mathrm{e}^{-(\lambda_0 + \lambda)\|\bB_{j*}\|_1} + \theta\frac{\lambda^r}{\Gamma(r)}\|\bB_{j*}\|_1^{r-1}\mathrm{e}^{-\lambda\|\bB_{j*}\|_1}.
\]
Therefore, with $\lambda_0$ integrated out, for any $\delta>1/p^\gamma$, we obtain
\begin{align}
\Pi(\|\bB_{j*}\|_1>\delta\mid\theta)&\leq(1-\theta)\frac{1}{\Gamma(1/p^2)}\int_0^\infty \lambda_0^{-1/p^2 - 1}\mathrm{e}^{-1/\lambda_0}\mathrm{e}^{-(\lambda_0 + \lambda)\delta}\mathrm{d}\lambda_0 + \theta\nonumber\\
&\leq \frac{2}{\mathrm{e}^{p}}\int_0^\infty u^{1/p^2 - 1}\exp\left(-\frac{\delta}{u}-u\right)\mathrm{d}u + \theta\nonumber,
\end{align}
where the last inequality is due to the change of variable $u = 1/\lambda_0$ and the fact that $\Gamma(1/p^2)\geq \mathrm{e}^{p}/2$ for sufficiently large $p$. Now we break down the integral in the preceding display as follows:
\begin{align}
\int_0^\infty u^{1/p^2 - 1}\exp\left(-\frac{\delta}{u}-u\right)\mathrm{d}u \leq
\int_0^{4\delta} u^{1/p^2 - 1}\exp\left(-\frac{\delta}{u}\right)\mathrm{d}u + 
\int_{4\delta}^\infty u^{1/p^2 - 1}\exp\left(-{u}\right)\mathrm{d}u.\nonumber
\end{align}
For the first term, we observe that the function $u\mapsto (1/p^2 - 1)\log u - \delta/u$ achieves the maximum at $u = \delta /(1 - p^{-2})$, and therefore, for sufficiently large $p$ (small $\delta$)
\[
\int_0^{4\delta} u^{1/p^2 - 1}\exp\left(-\frac{\delta}{u}\right)\mathrm{d}u
\leq 4\delta\exp\left\{\left(1-\frac{1}{\mathrm{e}^p}\right)\left(\log\frac{1 - p^{-2}}{\delta}\right)\right\}\leq 4\delta^{1/p^2}\leq \log\frac{1}{\delta}.
\]
For the second term, we apply the technique developed by \cite{doi:10.1080/01621459.2014.960967} to derive
\[
\int_{4\delta}^\infty u^{1/p^2 - 1}\exp\left(-{u}\right)\mathrm{d}u
\leq \int_{4\delta}^\infty u^{-1}\mathrm{e}^{-u/2}\mathrm{d}u\leq\log\frac{1}{\delta},
\]
where the inequality for incomplete Gamma function due to \cite{alzer1997some} is applied. Therefore, for any $\theta$ in the event $\{\theta<A_1s\log p/p^{1+\kappa}\}$ for some constant $A_1$ to be determined later, we obtain
\[
\Pi(\|\bB_{j*}\|_1>\delta\mid\theta)\leq \frac{4}{\mathrm{e}^{p}}\left(\log\frac{1}{\delta}\right) + \theta
\leq\frac{4\gamma\log p + A_1s\log p}{p^{1 + \kappa}}\leq \frac{s\log p}{p}\left(\frac{A_1 + 4\gamma}{p^\kappa}\right).
\]
A version of the Chernoff's inequality for binomial distributions states that \citep{hagerup1990guided}
\[
\prob(X>ap)\leq\left\{\left(\frac{q}{a}\right)^a\exp(a)\right\}^p\quad\text{if }X\sim\mathrm{Binomial}(p, q)\text{ and }q\leq a<1.
\]
Then over the event $\{\theta<A_1s\log p/p^{1+\kappa}\}$, we have
\begin{align}
\Pi(|\mathrm{supp}_\delta(\bB)|>\beta s\mid\theta)
&\leq\exp\left[-\beta s\left\{\log\frac{\beta}{\mathrm{e}(A_1+4\gamma)\log p} + \kappa\log p\right\}\right]
=\exp\left(-\frac{1}{2}\beta\kappa s\log p\right)\nonumber
\end{align}
by taking $A_1 = \beta/\mathrm{e} - 4\gamma$, $q = \Pi(\|\bB_{j*}\|_1>\delta\mid\theta)\leq (A_1 + 4\gamma)s\log p/p^{1+\kappa}$, and $a = \beta s/p$. 
Observe that for sufficiently small $x$, 
$\left(1 - x\right)^{1/x}\leq\mathrm{e}^{-1/2}$. 
Then we integrate with respect to $\Pi(\mathrm{d}\theta)$ and proceed to compute
\begin{align}
\Pi(|\mathrm{supp}_\delta(\bB)|>\beta s) &= \int_0^1 \Pi(|\mathrm{supp}_\delta(\bB)|>\beta s\mid\theta)\Pi(\mathrm{d}\theta)\nonumber\\
&\leq \int_0^{A_1s\log p/p^{1+\kappa}}\Pi(|\mathrm{supp}_\delta(\bB)|>\beta s\mid\theta)\Pi(\mathrm{d}\theta) + \Pi\left(\theta > \frac{A_1s\log p}{p^{1+\kappa}}\right)\nonumber\\
&\leq \exp\left(-\frac{1}{2}\beta\kappa s\log p \right) + \left\{\left(1 - \frac{A_1s\log p}{p^{1+\kappa}}\right)^{p^{1+\kappa}/(A_1s\log p)}\right\}^{A_1s\log p}\nonumber\\
&\leq \exp\left(-\frac{1}{2}\beta \kappa s\log p \right) + \exp\left(- \frac{A_1}{2}s\log p\right)\nonumber\\
&\leq 2\exp\left\{-\min\left(\frac{\beta\kappa}{2},\frac{\beta}{\mathrm{2e}} - 2\gamma\right)s\log p\right\}\nonumber,
\end{align}
and the proof is thus completed. 
\end{proof}

\begin{customlemma}{3.3}
Suppose $\bB\sim\mathrm{MSSL}_{p\times r}(\lambda, 1/p^2, p^{1+\kappa})$ for some fixed positive $\lambda$ and $\kappa<1$, and $\bB_0\in\mathbb{R}^{p\times r}$ is jointly $s$-sparse, where $r\log n\lesssim \log p$, and $(s\log p)/p$ is sufficiently small. Let $(\delta_n)_{n=1}^\infty$ and $(t_n)_{n=1}^\infty$ be positive sequences such that $1/p^\gamma\leq\delta_n\leq 1$ and $t_n/(sr) \to \infty$. Then for sufficiently large $n$ and for all $\beta >4\mathrm{e}\gamma$, it holds that
\begin{align}
&\Pi\left[\sum_{j=1}^p\|\bB_{j*}\|_1\mathbbm{1}\{j\in\mathrm{supp}_{\delta_n}(\bB)\cup\mathrm{supp}(\bB_0)\}\geq t_n\right]\nonumber\\
&\quad\leq 2\exp\left[-C_2\min\left\{\left(\frac{t_n}{\beta sr}\right)^2,\left(\frac{t_n}{r}\right)^2,\frac{t_n}{r}\right\}\right] + 
3\exp\left\{ - \min\left(\frac{\beta\kappa}{2},\frac{\beta}{\mathrm{2e}} - 2\gamma\right)s\log p\right\}\nonumber
\end{align}
for some absolute constant $C_2>0$. 
\end{customlemma}

\begin{proof}[{\bf Proof of Lemma \ref{lemma:prior_deviation}}]
To proof Lemma \ref{lemma:prior_deviation}, we need the following technical results regarding moments of Gamma mixture distributions, the proof of which is deferred to Section \ref{sec:remaining_technical_proofs}.
\begin{lemma}
\label{lemma:Gamma_subexponential_norm}
Suppose that $w$ follows a mixture of an exponential distribution $\mathrm{Exp}(\lambda_0)$ and a Gamma distribution $\mathrm{Gamma}(r,\lambda)$, with mixing weights $1-\theta$ and $\theta$, respectively. Let $\xi = \mathbbm{1}(w>\delta)$, where $\delta$ is some sufficiently small constant such that $\Gamma(r)\leq 2\Gamma(r,\lambda\delta)$, and $\Gamma(r,\delta) = \int_\delta^\infty w^{r-1}\mathrm{e}^{-w}\mathrm{d}w$ is the (upper) incomplete Gamma function. Then the moments of $w$ satisfy
\[
\sup_{m\geq1}\frac{1}{m}\left\{E(w^m\mid\xi = 1)\right\}^{1/m}\leq 2\delta+\frac{2}{\lambda_0}+\frac{2(r+1)}{\lambda}\quad\text{and}\quad
\sup_{m\geq1}\frac{1}{m}\{E(w^m)\}^{1/m}\leq\frac{1}{\lambda_0}+\frac{r+1}{\lambda}.
\]
Furthermore, if $\theta\leq \mathrm{e}^{-r}$, then the moments of $w$ satisfies
\[
\sup_{m\geq1}\frac{1}{m}\{E(w^m)\}^{1/m}\leq\frac{1}{\lambda_0}+\frac{1}{\lambda}.
\]
\end{lemma}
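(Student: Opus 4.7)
The plan is to prove all three inequalities by computing the relevant moments in closed form, applying an $L^m$-norm subadditivity on mixtures together with Minkowski's inequality, and then estimating $(m!)^{1/m}$ and $\{\Gamma(r+m)/\Gamma(r)\}^{1/m}$ by elementary factorial bounds. Throughout, write $\|X\|_m = \{\expect|X|^m\}^{1/m}$. For any two-component mixture $w$ with mixing weights $p_1,p_2\in[0,1]$ and components $X_1,X_2\geq 0$, the identity $\expect(w^m) = p_1\expect(X_1^m)+p_2\expect(X_2^m)$ combined with $(a+b)^{1/m}\leq a^{1/m}+b^{1/m}$ for $a,b\geq0$ (concavity of $x^{1/m}$) and $p_j\leq 1$ yields the one-line tool $\|w\|_m\leq \|X_1\|_m+\|X_2\|_m$ that will drive every step.

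For the unconditional moments, a direct computation gives $\expect(w^m) = (1-\theta)m!/\lambda_0^m + \theta\,\Gamma(r+m)/\{\Gamma(r)\lambda^m\}$. Mixture subadditivity then yields $\{\expect(w^m)\}^{1/m}\leq (m!)^{1/m}/\lambda_0 + \theta^{1/m}\{\Gamma(r+m)/\Gamma(r)\}^{1/m}/\lambda$. I will bound $(m!)^{1/m}\leq m$ via $m!\leq m^m$ and $\Gamma(r+m)/\Gamma(r)=\prod_{k=0}^{m-1}(r+k)\leq (r+m)^m$, so $\{\expect(w^m)\}^{1/m}\leq m/\lambda_0 + \theta^{1/m}(r+m)/\lambda$. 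Dividing by $m$ and using $(r+m)/m = 1+r/m\leq r+1$ for $m\geq 1$ gives the second bound $1/\lambda_0+(r+1)/\lambda$. For the refined bound under $\theta\leq \mathrm{e}^{-r}$, note that $\theta^{1/m}(r+m)/m\leq \mathrm{e}^{-r/m}(1+r/m)$; since $y\mapsto \mathrm{e}^{-y}(1+y)$ has derivative $-y\mathrm{e}^{-y}\leq 0$ and value $1$ at $y=0$, it is bounded by $1$, which removes the $r$-dependence and delivers $1/\lambda_0+1/\lambda$.

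For the conditional moment $\expect(w^m\mid \xi=1)$, the key observation is that the law of $w$ given $w>\delta$ is itself a mixture of the two conditioned components, with weights $\{(1-\theta)\mathrm{e}^{-\lambda_0\delta}\}/\Pi(\xi=1)$ and $\{\theta\Gamma(r,\lambda\delta)/\Gamma(r)\}/\Pi(\xi=1)$, both dominated by $1$. By the memoryless property, $(w\mid\mathrm{Exp}(\lambda_0), w>\delta)\stackrel{d}{=}\delta+V$ with $V\sim\mathrm{Exp}(\lambda_0)$, so Minkowski gives $\|\delta+V\|_m\leq \delta + (m!)^{1/m}/\lambda_0\leq \delta+m/\lambda_0$. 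For the Gamma component, substituting $u=\lambda w$ shows $\expect(w^m\mid\mathrm{Gamma}(r,\lambda), w>\delta) = \Gamma(r+m,\lambda\delta)/\{\lambda^m\Gamma(r,\lambda\delta)\}$; using $\Gamma(r+m,\lambda\delta)\leq\Gamma(r+m)$ and the hypothesis $\Gamma(r)\leq 2\Gamma(r,\lambda\delta)$, this is at most $2\Gamma(r+m)/\{\Gamma(r)\lambda^m\}$, whose $m$-th root is bounded by $2(r+m)/\lambda$. Combining via mixture subadditivity produces $\{\expect(w^m\mid\xi=1)\}^{1/m}\leq \delta+m/\lambda_0 + 2(r+m)/\lambda$, and dividing by $m$ and invoking $(r+m)/m\leq r+1$ yields the stated bound $2\delta + 2/\lambda_0 + 2(r+1)/\lambda$ with room to spare.

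I do not foresee a substantive obstacle; each step is a routine calculation with standard moment formulas and elementary inequalities. The only points requiring mild care are (i) verifying that conditioning on $\{w>\delta\}$ preserves the two-component mixture structure so that the subadditivity tool applies cleanly at both mixture layers, and (ii) invoking the hypothesis on $\delta$ at exactly one place—lower-bounding $\Gamma(r,\lambda\delta)$ by $\Gamma(r)/2$ inside the conditional Gamma moment. Everything else reduces to the two bookkeeping inequalities $(m!)^{1/m}\leq m$ and $\{\Gamma(r+m)/\Gamma(r)\}^{1/m}\leq r+m$, together with the scalar inequality $\mathrm{e}^{-y}(1+y)\leq 1$ used only in the refined case.
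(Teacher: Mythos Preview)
Your proposal is correct and follows essentially the same approach as the paper: both compute the mixture moments explicitly, apply the subadditivity $(a+b)^{1/m}\leq a^{1/m}+b^{1/m}$, bound $(m!)^{1/m}\leq m$ and $\{\Gamma(r+m)/\Gamma(r)\}^{1/m}\leq r+m$, and use $\mathrm{e}^{-r/m}(1+r/m)\leq 1$ for the refined case. Your treatment of the conditional exponential component via Minkowski on $\delta+V$ is slightly cleaner than the paper's integral-splitting argument (which gets $2\delta+2m/\lambda_0$ instead of your $\delta+m/\lambda_0$), but the logical structure is the same.
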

\noindent
Denote $S_0 = \mathrm{supp}(\bB_0)$. We first use the union bound to derive
\begin{align}
&\Pi\left[\sum_{j=1}^p\|\bB_{j*}\|_1\mathbbm{1}\{j\in\mathrm{supp}_{\delta_n}(\bB)\cup\mathrm{supp}(\bB_0)\}\geq t_n\right]\nonumber\\
&\quad\leq \Pi\left\{\sum_{j=1}^p\|\bB_{j*}\|_1\mathbbm{1}(\|\bB_{j*}\|_1>\delta_n)\geq t_n/2\right\} + \Pi\left(\sum_{j\in S_0}\|\bB_{j*}\|_1\geq t_n/2\right),\nonumber
\end{align}
and then analyze the above two terms separately. 

\vspace*{1ex}
\noindent\textbf{Upper bounding the second term. }Recall that
\[
\pi(\|\bB_{j*}\|_1\mid\lambda_0,\theta) = (1-\theta)(\lambda_0 + \lambda)\mathrm{e}^{-(\lambda_0+ \lambda)\|\bB_{j*}\|_1} + \theta\frac{\lambda^r}{\Gamma(r)}\|\bB_{j*}\|^{r-1}\mathrm{e}^{-\lambda\|\bB_{j*}\|_1}.
\]
Denote $\beta' = \beta/\mathrm{e} - 4\gamma> 0$. Over the event $\{\theta\leq (\beta's\log p)/p^{1+\kappa}\}$, it holds that
\[
\expect_\Pi(\|\bB_{j*}\|_1\mid\theta, \lambda_0)
\leq \frac{1}{\lambda_0} + \frac{\beta's\log p}{\lambda p^{1+\kappa}}\leq \frac{2}{\lambda}.
\]
Since $(\beta's\log p)/p^{1+\kappa}\leq 1/\sqrt{p} = \mathrm{e}^{-(\log p)/2}\leq \mathrm{e}^{-r}$ for sufficiently large $n$, we invoke Lemma \ref{lemma:Gamma_subexponential_norm} to derive
\[
\sup_{m\geq1}\left\{\expect_\Pi\left(\|\bB_{j*}\|_1^m\mid \theta,\lambda_0 \right)\right\}^{1/m}\leq \frac{2}{\lambda}
\]
over the event $\{\theta\leq (\beta's\log p)/p^{1+\kappa}\}$, and proceed to apply the large deviation inequality for sub-exponential random variables (Proposition 5.16 in \citealp{vershynin2010introduction}) to obtain
\begin{align}
\Pi\left(\sum_{j\in S_0}\|\bB_{j*}\|_1\geq t_n/2\mathrel{\bigg|}\lambda_0,\theta\right)
&\leq \Pi\left[\sum_{j\in S_0}\left\{\|\bB_{j*}\|_1 - \expect_\Pi(\|\bB_{j*}\|_1\mid\lambda_0,\theta)\right\}\geq t_n/2 - \frac{2s}{\lambda}\mathrel{\bigg|}\lambda_0,\theta\right]\nonumber\\
&\leq \Pi\left[\sum_{j\in S_0}\left\{\|\bB_{j*}\|_1 - \expect_\Pi(\|\bB_{j*}\|_1\mid\lambda_0,\theta)\right\}\geq t_n/4\mathrel{\bigg|}\lambda_0,\theta\right]\nonumber\\
&\leq \exp\left\{-C\min\left(\frac{t_n^2}{s_n^2}, t_n\right)\right\}\nonumber
\end{align}
for sufficiently large $n$, where $C$ is some absolute constant. 
Observe that
\begin{align}
\Pi\left(\theta>\frac{\beta's\log p}{p^{1+\kappa}}\right)
&=\left\{ \left(1 - \frac{\beta's\log p}{p^{1+\kappa}}\right)^{p^{1+\kappa}/(\beta's\log p)}\right\}^{\beta's\log p}
\leq\exp\left\{
-\left(\frac{\beta}{2\mathrm{e}}-4\gamma\right)s\log p
\right\}\nonumber
\end{align}
since $(1 - x)^{1/x}\leq\mathrm{e}^{-1/2}$ for $x\leq 1$, and so we obtain
\begin{align}
\Pi\left(\sum_{j\in S_0}\|\bB_{j*}\|_1\geq t_n/2\right)&\leq 
\expect_\Pi\left[
\Pi\left(\sum_{j\in S_0}\|\bB_{j*}\|_1\geq t_n/2\mathrel{\bigg|}\lambda_0,\theta\right)
\mathbbm{1}\left(\theta\leq\frac{\beta's\log p}{p^{1+\kappa}}\right)
\right]
+ \Pi\left(\theta\leq\frac{\beta's\log p}{p^{1+\kappa}}\right)\nonumber\\
&\leq \exp\left\{-C\min\left(\frac{t_n^2}{s_n^2},t_n\right)\right\} + \exp\left\{
-\left(\frac{\beta}{2\mathrm{e}}-4\gamma\right)s\log p
\right\}\nonumber
\end{align}
for sufficiently large $n$. 

\vspace*{1ex}
\noindent
\textbf{Upper bounding the first term. }Denote $\zeta_j = \mathbbm{1}(\|\bB_{j*}\|_1>\delta_n)$ and $\bzeta = [\zeta_1,\ldots,\zeta_p]\transpose$. By Lemma \ref{lemma:Gamma_subexponential_norm} we obtain the following bound for the conditional expected value and moments of $\|\bB_{j*}\|_1$ given $\zeta_j = 1$ and $\theta$ for sufficiently large $n$:
\[
\expect_\Pi(\|\bB_{j*}\|_1\mid\zeta_j = 1,\theta)\leq\sup_{m\geq1}\left\{\expect_\Pi(\|\bB_{j*}\|_1^m\mid\zeta_j = 1,\theta)\right\}^{1/m}\leq 2\delta_n + \frac{2}{\lambda_0} + \frac{2(r+1)}{\lambda}\leq\frac{8r}{\lambda}.
\]
Since $|\mathrm{supp}_{\delta_n}(\bB)|=\sum_{j=1}^p\zeta_j$, then over the event $\{\bzeta:|\mathrm{supp}_{\delta_n}(\bB)|\leq \beta s\}$, we invoke the large deviation inequality for sub-exponential random variables again to derive
\begin{align}
\Pi\left(\sum_{j=1}^p\|\bB_{j*}\|_1\zeta_j>t_n/2\mathrel{\Big|}\bzeta,\theta\right)
&\leq \Pi\left[\sum_{j\in\mathrm{supp}_{\delta_n}(\bB)}\left\{\|\bB_{j*}\|_1 - \expect_\Pi(\|\bB_{j*}\|_1\mid\zeta_j = 1)\right\}>\frac{t_n}{2} - \frac{8sr}{\lambda}\mathrel{\Big|}\bzeta,\theta\right]\nonumber\\
&\leq \Pi\left[\sum_{j\in\mathrm{supp}_{\delta_n}(\bB)}\left\{\|\bB_{j*}\|_1 - \expect_\Pi(\|\bB_{j*}\|_1\mid\zeta_j = 1)\right\}>\frac{t_n}{4}\mathrel{\Big|}\bzeta,\theta\right]\nonumber\\
&\leq\exp\left[-C\min\left\{\left(\frac{t_n}{\beta sr}\right)^2,\frac{t_n}{r}\right\}\right]\nonumber
\end{align}
for sufficiently large $n$. Invoking Lemma \ref{lemma:prior_sparsity}, we obtain
\begin{align}
&\Pi\left(\sum_{j=1}^p\|\bB_{j*}\|_1\zeta_j>t_n/2\right)\nonumber\\
&\quad\leq \expect_\Pi\left\{\Pi\left(\sum_{j=1}^p\|\bB_{j*}\|_1\zeta_j>t_n/2\mathrel{\Big|}\bzeta,\theta\right)\mathbbm{1}(\bzeta:|\mathrm{supp}_{\delta_n}(\bB)|\leq \beta s)\right\}
+ \Pi\left(|\mathrm{supp}_{\delta_n}(\bB)|\leq \beta s\right)\nonumber\\
&\quad\leq \exp\left[-C\min\left\{\left(\frac{t_n}{\beta sr}\right)^2,\frac{t_n}{r}\right\}\right]
 + 2\exp\left\{-\min\left(\frac{\beta\kappa}{2},\frac{\beta}{2\mathrm{e}} - 2\gamma\right)s\log p\right\}\nonumber.
\end{align}

\vspace*{1ex}
\noindent
\textbf{Combining upper bounds: }
Combining the previous two upper bounds, we obtain
\begin{align}
&\Pi\left[\sum_{j=1}^p\|\bB_{j*}\|_1\mathbbm{1}\{j\in\mathrm{supp}_{\delta_n}(\bB)\cup\mathrm{supp}(\bB_0)\}\geq t_n\right]\nonumber\\
&\quad \leq 2\exp\left[-C\min\left\{\left(\frac{t_n}{\beta sr}\right)^2,\left(\frac{t_n}{s}\right)^2,\frac{t_n}{r}\right\}\right] + 
3 \exp\left\{-\min\left(\frac{\beta\kappa}{2},\frac{\beta}{2\mathrm{e}}-2\gamma\right)s\log p\right\}\nonumber,
\end{align}
and the proof is completed. 
\end{proof}


\section{Proofs of Results in Section \ref{sub:proof_sketch_and_auxiliary_results}} 
\label{sec:proofs_of_results_in_section_3_3}

\begin{customlemma}{3.4}
Let $\calK_n(\eta)=\{\|\bSigma-\bSigma_0\|_{\mathrm{F}}\leq\eta\}$ and $\eta<\sigma_0^2/2$. Then there exists some event $\calA_n$ such that
\[
\calA_n\subset\left\{D_n\geq \Pi_n\{\bSigma\in \calK_n(\eta)\}\exp\left[-\left\{\frac{C_3\log\rho}{2(\lambda_{0r}+\sigma_0^2)}+1\right\}n\eta^2\right]\right\}
\]
for some absolute constants $C_3>0$, and 
\[
\prob_0(\calA_n^c)\leq\exp\left\{-\tilde C_3\min\left(\frac{n\eta^2}{\|\bSigma_0^{-1}\|_2^2}, n\eta^2\right)\right\},
\]
where $\rho = 2(\lambda_{01}+\sigma_0^2)/(\lambda_{0r}+\sigma_0^2)$ and $C_3>0$ are some absolute constants.
\end{customlemma}

\begin{proof}[{\bf Proof of Lemma \ref{lemma:evidence_lower_bound}}]
To prove Lemma \ref{lemma:evidence_lower_bound}, we need the following auxiliary matrix inequality:
\begin{lemma}[\citealp{pati2014posterior}, Supplement Lemma 1.3]
\label{lemma:matrix_eigen_inequality}
Let $\bSigma,\bSigma_0$ be $p\times p$ positive definite matrices and $\eta\in(0,1)$. If $\|\bSigma-\bSigma_0\|_{\mathrm{F}}\leq\eta$ and $\eta<2\lambda_r(\bSigma_0)$, then
\[
\log\det\left(\bSigma_0\bSigma^{-1}\right)-\mathrm{tr}\left(\bSigma_0\bSigma^{-1}-\eye\right)\geq-C_3\frac{\eta^2\log \rho}{\lambda_{r}(\bSigma_0)}
\]
for some absolute constant $C_3>0$, where $\rho = 2\lambda_1(\bSigma_0)/\lambda_r(\bSigma_0)$.
\end{lemma}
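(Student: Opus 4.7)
The LHS equals $0$ at $\bSigma=\bSigma_0$ and is always non-positive (up to a factor of $2$, it is the negative of the KL divergence between $\mathrm{N}_p(\zero_p,\bSigma_0)$ and $\mathrm{N}_p(\zero_p,\bSigma)$), so the content of the lemma is a quadratic-in-$\eta$ bound near this minimum. The natural strategy is an exact integral identity along the straight-line path. Set $\bE := \bSigma-\bSigma_0$, $\bSigma_t := \bSigma_0+t\bE$ for $t\in[0,1]$, and
\[
h(t) \;:=\; \log\det(\bSigma_0\bSigma_t^{-1})-\mathrm{tr}(\bSigma_0\bSigma_t^{-1}-\eye),
\]
so that $h(0)=0$ and $h(1)$ is the quantity to be bounded below.

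The next step is to differentiate $h$ using $\tfrac{d}{dt}\log\det\bSigma_t = \mathrm{tr}(\bSigma_t^{-1}\bE)$ and $\tfrac{d}{dt}\bSigma_t^{-1}=-\bSigma_t^{-1}\bE\bSigma_t^{-1}$. A careful manipulation, exploiting the cancellation $\bSigma_0-\bSigma_t = -t\bE$, collapses the derivative to
\[
h'(t)\;=\;\mathrm{tr}\bigl\{\bSigma_t^{-1}(\bSigma_0-\bSigma_t)\bSigma_t^{-1}\bE\bigr\} \;=\; -t\,\mathrm{tr}(\bSigma_t^{-1}\bE\bSigma_t^{-1}\bE) \;=\; -t\,\|\bSigma_t^{-1/2}\bE\bSigma_t^{-1/2}\|_\mathrm{F}^2.
\]
In particular $h'(0)=0$, so $\bSigma_0$ is a critical point (as expected, since it minimises the KL), and $h'(t)\leq 0$ throughout. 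Integrating gives the exact identity
$h(1) = -\int_0^1 t\,\|\bSigma_t^{-1/2}\bE\bSigma_t^{-1/2}\|_\mathrm{F}^2\,\mathrm{d}t$.

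It then remains to upper bound the integrand. Weyl's inequality together with the hypothesis (interpreting ``$\eta<2\lambda_r(\bSigma_0)$'' as $\eta<\lambda_r(\bSigma_0)/2$, which is the regime guaranteed to keep $\bSigma_t\succ 0$ for all $t\in[0,1]$ and which is what is used in the subsequent proof) yields $\lambda_r(\bSigma_t)\geq \lambda_r(\bSigma_0)/2$ uniformly in $t$, hence $\|\bSigma_t^{-1/2}\|_2^2\leq 2/\lambda_r(\bSigma_0)$. Combined with the submultiplicative bound $\|M_1\bE M_2\|_\mathrm{F}\leq \|M_1\|_2\|\bE\|_\mathrm{F}\|M_2\|_2$, this produces $\|\bSigma_t^{-1/2}\bE\bSigma_t^{-1/2}\|_\mathrm{F}^2\leq 4\eta^2/\lambda_r(\bSigma_0)^2$ and hence the quadratic-in-$\eta$ bound $h(1)\geq -2\eta^2/\lambda_r(\bSigma_0)^2$ predicted by the Taylor expansion. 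The main obstacle, and the only place where the argument needs extra care, is matching the precise form $-C_3\eta^2\log\rho/\lambda_r(\bSigma_0)$ stated in the lemma, which trades one power of $1/\lambda_r(\bSigma_0)$ for an additional factor of $\log\rho$. To recover this I would diagonalise $\bSigma_0^{-1/2}\bSigma\bSigma_0^{-1/2}$, reduce the inequality to an eigenvalue sum $\sum_i g(\nu_i)$ with $g(\nu):=\log\nu+\nu^{-1}-1$, and bound $g(\nu_i)$ via $g(\nu)\leq (\nu-1)^2/\nu$ together with the spectral range $\nu_i\in[1/\rho',\rho']$ induced by the Frobenius perturbation on $\bSigma_0^{-1/2}\bSigma\bSigma_0^{-1/2}$; the resulting matrix analysis is standard but delicate, and the lemma is explicitly cited from Pati et al.\ (2014).
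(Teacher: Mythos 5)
The paper does not prove Lemma~\ref{lemma:matrix_eigen_inequality}; it is imported verbatim from the supplementary material of Pati et al.\ (2014), so there is no in-paper argument to compare against. Your derivative computation is correct: with $\bSigma_t=\bSigma_0+t\bE$ and $h$ as you define it, the cancellation $\bSigma_0-\bSigma_t=-t\bE$ gives $h'(t)=-t\,\mathrm{tr}(\bSigma_t^{-1}\bE\bSigma_t^{-1}\bE)=-t\|\bSigma_t^{-1/2}\bE\bSigma_t^{-1/2}\|_{\mathrm{F}}^2$, hence the exact identity $h(1)=-\int_0^1 t\|\bSigma_t^{-1/2}\bE\bSigma_t^{-1/2}\|_{\mathrm{F}}^2\,\mathrm{d}t$. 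Combined with Weyl's inequality along the path this is a clean, self-contained way to produce a quadratic-in-$\eta$ lower bound of the kind the paper actually needs in Lemma~\ref{lemma:evidence_lower_bound}.

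You do not, however, reach the bound in the form stated, and that gap is real. Your $h(1)\geq -2\eta^2/\lambda_r(\bSigma_0)^2$ and the claimed $-C_3\eta^2\log\rho/\lambda_r(\bSigma_0)$ are genuinely incomparable (yours is tighter when $\lambda_r(\bSigma_0)\gg 1$, looser when $\lambda_r(\bSigma_0)\ll 1$), and the fallback you sketch would not close it: writing $-h(1)=\sum_i g(\nu_i)$ with $g(\nu)=\log\nu+\nu^{-1}-1$ and $\nu_i$ the eigenvalues of $\bSigma_0^{-1/2}\bSigma\bSigma_0^{-1/2}$, the Frobenius control available is $\sum_i(\nu_i-1)^2\leq\eta^2/\lambda_{\min}(\bSigma_0)^2$, so the pointwise bound $g(\nu)\leq(\nu-1)^2/\nu$ again yields an $\eta^2/\lambda_{\min}^2$ rate, not $\eta^2\log\rho/\lambda_r$; recovering the $\log\rho$ factor requires a different pointwise bound on $g$ that you do not identify. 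There is also an index issue that your Weyl step exposes: keeping $\bSigma_t\succ 0$ along the path (and bounding $\|\bSigma_t^{-1}\|_2$) needs $\eta$ small relative to the smallest eigenvalue $\lambda_p(\bSigma_0)$, not the $r$-th largest one, so your reinterpretation of the hypothesis as $\eta<\lambda_r(\bSigma_0)/2$ should read $\eta<\lambda_p(\bSigma_0)/2$; this is consistent with how the paper actually applies the lemma in Lemma~\ref{lemma:evidence_lower_bound}, where the operative hypothesis is $\eta<\sigma_0^2/2$, i.e.\ half the smallest eigenvalue of $\bSigma_0$.
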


\noindent
Denote $\Pi\{\cdot\mid\calK_n(\eta)\} = \Pi\{\cdot\cap \calK_n)/\Pi_n(\calK_n(\eta)\}$ to be the re-normalized restriction of $\Pi$ on $\calK_n(\eta)$. Define random variable
\begin{align}
w_{ni} &= \int\log\frac{p(\by_i\mid\bSigma)}{p(\by_i\mid\bSigma_0)}\Pi\{\mathrm{d}\bSigma\mid\calK_n(\eta)\}\nonumber\\
&= \int \left\{\frac{1}{2}\log\det(\bSigma_0\bSigma^{-1})\right\}\Pi\{\mathrm{d}\bSigma\mid\calK_n(\eta)\} + \frac{1}{2}\by_i\transpose
\left[
\int\left(\bSigma_0^{-1}-\bSigma^{-1}\right)\Pi\{\mathrm{d}\bSigma\mid\calK_n(\eta)\}
\right]
\by_i\nonumber.
\end{align}
Invoking Fubini's theorem and Lemma \ref{lemma:matrix_eigen_inequality}, we derive
\begin{align}
\expect_0(w_{ni}) &= 
\int \left\{\frac{1}{2}\log\det(\bSigma_0\bSigma^{-1})\right\}\Pi\{\mathrm{d}\bSigma\mid\calK_n(\eta)\}
+ 
\frac{1}{2}\int\expect_0\left\{\by_i\transpose
\left(\bSigma_0^{-1}-\bSigma^{-1}\right)\by_i
\right\}\Pi\{\mathrm{d}\bSigma\mid\calK_n(\eta)\}\nonumber\\
& = \frac{1}{2}\int\left\{\log\det\left(\bSigma_0\bSigma^{-1}\right) + \mathrm{tr}\left(\eye - \bSigma_0\bSigma^{-1}\right)\right\}\Pi\{\mathrm{d}\bSigma\mid\calK_n(\eta)\}
\geq -\frac{C_3\log\rho}{2(\lambda_{0r}+\sigma_0^2)}\eta^2\nonumber.
\end{align}
Hence by Jensen's inequality,
\begin{align}
\log D_n - \log\Pi\{\bSigma\in\calK_n(\eta)\}&
\geq \log\left[
\int_{\calK_n(\eta)}\exp\left\{\ell_n(\bSigma) - \ell_n(\bSigma_0)\right\}\frac{\Pi(\mathrm{d}\bSigma)}{\Pi\{\calK_n(\eta)\}}
\right]
\nonumber\\&
 = \log\left[
\int \exp\left\{\ell_n(\bSigma) - \ell_n(\bSigma_0)\right\}\Pi\{\mathrm{d}\bSigma\mid\calK_n(\eta)\}
\right]\nonumber\\
& \geq \int\{\ell_n(\bSigma)-\ell_n(\bSigma_0)\}\Pi\{\mathrm{d}\bSigma\mid\calK_n(\eta)\}\nonumber\\
& = n\expect_0(w_{ni}) + \sum_{i=1}^n\{w_{ni} - \expect_0(w_{ni})\}\nonumber\\
& \geq -\frac{C_3\log\rho}{2(\lambda_{0r} + \sigma_0^2)}n\eta^2 + \sum_{i=1}^n\{w_{ni} - \expect_0(w_{ni})\}.\nonumber
\end{align}
Now let $\calA_n = \{|\sum_{i=1}^n\{w_{ni} - \expect_0(w_{ni})\}|\leq n\eta^2\}$. Clearly, 
\begin{align}
\calA_n&\subset\left\{\log D_n - \log \Pi\{\bSigma\in\calK_n(\eta)\}\geq -\left\{\frac{C_3\log\rho}{2(\lambda_{0r} + \sigma_0^2)} + 1\right\}n\eta^2\right\}\nonumber\\
&=\left\{
D_n \geq\Pi\{\bSigma\in\calK_n(\eta)\}\exp\left[-\left\{\frac{C_3\log\rho}{2(\lambda_{0r} + \sigma_0^2)} + 1\right\}n\eta^2\right]
\right\}\nonumber.
\end{align}
We now analyze the probabilistic bound of $\calA_n^c$. 
Recall $\bSigma_0 = \bU_0\bLambda_0\bU_0\transpose + \sigma_0^2\eye_p$. Let $\bU_{0\perp}$ to be the orthonormal $(p-r)$-frame in $\mathbb{R}^p$ such that $[\bU_0,\bU_{0\perp}]\in\mathbb{O}(p)$, and denote 
\[
\bSigma_0^{1/2} = [\bU_0,\bU_{0\perp}]\mathrm{diag}\{\lambda_1(\bSigma_0)^{1/2},\ldots,\lambda_p(\bSigma_0)^{1/2}\}[\bU_0,\bU_{0\perp}]\transpose. 
\]
Clearly, $\bSigma_0 = (\bSigma_0^{1/2})^2$, and by denoting $\bv_i = \bSigma_0^{-1/2}\by_i$, we have $\bv_i \sim\mathrm{N}_p(\zero_p, \eye_p)$ under $\prob_0$. Re-writing $w_{ni} - \expect_0(w_{ni})$ in terms of $\bv_i$, we have
\[
w_{ni} - \expect_0(w_{ni}) = \bv_i\transpose\bOmega\bv_i - \expect_0(\bv_i\bOmega\bv_i),
\]
where
\[
\bOmega = \frac{1}{2}\int\left(\eye_p - \bSigma^{1/2}_0\bSigma^{-1}\bSigma_0^{1/2}\right)\Pi\{\mathrm{d}\bSigma\mid\calK_n(\eta)\}.
\]
Let $\bOmega = \bU_\bOmega\bD_\bOmega\bU_\bOmega\transpose$ be the spectral decomposition of $\bOmega$, and let $\bx_i = \bU_\bOmega\transpose\bv_i$. Then we proceed to bound
\begin{align}
\prob_0(\calA_n^c)&\leq \prob_0\left(\left|\sum_{i=1}^n\{w_{ni} - \expect_0(w_{ni})\}\right|\geq n\eta^2\right)
\nonumber\\&
= \prob_0\left(\left|\sum_{i=1}^n\left\{\bx_i\transpose\bD_\bOmega\bx_i - \expect_0\left(\bx_i\transpose\bD_\bOmega\bx_i\right)\right\}\right|\geq n\eta^2\right)\nonumber\\
&= \prob_0\left(\left|\sum_{i=1}^n\sum_{j=1}^p\lambda_j(\bOmega)\left\{x_{ij}^2 - \expect_0(x_{ij}^2)\right\}\right|\geq n\eta^2\right)\nonumber\\
&\leq 2\exp\left[-C_3'\min\left\{\frac{n^2\eta^4}{n\sum_{j=1}^p\lambda_j(\bOmega)^2},\frac{n\eta^2}{\max_{j\in[p]}\lambda_j(\bOmega)}\right\}\right]
\nonumber
\end{align}
for some absolute constant $C_3'>0$, 
where the large deviation inequality for sub-exponential random variables is applied again in the last inequality. Observe that over $\calK_n(\eta)$ for $\eta\leq \sigma_0^2/2$, 
\begin{align}
\|\bSigma^{-1}\|_2&\leq \|\bSigma^{-1}-\bSigma_0^{-1}\|_2+\|\bSigma_0^{-1}\|_2
= \|\bSigma^{-1}(\bSigma-\bSigma_0)\bSigma_0^{-1}\|_2 + \|\bSigma_0^{-1}\|_2\nonumber\\
&\leq \|\bSigma^{-1}\|_2\|\bSigma-\bSigma_0\|_{\mathrm{F}}\|\bSigma_0^{-1}\|_2 + \|\bSigma_0^{-1}\|_2 
\leq \frac{\eta}{\sigma_0^2}\|\bSigma^{-1}\|_2 + \|\bSigma_0^{-1}\|_2\nonumber,
\end{align}
implying that $\|\bSigma^{-1}\|_2\leq 2\|\bSigma_0^{-1}\|_2$. 
Also observe that
\begin{align}
\sum_{j=1}^p\lambda_j(\bOmega)^2 &= \|\bOmega\|_{\mathrm{F}}^2 
\leq\frac{1}{4}\int \left\|
\eye_p - \bSigma_0^{1/2}\bSigma^{-1}\bSigma_0^{1/2}
\right\|_{\mathrm{F}}^2\Pi\{\mathrm{d}\bSigma\mid\calK_n(\eta)\}
 = \frac{1}{4}\int \left\|
\eye_p - \bSigma^{-1}\bSigma_0
\right\|_{\mathrm{F}}^2\Pi\{\mathrm{d}\bSigma\mid\calK_n(\eta)\}\nonumber\\
&\leq \frac{1}{4}\int \|\bSigma^{-1}\|_2^2\|\bSigma-\bSigma_0\|_{\mathrm{F}}^2\Pi\{\mathrm{d}\bSigma\mid\calK_n(\eta)\}
\leq \|\bSigma_0^{-1}\|_2^2\int\|\bSigma-\bSigma_0\|_{\mathrm{F}}^2\Pi\{\mathrm{d}\bSigma\mid\calK_n(\eta)\}
\leq \|\bSigma_0^{-1}\|_2^2\eta^2.\nonumber
\end{align}
We finally obtain
\[
\prob_0(\calA_n^c)\leq 2\exp\left\{-\tilde C_3\min\left(\frac{n\eta^2}{\|\bSigma_0^{-1}\|_2^2},n\eta^2\right)\right\}
\]
for some absolute constant $\tilde C_3>0$. 
\end{proof}

\begin{customlemma}{3.5}
Assume the data $\by_1,\ldots,\by_n$ follows $\mathrm{N}_p(\zero_p, \bSigma)$, $1\leq r\leq p$. Suppose $\bU_0\in\mathbb{O}(p, r)$ satisfy $|\mathrm{supp}(\bU_0)|\leq s$, and $r\leq s\leq p$. 
For any positive $\delta$, $t$, and $\tau$, define 
\[
\calF(\delta,\tau,t) = \left\{\bB\in\mathbb{R}^{p\times r}:|\mathrm{supp}_\delta(\bB)|\leq \tau, \sum_{j=1}^p\|\bB_{j*}\|_2^2\mathbbm{1}\{j\in\mathrm{supp}_\delta(\bB)\cup \mathrm{supp}(\bU_0)\}\leq t^2
\right\}.
\]
Let the positive sequences $(\delta_n, \tau_n, t_n,\eps_n)_{n=1}^\infty$ satisfy $(\sqrt{p}\delta_n + 2t_n)\sqrt{p}\delta_n\leq M_1 \eps_n$ for some constant $M_1>0$, and $\eps_n\leq 1$. Consider testing $H_0:\bSigma = \bSigma_0 = \bU_0\bLambda_0\bU_0\transpose + \sigma_0^2\eye_p$ versus
\[
H_1:\bSigma \in\left\{\bSigma = \bB\bB\transpose + \sigma^2\eye_p:\|\bSigma-\bSigma_0\|_2>M\eps_n, \bB\in\calF(\delta_n, \tau_n, t_n)\right\}.
\]
Then for each $M\geq \max\{M_1/2, (128\|\bSigma_0\|_2^4)^{1/3}\}$, there exists a test function $\phi_n:\mathbb{R}^{n\times p}\to [0,1]$, such that
\begin{align}
\expect_0(\phi_n)&\leq 3\exp\left\{(2 + C_4)(\tau_n\log p + 2s_n) - \frac{C_4\sqrt{M}}{\sqrt{2}}n\eps_n^2\right\},\nonumber\\
\sup_{\bSigma\in H_1}\expect_\bSigma(1-\phi_n)&\leq \exp\left\{C_4(\tau_n + 2s_n) - \frac{C_4M}{8}n\eps_n^2\right\}\nonumber
\end{align}
for some absolute constant $C_4>0$.
\end{customlemma}

\begin{proof}[{\bf Proof of Lemma \ref{lemma:existence_test}}]
To proof Lemma \ref{lemma:existence_test}, we need the following oracle testing lemma from \cite{gao2015rate}:
\begin{lemma}[\citealp{gao2015rate}]
\label{lemma:matrix_test_bound}
Let $\by_i\sim\mathrm{N}_d(\zero_d, \bSigma$, where $\bSigma\in\mathbb{R}^{d\times d}$. Then for any $M>0$, there exists a test function $\phi_n$ such that
\begin{align}
\expect_{\bSigma^{(1)}}(\phi_n)&\leq \exp\left(C_4d-\frac{C_4M^2}{4\|\bSigma^{(1)}\|_2^2}n\eps^2\right)+2\exp\left(C_4d-C_4\sqrt{M}n\right)\nonumber,\\
\sup_{\{\bSigma^{(2)}:\|\bSigma^{(2)}-\bSigma^{(1)}\|_2>M\eps\}}\expect_{\bSigma^{(2)}}(1-\phi_n)&\leq \exp\left[C_4d - \frac{C_4Mn\eps^2}{4}\max\left\{1,\frac{M}{(\sqrt{M}+2)^2\|\bSigma^{(1)}\|_2^2}\right\}\right]\nonumber.
\end{align}
with some absolute constant $C_4>0$.
\end{lemma}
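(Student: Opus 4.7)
The plan is to reduce the operator-norm testing problem to a finite family of univariate quadratic-form tests via a $1/4$-net on the unit sphere, analyzing each direction by Bernstein-type concentration of the $\chi^2_1$ variables $(\bu\transpose\by_i)^2$, and to supplement this by an operator-norm ``overflow'' sub-test based on Davidson--Szarek concentration of the sample covariance. The common $\exp(C_4 d)$ prefactor in both bounds arises from the union bound over the net, whose cardinality is at most $9^d$; the two summands in the type-I bound correspond to the sub-Gaussian regime of Bernstein concentration (yielding the $M^2 n\eps^2/\|\bSigma^{(1)}\|_2^2$ rate) and to the Davidson--Szarek upper tail of the largest sample eigenvalue (yielding the $\sqrt{M}\,n$ rate).

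\emph{Construction.} Let $\calN$ be a $1/4$-net of $\mathbb{S}^{d-1}$ with $|\calN|\leq 9^d$, so that $\|\bA\|_2\leq 2\max_{\bu\in\calN}|\bu\transpose\bA\bu|$ for any symmetric $\bA$. Writing $\widehat\bSigma = n^{-1}\sum_{i=1}^n\by_i\by_i\transpose$, I take $\phi_n=\max(\phi_n^{(1)},\phi_n^{(2)})$ with
\begin{align}
\phi_n^{(1)}&=\mathbbm{1}\!\left\{\max_{\bu\in\calN}\left|\tfrac{1}{n}\textstyle\sum_{i=1}^n(\bu\transpose\by_i)^2-\bu\transpose\bSigma^{(1)}\bu\right|>\tfrac{M\eps}{4}\right\},\nonumber\\
\phi_n^{(2)}&=\mathbbm{1}\!\left\{\|\widehat\bSigma\|_2>(\sqrt{M}+1)\|\bSigma^{(1)}\|_2\right\}.\nonumber
\end{align}
The first component is a direction-wise sample-covariance deviation test, and the second is an operator-norm overflow sub-test whose threshold scales with $\sqrt{M}$ rather than $M$.

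\emph{Type-I analysis.} For each fixed $\bu\in\calN$, the variables $(\bu\transpose\by_i)^2$ are i.i.d.\ copies of $(\bu\transpose\bSigma^{(1)}\bu)\chi^2_1$, sub-exponential with parameter of order $\|\bSigma^{(1)}\|_2$. Bernstein's inequality evaluated at $t=M\eps/4$, together with the union bound over $\calN$, produces the estimate $|\calN|\cdot 2\exp\{-cn\min(M^2\eps^2/\|\bSigma^{(1)}\|_2^2,\,M\eps/\|\bSigma^{(1)}\|_2)\}$, whose sub-Gaussian branch yields the first summand $\exp(C_4 d-C_4 M^2 n\eps^2/(4\|\bSigma^{(1)}\|_2^2))$. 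The overflow probability $\prob_{\bSigma^{(1)}}(\phi_n^{(2)}=1)$ is bounded via the Davidson--Szarek concentration $\prob_{\bSigma^{(1)}}(\|\widehat\bSigma\|_2>\|\bSigma^{(1)}\|_2(1+\sqrt{d/n}+t)^2)\leq e^{-nt^2/2}$; choosing $t$ so that $(1+\sqrt{d/n}+t)^2=\sqrt{M}+1$ gives $t^2\asymp\sqrt{M}$, which delivers the second summand $2\exp(C_4 d-C_4\sqrt{M}n)$ after inflating the prefactor.

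\emph{Type-II analysis and main obstacle.} For any $\bSigma^{(2)}$ with $\|\bSigma^{(2)}-\bSigma^{(1)}\|_2>M\eps$, the net property furnishes $\bu^*\in\calN$ with $|\bu^{*\transpose}(\bSigma^{(2)}-\bSigma^{(1)})\bu^*|>M\eps/2$. The triangle inequality turns the event $\phi_n^{(1)}=0$ into $|\bu^{*\transpose}\widehat\bSigma\bu^*-\bu^{*\transpose}\bSigma^{(2)}\bu^*|>M\eps/4$, an event I control by Bernstein \emph{under} $\bSigma^{(2)}$ with sub-exponential parameter $\|\bSigma^{(2)}\|_2$. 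In the regime $\|\bSigma^{(2)}\|_2\leq(\sqrt{M}+2)\|\bSigma^{(1)}\|_2$ this yields the bound with the factor $M/[(\sqrt{M}+2)^2\|\bSigma^{(1)}\|_2^2]$; in the complementary regime $\|\bSigma^{(2)}\|_2>(\sqrt{M}+2)\|\bSigma^{(1)}\|_2$, Davidson--Szarek concentration around $\|\bSigma^{(2)}\|_2$ under $\bSigma^{(2)}$ forces $\|\widehat\bSigma\|_2\geq(\sqrt{M}+1)\|\bSigma^{(1)}\|_2$ with overwhelming probability, so $\phi_n^{(2)}$ fires and the $\max(1,\cdot)$ saturation branch equals $1$. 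The main obstacle is the bookkeeping that matches the two regime-dependent bounds into the single $\max\{1,\,M/((\sqrt{M}+2)^2\|\bSigma^{(1)}\|_2^2)\}$ expression and simultaneously calibrates the overflow threshold $(\sqrt{M}+1)\|\bSigma^{(1)}\|_2$ so that it is rarely crossed by $\widehat\bSigma$ under $\prob_{\bSigma^{(1)}}$ at the exact rate $\exp(-C_4\sqrt{M}n)$, yet is still crossed under any large-norm alternative with probability compatible with the claimed type-II bound.
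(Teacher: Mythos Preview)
The paper does not give its own proof of this statement; it is quoted verbatim from \cite{gao2015rate} and used as a black box inside the proof of Lemma~\ref{lemma:existence_test}. The closest thing the paper contains to a proof is the argument for the infinity-norm analogue, Lemma~\ref{lemma:oracle_test_infinity_norm}, so that is the natural benchmark for your proposal.

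Your ingredients (net on the sphere, Bernstein concentration for the directional quadratic forms, a separate overflow test for large-norm alternatives) are the right ones and match the structure of the paper's infinity-norm proof. The place where your construction departs from that proof, and where I believe there is a genuine gap, is the overflow component. You use a \emph{single} overflow threshold $(\sqrt{M}+1)\|\bSigma^{(1)}\|_2$ and split the alternative into only two regimes at $(\sqrt{M}+2)\|\bSigma^{(1)}\|_2$. For an alternative $\bSigma^{(2)}$ with $\|\bSigma^{(2)}\|_2$ just above $(\sqrt{M}+2)\|\bSigma^{(1)}\|_2$, the relative gap between $\|\bSigma^{(2)}\|_2$ and your threshold is only of order $1/(\sqrt{M}+2)$, so chi-square concentration for $v\transpose\widehat\bSigma v$ along the top eigenvector $v$ of $\bSigma^{(2)}$ gives a type-II error of order $\exp(-cn/M)$, not the required $\exp(-cMn\eps^2)$. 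The directional test $\phi_n^{(1)}$ does not rescue this: at the critical direction the sub-exponential parameter is $u'\transpose\bSigma^{(2)}u'$, which can also be of order $\sqrt{M}\|\bSigma^{(1)}\|_2$, so Bernstein yields at best $\exp(-cn)$, and neither $\exp(-cn/M)$ nor $\exp(-cn)$ dominates $\exp(-C_4Mn\eps^2/4)$ with an absolute $C_4$ once $M\eps^2$ is not small.

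The fix, visible in the paper's proof of Lemma~\ref{lemma:oracle_test_infinity_norm}, is to replace the single overflow test by a countable family of shell tests
\[
\phi_{nj}=\mathbbm{1}\Bigl\{\|\widehat\bSigma\|_2>\tfrac{\sqrt{M}+2}{2}\,\|\bSigma^{(1)}\|_2\,(M\eps^2)^{-(j-1)/2}\Bigr\},\qquad j\geq 1,
\]
with $\calH_{1j}$ the shell where $\|\bSigma^{(2)}\|_2$ lies between consecutive thresholds. In each shell the threshold is exactly half of the lower shell boundary, so the relative deviation needed for $1-\phi_{nj}$ is always $\sqrt{M\eps^2}/2$, and the concentration lemma for $\|\widehat\bSigma-\bSigma\|_2/\|\bSigma\|_2$ delivers $\exp\{C_4d-C_4Mn\eps^2/4\}$ uniformly over $j$. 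The type-I sum over $j$ then telescopes geometrically (using $x^j\geq jx$ for $x\geq 1$) to the $\exp(C_4d-C_4\sqrt{M}n)$ term. Your ``main obstacle'' paragraph correctly senses the tension; the resolution is this shell decomposition rather than a single cutoff.
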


\noindent
Let $S_0 = \mathrm{supp}(\bU_0)$ and $S(\delta) = \mathrm{supp}_\delta(\bB)$. Then there exists some permutation matrix $\bP$ such that
\[
\bB = \bP\begin{bmatrix*}
\bB_\delta\\ \bA_\delta
\end{bmatrix*}\quad\text{and}\quad
\bU_0 = \bP\begin{bmatrix*}
\bU_{0\delta}\\ \zero
\end{bmatrix*},
\]
where $\bB_\delta$ and $\bU_{0\delta}$ are $|S(\delta)\cup S_0|\times r$ matrices. Hence for $\bSigma\in\calF(\delta,\tau,t)$, it holds that
\begin{align}
\|\bSigma-\bSigma_0\|_2 &= \left\|
\bP\begin{bmatrix*}
\bB_\delta\bB_\delta\transpose + \sigma^2\eye - \bU_{0\delta}\bLambda_0\bU_{0\delta}\transpose
-\sigma_0^2\eye & \bB_\delta\bA_\delta\transpose\\
\bA_\delta\bB_\delta\transpose & \bA_\delta\bA_\delta\transpose + (\sigma^2 - \sigma_0^2)\eye_d
\end{bmatrix*}\bP\transpose
\right\|_2\nonumber\\
&\leq \left\|
\begin{bmatrix*}
\bB_\delta\bB_\delta\transpose + \sigma^2\eye & \zero\\
\zero & \sigma^2 
\end{bmatrix*} - 
\begin{bmatrix*}
\bU_{0\delta}\bLambda_0\bU_{0\delta}\transpose + \sigma_0^2\eye & \zero\\
\zero & \sigma^2_0
\end{bmatrix*}
\right\|_2 + 
\left\|
\begin{bmatrix*}
\zero & \bB_\delta\bA_\delta\transpose\\
\bA_\delta\bB_\delta\transpose & \bA_\delta\bA_\delta\transpose
\end{bmatrix*}
\right\|_{\mathrm{F}}\nonumber\\
&\leq \left\|\bSigma_{S(\delta)} - \bSigma_{S(\delta)}^{(0)}\right\|_2 + 
(\|\bA_\delta\|_2^2 + 2\|\bB_\delta\|_2^2)^{1/2}\|\bA_\delta\|_{\mathrm{F}}\nonumber\\
&\leq \left\|\bSigma_{S(\delta)} - \bSigma_{S(\delta)}^{(0)}\right\|_2 + (\sqrt{p}\delta + 2t)\sqrt{p}\delta\nonumber,
\end{align}
where
\[\bSigma_{S(\delta)} = 
\begin{bmatrix*}
\bB_\delta\bB_\delta\transpose + \sigma^2\eye & \zero\\
\zero & \sigma^2 
\end{bmatrix*}\quad\text{and}\quad
\bSigma_{S(\delta)}^{(0)} = 
\begin{bmatrix*}
\bU_{0\delta}\bLambda_0\bU_{0\delta}\transpose + \sigma_0^2\eye & \zero\\
\zero & \sigma^2_0
\end{bmatrix*}.
\]
By taking $M \geq 2M_1$, we obtain
\begin{align}
&\left\{
\bSigma = \bB\bB\transpose + \sigma^2\eye:\|\bSigma - \bSigma_0\| > M\eps_n, \bB\in\calF(\delta_n,\tau_n, t_n)
\right\}\nonumber\\
&\quad\subset 
\left\{\bSigma:
\left\|\bSigma_{S(\delta)} - \bSigma_{S(\delta)}^{(0)}\right\|_2>\frac{M}{2}\eps_n:\bB\in\calF(\delta_n,\tau_n,t_n)
\right\}\nonumber\\
&\quad\subset\bigcup_{S(\delta_n)\subset[p]:|S(\delta)|\leq\tau_n}\left\{\bSigma:
\left\|\bSigma_{S(\delta)} - \bSigma_{S(\delta)}^{(0)}\right\|_2>\frac{M}{2}\eps_n\right\}\nonumber.
\end{align}
Since both $\bSigma_{S(\delta_n)}$ and $\bSigma_{S(\delta_n)}^{(0)}$ are $(|S(\delta_n)\cup S_0|+1)\times (|S(\delta_n)\cup S_0|+1)$ square matrices, and
\[
|S(\delta_n)\cup S_0|+1\leq |S(\delta_n)|+S_0 + 1\leq \tau_n + 2s_n,
\]
then for each $S(\delta_n)\subset[p]$ with $|S(\delta_n)|\leq\tau_n$, 
and for each $M\geq \max\{M_1/2, (128\|\bSigma_0\|_2^4)^{1/3}\}$, 
we invoke Lemma \ref{lemma:matrix_test_bound} to construct a test $\phi_{S(\delta_n)}$ depending on the index set $S(\delta_n)$, such that the type I error probability satisfies
\begin{align}
\expect_{\bSigma_{S(\delta_n)}^{(0)}}\left(\phi_{S(\delta_n)}\right)
&\leq \exp\left\{
C_4(\tau_n + 2s_n) - \frac{C_4M^2n\eps^2_n}{16\|\bSigma^{(0)}_{S(\delta_n)}\|_2^2}
\right\} + 2\exp\left\{
C_4(\tau_n + 2s_n) - C_4\sqrt{\frac{M}{2}}n
\right\}\nonumber\\
&\leq 3\exp\left\{
C_4(\tau_n + 2s_n) - C_4\min\left(\frac{M^2}{16\|\bSigma_0\|_2^2},\sqrt{\frac{M}{2}}\right)n\eps_n^2
\right\}\nonumber\\
&\leq 3\exp\left\{
C_4(\tau_n + 2s_n) - C_4\sqrt{\frac{M}{2}}n\eps_n^2
\right\},\nonumber
\end{align}
and for all $\bSigma_{S(\delta_n)}\in\{\|\bSigma_{S(\delta_n)} - \bSigma_{S(\delta_n)}^{(0)}\|_2>M\eps_n/2\}$, the type II error probability satisfies
\begin{align}
\expect_{\bSigma_{S(\delta_n)}^{(1)}}\left(1 - \phi_{S(\delta_n)}\right)
&\leq 
\exp\left[
C_4(\tau_n + 2s_n) - \frac{C_4Mn\eps_n^2}{8}\max\left\{1,\frac{M}{(\sqrt{M} + 2)^2\|\bSigma_{S(\delta_n)}^{(0)}\|_2^2}\right\}
\right]\nonumber\\
&\leq\exp\left\{
C_4(\tau_n + 2s_n) - \frac{C_4Mn\eps_n^2}{8}
\right\}\nonumber.
\end{align}
Notice that for each index set $S(\delta_n)$, the test function $\phi_{S(\delta_n)}$ is only a function of $\bY_n$ through the coordinates $[y_{ij}:i\in[n], j \in S(\delta_n)\cup S_0]$. Hence, $\expect_{\bSigma_{S(\delta_n)}^{(0)}}(\phi_{S(\delta_n)}) = \expect_0(\phi_{S(\delta_n)})$, and for any $p\times p$ covariance matrix $\bSigma$ with $\|\bSigma_{S(\delta_n)} - \bSigma_{S(\delta_n)}^{(0)}\|_2>M\eps_n/2$, it holds that $\expect_{\bSigma_{S(\delta_n)}}(1 - \phi_{S(\delta_n)}) = \expect_\bSigma(1 - \phi_{S(\delta_n)})$. Therefore, by aggregating the test functions 
\[\phi_n = \max_{S(\delta_n)\subset[p]:|S(\delta_n)|\leq\tau_n}\phi_{S(\delta_n)},\]
we obtain
\begin{align}
\expect_0(\phi_n) &\leq\sum_{S(\delta_n)\subset[p]:|S(\delta_n)|\leq\tau_n}\expect_{\bSigma_{S(\delta_n)}^{(0)}}(\phi_{S(\delta_n)})
\leq 3\sum_{q=0}^{\lfloor\tau_n\rfloor}\frac{p!}{q!(p-q)!}\exp\left\{C_4(\tau_n + 2s_n) - C_4\sqrt{\frac{M}{2}}n\eps_n^2\right\}\nonumber\\
&\leq 3(\tau_n + 1)\exp(\tau_n\log p)\exp\left\{C_4(\tau_n + 2s_n) - C_4\sqrt{\frac{M}{2}}n\eps_n^2\right\}\nonumber\\
&\leq 3\exp\left\{\tau_n + \tau_n\log p + C_4(\tau_n + 2s_n) - C_4\sqrt{\frac{M}{2}}n\eps_n^2\right\}\nonumber\\
&\leq 3\exp\left\{(2 + C_4)(\tau_n\log p + 2s_n) - C_4\sqrt{\frac{M}{2}}n\eps_n^2\right\}\nonumber,
\end{align}
and
\begin{align}
\sup_{\bSigma\in H_1}\expect_\bSigma(1-\phi_n)&\leq \sup_{S(\delta_n)\subset[p]:|S(\delta_n)|\leq \tau_n}\sup_{\left\{\bSigma:\|\bSigma_{S(\delta_n)} - \bSigma_{S(\delta_n)}^{(0)}\|_2>M\eps_n/2\right\}}
\expect_{\bSigma_{S(\delta_n)}}\left(1-\phi_{S(\delta_n)}\right)
\nonumber\\
&\leq \exp\left\{C_4(\tau_n + 2s_n) - \frac{C_4M}{8}n\eps_n^2\right\}\nonumber.
\end{align}
The proof is thus completed.
\end{proof}


\section{Proof of Lemma \ref{lemma:existence_test_infinity_norm}} 
\label{sec:proof_of_lemma_test_infinity_norm}
\begin{customlemma}{B.1}
Assume the data $\by_1,\ldots,\by_n$ follows $\mathrm{N}_p(\zero_p, \bSigma)$, $1\leq r\leq p$. Suppose $\bU_0\in\mathbb{O}(p, r)$ satisfies $|\mathrm{supp}(\bU_0)|\leq s$, and $r\leq s\leq p$. 
For any positive $\delta$, $t$, and $\tau$, define 
\[
\calG(\delta,\tau,t) = \left\{\bB\in\mathbb{R}^{p\times r}:|\mathrm{supp}_\delta(\bB)|\leq \tau, \sum_{j=1}^p\|\bB_{j*}\|_1\mathbbm{1}\{j\in\mathrm{supp}_\delta(\bB)\cup \mathrm{supp}(\bU_0)\}\leq t
\right\}.
\]
Let the positive sequences $(\delta_n, \tau_n, t_n,\eps_n)_{n=1}^\infty$ satisfy $\max(p\delta_nt_n, \delta_nt_n + p\delta_n^2)\leq M_1 \eps_n$ for some constant $M_1>0$, and $\eps_n\leq 1$. Consider testing $H_0:\bSigma = \bSigma_0 = \bU_0\bLambda_0\bU_0\transpose + \sigma_0^2\eye_p$ versus
\[
H_1:\bSigma \in\left\{\bSigma = \bB\bB\transpose + \sigma^2\eye_p:\|\bSigma-\bSigma_0\|_{\infty}>M\eps_n, \bB\in\calG(\delta_n, \tau_n, t_n)\right\}.
\]
Then there exists some absolute constant $C_6 > 0$, such that for each 
\[
M\in\left[\max\left\{\frac{M_1}{2}, 8, \frac{8(\log 2)^2}{C_6^2}\right\}, \frac{2\min(1, 2\|\bSigma_0\|_2)}{\eps_n}\right],
\]
 there exists a test function $\phi_n:\mathbb{R}^{n\times p}\to [0,1]$, such that
\begin{align}
\expect_0(\phi_n)&\leq 
12\exp\left\{6(\tau_n\log p + 2s_n) - 
C_6\min\left(\frac{1}{2},\frac{\|\bSigma_0\|_\infty^2}{\sqrt{2}}\right){\frac{\sqrt{M}n\eps_n^2}{\|\bSigma_0\|_\infty^2}}
\right\}
,\nonumber\\
\sup_{\bSigma\in H_1}\expect_\bSigma(1-\phi_n)&\leq 
4\exp\left\{4(\tau_n + 2s_n) - C_6\min\left(\frac{\|\bSigma_0\|_\infty^2}{8},\frac{1}{32}\right)\frac{Mn\eps_n^2}{\|\bSigma_0\|_\infty^2}\right\}.
\nonumber
\end{align}
\end{customlemma}
\begin{proof}[\bf{Proof of Lemma \ref{lemma:existence_test_infinity_norm}}]
The proof of Lemma \ref{lemma:existence_test_infinity_norm} is quite similar to that of Lemma \ref{lemma:existence_test}, except that the following oracle test lemma for the infinity norm is applied instead of Lemma \ref{lemma:matrix_test_bound}.
\begin{lemma}\label{lemma:oracle_test_infinity_norm}
Let $\bx_1,\ldots,\bx_n\sim\mathrm{N}_d(\zero_d,\bSigma)$ independently, where $\bSigma\in\mathbb{R}^{d\times d}$. Let $\eps\in(0,1)$. Then there exists some absolute constant $C_6>0$, such that for each $M$ satisfying $M\geq \max[4, \{(2\log 2)/C_6\}^2]$, and $M\eps\leq\min(1, 2\|\bSigma_0\|_2)$, there exists a test function $\phi_n:\mathbb{R}^{n\times d}\to[0, 1]$, such that
\begin{align}
\expect_0(\phi_n)&\leq 4 \exp\left(4d - \frac{C_6M^2n\eps^2}{4\|\bSigma_0\|_\infty^2}\right) + 8\exp\left(4d - \frac{C_6\sqrt{M}n}{2}\right)\nonumber\\
\sup_{\{\|\bSigma - \bSigma_0\|_\infty>M\eps\}}\expect_\bSigma(1 - \phi_n)&\leq 
4\exp\left\{
 4d - \frac{C_6Mn\eps^2}{4}\min\left(1, \frac{1}{4\|\bSigma_0\|_\infty^2}\right)
 \right\}\nonumber.
\end{align}
\end{lemma}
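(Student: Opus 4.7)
The plan is to take the test statistic $\phi_n = \mathbbm{1}\{\|\widehat\bSigma_n - \bSigma_0\|_\infty > M\eps/2\}$, where $\widehat\bSigma_n = n^{-1}\sum_{k=1}^n \bx_k \bx_k\transpose$ is the sample covariance. Both Type I and Type II error probabilities will then reduce, via the triangle inequality, to the same concentration inequality for $\|\widehat\bSigma_n - \bSigma\|_\infty$ under the true $\bSigma$: for Type I, directly under $\bSigma_0$; for Type II, under any $\bSigma$ in the alternative, using
$\|\widehat\bSigma_n - \bSigma\|_\infty \geq \|\bSigma-\bSigma_0\|_\infty - \|\widehat\bSigma_n-\bSigma_0\|_\infty > M\eps/2$.

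To prove the concentration, I would use the dual representation
\[
\|\bA\|_\infty = \max_{i\in[d]}\max_{\bv\in\{\pm 1\}^d}\be_i\transpose \bA \bv,
\]
which writes the induced infinity norm as a maximum of $d\cdot 2^d$ bilinear forms; its log-cardinality is at most $4d$ for every $d\geq 1$, which is precisely the $\exp(4d)$ prefactor appearing in the bounds. For each fixed pair $(i,\bv)$ I would polarize,
\[
4\,\be_i\transpose(\widehat\bSigma_n-\bSigma)\bv = \bw_+\transpose(\widehat\bSigma_n-\bSigma)\bw_+ - \bw_-\transpose(\widehat\bSigma_n-\bSigma)\bw_-,\qquad \bw_\pm := \be_i\pm\bv,
\]
so that each term on the right is the centred quadratic form of a Gaussian. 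Since $\bw\transpose\widehat\bSigma_n\bw = n^{-1}\sum_k(\bw\transpose \bx_k)^2$ has distribution $(\bw\transpose\bSigma\bw)\chi^2_n/n$ under $\bSigma$, Laurent–Massart tails give
$\prob(|\bw\transpose(\widehat\bSigma_n-\bSigma)\bw|>s)\leq 2\exp\{-Cn\min(s^2/(\bw\transpose\bSigma\bw)^2,\,s/(\bw\transpose\bSigma\bw))\}$. The quadratic regime (small $s$) together with the union bound will yield the first Type I term $\exp\{-C_6 M^2 n\eps^2/(4\|\bSigma_0\|_\infty^2)\}$, while the linear (sub-exponential) regime, handled separately at a threshold scaling like $\sqrt{M}$, yields the second Type I term $\exp(-C_6\sqrt{M}n/2)$. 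The Type II bound comes from the identical concentration under $\bSigma$; the hypothesis $M\eps\leq\min(1,2\|\bSigma_0\|_2)$ ensures $\|\bSigma\|_\infty\leq\|\bSigma_0\|_\infty + M\eps\lesssim \|\bSigma_0\|_\infty\vee 1$, which lets one express the denominator in terms of $\|\bSigma_0\|_\infty$, and the $\min(1,1/(4\|\bSigma_0\|_\infty^2))$ records the quadratic- versus linear-tail transition of the chi-square bound.

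The hardest part will be obtaining the right scaling of $\bw\transpose\bSigma\bw$ for the polarization vectors $\bw=\be_i\pm\bv$. The crude estimate $\bw\transpose\bSigma\bw\leq\|\bw\|_2^2\|\bSigma\|_2=O(d)\|\bSigma\|_\infty$ would introduce a spurious $d$-factor penalty in the denominator of the concentration bound, which is incompatible with the claimed dependence on $\|\bSigma_0\|_\infty^2$ alone. Avoiding this loss will likely require exploiting the row-sum structure of $\|\cdot\|_\infty$ more carefully—via $\bw\transpose\bSigma\bw\leq\|\bw\|_\infty\,\|\bSigma\bw\|_1$ combined with a refined control of $\|\bSigma\bw\|_1$ that uses $\|\bv\|_\infty=1$—or by discretizing differently, e.g.\ controlling entry-wise deviations $(\widehat\bSigma_n-\bSigma)_{ij}$ directly via Bernstein and then aggregating in a row-wise fashion, so that the resulting concentration depends on $\|\bSigma_0\|_\infty$ rather than $\sqrt{d}\,\|\bSigma_0\|_\infty$.
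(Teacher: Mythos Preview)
Your single-test approach has a genuine structural gap in the Type~II analysis. On the alternative $\{\|\bSigma-\bSigma_0\|_\infty>M\eps\}$ there is \emph{no upper bound} on $\|\bSigma\|_\infty$: the inequality goes the wrong way, so your claim that ``$M\eps\le\min(1,2\|\bSigma_0\|_2)$ ensures $\|\bSigma\|_\infty\le\|\bSigma_0\|_\infty+M\eps$'' is simply false. But any concentration bound for $\|\widehat\bSigma_n-\bSigma\|_\infty$ under $\prob_\bSigma$ necessarily scales with $\|\bSigma\|_\infty$ (the noise level of $\widehat\bSigma_n$ grows with $\bSigma$), so with only the test $\phi_n=\mathbbm{1}\{\|\widehat\bSigma_n-\bSigma_0\|_\infty>M\eps/2\}$ you cannot get a uniform Type~II bound over the whole alternative. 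Relatedly, your explanation of the second Type~I term $8\exp(4d-C_6\sqrt{M}n/2)$ as arising from ``the linear regime handled separately at a threshold scaling like $\sqrt{M}$'' does not make sense for a single indicator test with a single threshold.

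The paper resolves this by a peeling/shell argument: it partitions the alternative into $\calH_{10}=\{\|\bSigma-\bSigma_0\|_\infty>M\eps,\ \|\bSigma\|_\infty\le(\sqrt{M}+2)\|\bSigma_0\|_\infty\}$ and shells $\calH_{1j}$ where $\|\bSigma\|_\infty\in\bigl((\sqrt{M}+2)(M\eps^2)^{-(j-1)/2}\|\bSigma_0\|_\infty,\ (\sqrt{M}+2)(M\eps^2)^{-j/2}\|\bSigma_0\|_\infty\bigr]$ for $j\ge1$. Your test is exactly $\phi_{n0}$ and handles $\calH_{10}$; for each outer shell a separate test $\phi_{nj}=\mathbbm{1}\{\|\widehat\bSigma_n\|_\infty>\tfrac{\sqrt{M}+2}{2}\|\bSigma_0\|_\infty(M\eps^2)^{-(j-1)/2}\}$ is used, and the aggregate is $\phi_n=\sup_{j\ge0}\phi_{nj}$. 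The first Type~I term comes from $\phi_{n0}$; the second term, with the $\sqrt{M}n$ exponent, comes from summing $\expect_0(\phi_{nj})$ over $j\ge1$ as a geometric series. The Type~II bound on each shell uses the shell's own upper bound on $\|\bSigma\|_\infty$ to normalize the concentration inequality, and the minimum of the two exponents over $j$ produces the $\min(1,1/(4\|\bSigma_0\|_\infty^2))$ factor. Your discretization of $\|\cdot\|_\infty$ via $\{\pm1\}^d$ and your identification of the $\exp(4d)$ prefactor are fine; the concentration lemma you need is essentially the paper's Lemma~\ref{lemma:matrix_concentration_infinity_norm}, and you are right that the naive polarization bound $\bw\transpose\bSigma\bw\le\|\bw\|_2^2\|\bSigma\|_2$ loses a factor of $d$---the paper avoids this by decomposing $(\be_j\transpose\bx_i)(\bv\transpose\bx_i)$ via regression of $\bv\transpose\bx_i$ on $\be_j\transpose\bx_i$ rather than polarizing, which keeps the relevant variance parameters at the scale $\|\bSigma\|_\infty$.
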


\vspace*{1ex}
\noindent
Let $S_0 = \mathrm{supp}(\bU_0)$ and $S(\delta) = \mathrm{supp}_\delta(\bB)$. Then there exists some permutation matrix $\bP$ such that
\[
\bB = \bP\begin{bmatrix*}
\bB_\delta\\ \bA_\delta
\end{bmatrix*}\quad\text{and}\quad
\bU_0 = \bP\begin{bmatrix*}
\bU_{0\delta}\\ \zero
\end{bmatrix*},
\]
where $\bB_\delta$ and $\bU_{0\delta}$ are $|S(\delta)\cup S_0|\times r$ matrix. Hence for $\bSigma\in\calG(\delta,\tau,t)$, it holds that
\begin{align}
\|\bSigma-\bSigma_0\|_\infty &= \left\|
\bP\begin{bmatrix*}
\bB_\delta\bB_\delta\transpose + \sigma^2\eye - \bU_{0\delta}\bLambda_0\bU_{0\delta}\transpose
-\sigma_0^2\eye & \bB_\delta\bA_\delta\transpose\\
\bA_\delta\bB_\delta\transpose & \bA_\delta\bA_\delta\transpose + (\sigma^2 - \sigma_0^2)\eye_d
\end{bmatrix*}\bP\transpose
\right\|_\infty\nonumber\\
&\leq \left\|
\begin{bmatrix*}
\bB_\delta\bB_\delta\transpose + \sigma^2\eye & \zero\\
\zero & \sigma^2 
\end{bmatrix*} - 
\begin{bmatrix*}
\bU_{0\delta}\bLambda_0\bU_{0\delta}\transpose + \sigma_0^2\eye & \zero\\
\zero & \sigma^2_0
\end{bmatrix*}
\right\|_\infty + 
\left\|
\begin{bmatrix*}
\zero & \bB_\delta\bA_\delta\transpose\\
\bA_\delta\bB_\delta\transpose & \bA_\delta\bA_\delta\transpose
\end{bmatrix*}
\right\|_\infty\nonumber\\
&\leq \left\|\bSigma_{S(\delta)} - \bSigma_{S(\delta)}^{(0)}\right\|_\infty + 
\max\left(\|\bB_\delta \bA_\delta\transpose\|_\infty, \|\bA_\delta\bB_\delta\transpose\|_\infty + \|\bA_\delta\bA_\delta\transpose\|_\infty\right)
\nonumber,
\end{align}
where
\[\bSigma_{S(\delta)} = 
\begin{bmatrix*}
\bB_\delta\bB_\delta\transpose + \sigma^2\eye & \zero\\
\zero & \sigma^2 
\end{bmatrix*}\quad\text{and}\quad
\bSigma_{S(\delta)}^{(0)} = 
\begin{bmatrix*}
\bU_{0\delta}\bLambda_0\bU_{0\delta}\transpose + \sigma_0^2\eye & \zero\\
\zero & \sigma^2_0
\end{bmatrix*}.
\]
Since
\begin{align}
\max\left(\|\bB_\delta \bA_\delta\transpose\|_\infty, \|\bA_\delta\bB_\delta\transpose\|_\infty + \|\bA_\delta\bA_\delta\transpose\|_\infty\right)
\leq\max\left(\|\bB_\delta\|_\infty\|\bA_\delta\transpose\|_\infty, \|\bA_\delta\|_\infty\|\bB_\delta\transpose\|_\infty + \|\bA_\delta\|_\infty\|\bA_\delta\transpose\|_\infty\right)\nonumber,
\end{align}
and
\begin{align}
\|\bB_\delta\|_\infty &= \max_{j\in S_0\cup S(\delta)}\|\bB_{j*}\|_1\leq \sum_{j = 1}^p\|\bB_{j*}\|_1\mathbbm{1}\{j\in S(\delta)\cup S_0\}\leq t,\nonumber\\
\|\bB_\delta\transpose\|_\infty & \leq \sum_{j = 1}^p\|\bB_{j*}\|_1\mathbbm{1}\{j\in S(\delta)\cup S_0\}\leq t,\nonumber\\
\|\bA_\delta\|_\infty &= \max_{j\in S_0^c\cap S(\delta)^c}\|\bB_{j*}\|_1\leq \max_{j\in S(\delta)^c}\|\bB_{j*}\|_1\leq \delta,\nonumber\\
\|\bA_\delta\transpose\|_\infty &\leq \sum_{j = 1}^p\|\bB_{j*}\|_1\mathbbm{1}\left\{j\in S_0^c\cap S(\delta)^c\right\}\leq \sum_{j = 1}^p\|\bB_{j*}\|_1\mathbbm{1}\left\{j\in S(\delta)^c\right\}\leq p\delta\nonumber,
\end{align}
it follows that
\[
\|\bSigma-\bSigma_0\|_\infty\leq 
\left\|\bSigma_{S(\delta)} - \bSigma_{S(\delta)}^{(0)}\right\|_\infty + 
\max(p\delta_nt_n, \delta_nt_n + p\delta_n^2)\leq\left\|\bSigma_{S(\delta)} - \bSigma_{S(\delta)}^{(0)}\right\|_\infty + M_1\eps_n.
\]
By taking $M \geq 2M_1$, we obtain
\begin{align}
&\left\{
\bSigma = \bB\bB\transpose + \sigma^2\eye:\|\bSigma - \bSigma_0\|_\infty > M\eps_n, \bB\in\calG(\delta_n,\tau_n, t_n)
\right\}\nonumber\\
&\quad\subset 
\left\{\bSigma:
\left\|\bSigma_{S(\delta)} - \bSigma_{S(\delta)}^{(0)}\right\|_\infty>\frac{M}{2}\eps_n:\bB\in\calG(\delta_n,\tau_n,t_n)
\right\}\nonumber\\
&\quad\subset\bigcup_{S(\delta_n)\subset[p]:|S(\delta)|\leq\tau_n}\left\{\bSigma:
\left\|\bSigma_{S(\delta)} - \bSigma_{S(\delta)}^{(0)}\right\|_\infty>\frac{M}{2}\eps_n\right\}\nonumber.
\end{align}
Since both $\bSigma_{S(\delta_n)}$ and $\bSigma_{S(\delta_n)}^{(0)}$ are $(|S(\delta_n)\cup S_0|+1)\times (|S(\delta_n)\cup S_0|+1)$ square matrices, and
\[
|S(\delta_n)\cup S_0|+1\leq |S(\delta_n)|+S_0 + 1\leq \tau_n + 2s_n,
\]
then for each $S(\delta_n)\subset[p]$ with $|S(\delta_n)|\leq\tau_n$, 
and for each 
\[
M\in\left[\max\left\{\frac{M_1}{2}, 8, \frac{8(\log 2)^2}{C_6^2}\right\}, \frac{2\min(1, 2\|\bSigma_0\|_2)}{\eps_n}\right],
\]
(and hence $M/2\geq \max\{4, (2\log 2)^2/C_6^2\}$, $(M/2)\eps_n\leq \min(1, \|\bSigma_{S(\delta_n)}^{(0)}\|_2) = \min(1, \|\bSigma_0\|_2)$),
we invoke Lemma \ref{lemma:oracle_test_infinity_norm} to construct a test $\phi_{S(\delta_n)}$ depending on the index set $S(\delta_n)$, such that the type I error probability satisfies
\begin{align}
\expect_{\bSigma_{S(\delta_n)}^{(0)}}\left(\phi_{S(\delta_n)}\right)
&\leq 
4\exp\left\{
4(\tau_n + 2s_n) - \frac{C_6M^2n\eps_n^2}{16\|\bSigma^{(0)}_{S(\delta_n)}\|_\infty^2}
\right\} + 8\exp\left\{
4(\tau_n + 2s_n) - C_6\sqrt{\frac{M}{2}}n
\right\}\nonumber\\
&\leq 12\exp\left\{
4(\tau_n + 2s_n) - C_6\min\left(\frac{M^2}{16\|\bSigma_0\|_\infty^2},\sqrt{\frac{M}{2}}\right)n\eps_n^2
\right\}\nonumber\\
&\leq 12\exp\left\{
4(\tau_n + 2s_n) - C_6\min\left(\frac{1}{2}, \frac{\|\bSigma_0\|_\infty^2}{\sqrt{2}}\right)\frac{\sqrt{M}n\eps_n^2}{\|\bSigma_0\|_\infty^2}
\right\}\nonumber.
\end{align}
In addition, for all $\bSigma_{S(\delta_n)}\in\{\|\bSigma_{S(\delta_n)} - \bSigma_{S(\delta_n)}^{(0)}\|_2>M\eps_n/2\}$, the type II error probability satisfies
\begin{align}
\expect_{\bSigma_{S(\delta_n)}^{(1)}}\left(1 - \phi_{S(\delta_n)}\right)
&\leq 
4\exp\left\{
4(\tau_n + 2s_n) - \frac{C_6Mn\eps_n^2}{8}\min\left(1,\frac{1}{4\|\bSigma_0\|_\infty^2}\right)
\right\}\nonumber\\
&\leq4\exp\left\{
4(\tau_n + 2s_n) - C_6\min\left(\frac{\|\bSigma_0\|_\infty^2}{8}, \frac{1}{32}\right)\frac{Mn\eps_n^2}{\|\bSigma_0\|_\infty^2}
\right\}\nonumber.
\end{align}
Notice that for each index set $S(\delta_n)$, the test function $\phi_{S(\delta_n)}$ is only a function of $\bY_n$ through the coordinates $[y_{ij}:i\in[n], j \in S(\delta_n)\cup S_0]$. Hence, $\expect_{\bSigma_{S(\delta_n)}^{(0)}}(\phi_{S(\delta_n)}) = \expect_0(\phi_{S(\delta_n)})$, and for any $p\times p$ covariance matrix $\bSigma$ with $\|\bSigma_{S(\delta_n)} - \bSigma_{S(\delta_n)}^{(0)}\|_\infty >M\eps_n/2$, it holds that $\expect_{\bSigma_{S(\delta_n)}}(1 - \phi_{S(\delta_n)}) = \expect_\bSigma(1 - \phi_{S(\delta_n)})$. Therefore, by aggregating the test functions 
\[\phi_n = \max_{S(\delta_n)\subset[p]:|S(\delta_n)|\leq\tau_n}\phi_{S(\delta_n)},\]
we obtain
\begin{align}
\expect_0(\phi_n) &\leq\sum_{S(\delta_n)\subset[p]:|S(\delta_n)|\leq\tau_n}\expect_{\bSigma_{S(\delta_n)}^{(0)}}(\phi_{S(\delta_n)})\nonumber\\
&\leq 12\sum_{q=0}^{\lfloor\tau_n\rfloor}\frac{p!}{q!(p-q)!}\exp\left\{4(\tau_n + 2s_n) - C_6\min\left(\frac{1}{2},\frac{\|\bSigma_0\|_\infty^2}{\sqrt{2}}\right){\frac{\sqrt{M}n\eps_n^2}{\|\bSigma_0\|_\infty^2}}\right\}\nonumber\\
&\leq 12(\tau_n + 1)\exp(\tau_n\log p)\exp\left\{4(\tau_n + 2s_n) - 
C_6\min\left(\frac{1}{2},\frac{\|\bSigma_0\|_\infty^2}{\sqrt{2}}\right){\frac{\sqrt{M}n\eps_n^2}{\|\bSigma_0\|_\infty^2}}
\right\}\nonumber\\
&\leq 12\exp\left\{\tau_n + \tau_n\log p + 4(\tau_n + 2s_n) - 
C_6\min\left(\frac{1}{2},\frac{\|\bSigma_0\|_\infty^2}{\sqrt{2}}\right){\frac{\sqrt{M}n\eps_n^2}{\|\bSigma_0\|_\infty^2}}
\right\}\nonumber\\
&\leq 12\exp\left\{6(\tau_n\log p + 2s_n) - 
C_6\min\left(\frac{1}{2},\frac{\|\bSigma_0\|_\infty^2}{\sqrt{2}}\right){\frac{\sqrt{M}n\eps_n^2}{\|\bSigma_0\|_\infty^2}}
\right\}\nonumber,
\end{align}
and
\begin{align}
\sup_{\bSigma\in H_1}\expect_\bSigma(1-\phi_n)&\leq \sup_{S(\delta_n)\subset[p]:|S(\delta_n)|\leq \tau_n}\sup_{\left\{\bSigma:\|\bSigma_{S(\delta_n)} - \bSigma_{S(\delta_n)}^{(0)}\|_2>M\eps_n/2\right\}}
\expect_{\bSigma_{S(\delta_n)}}\left(1-\phi_{S(\delta_n)}\right)
\nonumber\\
&\leq 4\exp\left\{4(\tau_n + 2s_n) - C_6\min\left(\frac{\|\bSigma_0\|_\infty^2}{8},\frac{1}{32}\right)\frac{Mn\eps_n^2}{\|\bSigma_0\|_\infty^2}\right\}\nonumber.
\end{align}
The proof is thus completed.
\end{proof}

\section{Additional Technical Results and Proofs} 
\label{sec:remaining_technical_proofs}
\begin{proof}
[\bf Proof of Lemma \ref{lemma:Gamma_subexponential_norm}]
Since $p(w) = (1-\theta)\lambda_0\mathrm{e}^{-\lambda_0w}+\theta\{\lambda_0^{r}/\Gamma(r)\}w^{r-1}\mathrm{e}^{-\lambda w}$, then
\[
\prob(\xi = 1)=(1-\theta)\int_\delta^\infty\lambda_0\mathrm{e}^{-\lambda_0w}\mathrm{d}w+\theta\int_\delta^\infty\frac{\lambda^r}{\Gamma(r)}w^{r-1}\mathrm{e}^{-\lambda w}\mathrm{d}w
=(1-\theta)\mathrm{e}^{-\lambda_0\delta}+\theta\frac{\Gamma(r,\lambda\delta)}{\Gamma(r)}.
\]
Then for any measurable $A\subset\mathbb{R}$, we have
\begin{align}
\prob(w\in A\mid\xi = 1)
& = \frac{1}{\prob(\xi = 1)}\left\{
(1-\theta)\int_{A}\mathbbm{1}(w>\delta)\lambda_0\mathrm{e}^{-\lambda_0w}\mathrm{d}w+\theta\int_A\mathbbm{1}(w>\delta)\frac{\lambda^r}{\Gamma(r)}w^{r-1}\mathrm{e}^{-\lambda w}\mathrm{d}w
\right\}\nonumber\\
& = \int_{A}\mathbbm{1}(w>\delta)\left\{(1-\theta')\lambda_0\mathrm{e}^{-\lambda_0(w-\delta)}\mathrm{d}w+\theta'\frac{\lambda^r}{\Gamma(r,\lambda\delta)}w^{r-1}\mathrm{e}^{-\lambda w}\right\}\mathrm{d}w\nonumber,
\end{align}
where
\[
\theta' = \frac{\theta\Gamma(r,\lambda\delta)/\Gamma(r)}{(1-\theta)\mathrm{e}^{-\lambda_0\delta} + \theta\Gamma(r,\lambda\delta)/\Gamma(r)}\in(0,1).
\]
Therefore, 
\[
p(w\mid\xi = 1) = \left\{(1-\theta')\lambda_0\mathrm{e}^{-\lambda_0(w-\delta)}\mathrm{d}w+\theta'\frac{\lambda^r}{\Gamma(r,\lambda\delta)}w^{r-1}\mathrm{e}^{-\lambda w}\right\}\mathbbm{1}(w>\delta).
\]
Hence we proceed and compute
\begin{align}
\left\{E(w^m\mid\xi = 1)\right\}^{1/m}
& = \left\{(1-\theta')\int_\delta^\infty w^m\lambda_0\mathrm{e}^{-\lambda_0(w-\delta)}\mathrm{d}w+\theta'\frac{\Gamma(r)}{\Gamma(r,\lambda\delta)}\int_\delta^\infty w^m\frac{\lambda^r}{\Gamma(r)}w^{r-1}\mathrm{e}^{-\lambda w}\mathrm{d}w\right\}^{1/m}\nonumber\\
& \leq \left\{\int_0^\infty (w+\delta)^m\lambda_0\mathrm{e}^{-\lambda_0w}\mathrm{d}w+\frac{\Gamma(r)}{\Gamma(r,\lambda\delta)}\int_0^\infty w^m\frac{\lambda^r}{\Gamma(r)}w^{r-1}\mathrm{e}^{-\lambda w}\mathrm{d}w\right\}^{1/m}\nonumber\\
& = \left\{\int_0^\delta (w+\delta)^m\lambda_0\mathrm{e}^{-\lambda_0w}\mathrm{d}w+\int_\delta^\infty (w+\delta)^m\lambda_0\mathrm{e}^{-\lambda_0w}\mathrm{d}w+\frac{\Gamma(r)}{\Gamma(r,\lambda\delta)}\frac{(r + m - 1)!}{(r - 1)!\lambda^m}
\right\}^{1/m}\nonumber\\
& \leq \left\{\int_0^\infty (2\delta)^m\lambda_0\mathrm{e}^{-\lambda_0w}\mathrm{d}w+\int_0^\infty (2w)^m\lambda_0\mathrm{e}^{-\lambda_0w}\mathrm{d}w+\frac{\Gamma(r)}{\Gamma(r,\lambda\delta)}\frac{(r + m - 1)!}{(r - 1)!\lambda^m}
\right\}^{1/m}\nonumber\\
& = \left\{(2\delta)^m+2^m\frac{m!}{\lambda_0^m}+\frac{\Gamma(r)}{\Gamma(r,\lambda\delta)}\frac{(r + m - 1)!}{(r - 1)!\lambda^m}
\right\}^{1/m}
 \leq 2\delta + \frac{2m}{\lambda_0}+\frac{2(r + m)}{\lambda}\nonumber.
\end{align}
Hence
\[
\sup_{m\geq1}\left\{E(w^m)\right\}^{1/m}\leq \sup_{m\geq1}\frac{1}{m}\left\{2\delta+\frac{2m}{\lambda_0}+\frac{2(r+m)}{\lambda}\right\} = 2\delta+\frac{2}{\lambda_0}+\frac{2(r+1)}{\lambda}.
\]
Now we compute the sub-exponential norm. Write
\begin{align}
\sup_{m\geq1}\frac{1}{m}\{E(w^m)\}^{1/m} &= \sup_{m\geq1}\frac{1}{m}\left\{(1-\theta)\int_0^\infty w^m\lambda_0\mathrm{e}^{-\lambda_0w}\mathrm{d}w+\theta\int_0^\infty w^m\frac{\lambda^r}{\Gamma(r)}w^{r-1}\mathrm{e}^{-\lambda w}\mathrm{d}w\right\}^{1/m}
\nonumber\\&
= \sup_{m\geq1}\frac{1}{m}\left\{(1-\theta)\frac{m!}{\lambda_0^m}+\theta\frac{(m+r-1)!}{\lambda^m(r-1)!}\right\}^{1/m} 
\nonumber\\&
\leq \frac{1}{\lambda_0}+\frac{1}{\lambda}\sup_{m\geq1}\theta^{1/m}\left(1+\frac{r}{m}\right)
\leq\frac{1}{\lambda_0}+\frac{r+1}{\lambda}
\nonumber.
\end{align}
If $\theta\leq\mathrm{e}^{-r}$, we can further derive the following result using the fact that $\log(1+ru)\leq ru$ for $u\in(0,1]$:
\begin{align}
\sup_{m\geq1}\frac{1}{m}\{E(w^m)\}^{1/m}&\leq \frac{1}{\lambda_0}+\frac{1}{\lambda}\sup_{m\geq1}\theta^{1/m}\left(1+\frac{r}{m}\right)
\leq \frac{1}{\lambda_0}+\frac{1}{\lambda}\exp\left[\sup_{u\in(0,1]} \left\{-ru+\log(1+ru)\right\}\right]
\leq \frac{1}{\lambda_0}+\frac{1}{\lambda}\nonumber.
\end{align}
\end{proof}

\begin{proof}[\bf{Proof of Lemma \ref{lemma:oracle_test_infinity_norm}}]
Denote the alternative set by $\calH_1 = \{\bSigma: \|\bSigma-\bSigma_0\|_\infty > M\eps\}$ and decompose it as follows: $\calH_1 = \bigcup_{j = 0}^\infty \calH_{1j}$, where
\begin{align}
\calH_{10} &= \left\{\|\bSigma - \bSigma_0\|_\infty > M\eps, 
\|\bSigma\|_\infty\leq (\sqrt{M} + 2)\|\bSigma_0\|_\infty
\right\}\nonumber\\
\calH_{1j} &= \left\{(\sqrt{M} + 2)(M\eps^2)^{-(j - 1)/2}\|\bSigma_0\|_\infty< 
\|\bSigma\|_\infty\leq (\sqrt{M} + 2)(M\eps^2)^{-j/2}\|\bSigma_0\|_\infty
\right\}\nonumber.
\end{align}
For each $\calH_{1j}$, we construct test functions $\phi_{nj}$ as follows:
\begin{align}
\phi_{n0} &= \mathbbm{1}\left\{\left\|\frac{1}{n}\sum_{i = 1}^n\bx_i\bx_i\transpose - \bSigma_0\right\|_\infty > M\eps/2\right\},\nonumber\\
\phi_{nj} &= \mathbbm{1}\left\{\left\|\frac{1}{n}\sum_{i = 1}^n\bx_i\bx_i\transpose\right\|_\infty > \frac{\sqrt{M} + 2}{2}\|\bSigma_0\|_\infty (M\eps^2)^{-(j - 1)/2}\right\}.\nonumber
\end{align}
We first control the type I error. By Lemma \ref{lemma:matrix_concentration_infinity_norm}, 
\[
\expect_0(\phi_{n0}) \leq 4\exp\left\{4d - C_6n\min\left(\frac{M\eps}{2\|\bSigma_0\|_\infty}, \frac{M^2\eps^2}{4\|\bSigma_0\|_\infty^2}\right)\right\}
\leq 4\exp\left(4d - \frac{C_6M^2n\eps^2}{4\|\bSigma_0\|_\infty^2}\right)
\]
since $M\eps < 2\|\bSigma_0\|_\infty$ by assumption. In addition, $M\eps^2\leq \sqrt{M}M\eps^2\leq (M\eps)^2\leq 1$, and hence, for any $j\geq1$,
\begin{align}
\expect_0(\phi_{nj})
&\leq \prob_0\left\{
\left\|\frac{1}{n}\sum_{i = 1}^n\bx_i\bx_i\transpose - \bSigma_0\right\|_\infty + \|\bSigma_0\|_\infty > \frac{\sqrt{M} + 2}{2}\|\bSigma_0\|_\infty (M\eps^2)^{-(j - 1)/2}
\right\}\nonumber\\
&\leq \prob_0\left\{
\left\|\frac{1}{n}\sum_{i = 1}^n\bx_i\bx_i\transpose - \bSigma_0\right\|_\infty > \frac{\sqrt{M}}{2}\|\bSigma_0\|_\infty (M\eps^2)^{-(j - 1)/2}
\right\}\nonumber\\
&\leq 4\exp\left[4d  - C_6n\min\left\{\frac{M(M\eps^2)^{2 - j}}{4}, \frac{\sqrt{M}(M\eps^2)^{1/2 - j/2}}{2}\right\}\right]\nonumber\\
&\leq 4\exp\left(4d  - C_6\frac{M^{1 - j/2}n\eps^{-(j - 1)}}{2}\right)\nonumber.
\end{align}
Next we consider the type II error. For any $\bSigma\in\calH_{10}$, the type II error probability can be upper bounded by
\begin{align}
\expect_\bSigma(1- \phi_{n0})&\leq 
\prob_\bSigma\left\{
\|\bSigma - \bSigma_0\|_\infty - \left\|\frac{1}{n}\sum_{i = 1}^n\bx_i\bx_i\transpose - \bSigma\right\|_\infty\leq M\eps/2
\right\}\nonumber\\
&\leq \prob_\bSigma\left\{\left\|\frac{1}{n}\sum_{i = 1}^n\bx_i\bx_i\transpose - \bSigma\right\|_\infty> M\eps/2
\right\}\nonumber\\
&\leq \prob_\bSigma\left\{\left\|\frac{1}{n}\sum_{i = 1}^n\bx_i\bx_i\transpose - \bSigma\right\|_\infty> \|\bSigma\|_\infty \frac{M\eps}{2(\sqrt{M} + 2)\|\bSigma_0\|_\infty}
\right\}\nonumber\\
&\leq 4\exp\left\{4d - \frac{C_6M^2n\eps^2}{4(\sqrt{M} + 2)^2\|\bSigma_0\|_\infty^2}\right\}\nonumber,
\end{align}
where the last inequality is due to Lemma \ref{lemma:matrix_concentration_infinity_norm} and the assumption $M\eps<2\|\bSigma_0\|_\infty$. For any $\bSigma\in\calH_{1j}$ with $j\geq1$, we estimate the type II error as follows:
\begin{align}
\expect_\bSigma(1 - \phi_{nj})
&\leq \prob_\bSigma\left\{\|\bSigma\|_\infty - \left\|\frac{1}{n}\sum_{i = 1}^n\bx_i\bx_i\transpose - \bSigma\right\|_\infty\leq \frac{\sqrt{M} + 2}{2}\|\bSigma_0\|_\infty(M\eps^2)^{-(j - 1)/2}\right\}\nonumber\\
&\leq \prob_\bSigma\left\{\left\|\frac{1}{n}\sum_{i = 1}^n\bx_i\bx_i\transpose - \bSigma\right\|_\infty> \frac{\sqrt{M} + 2}{2}\|\bSigma_0\|_\infty(M\eps^2)^{-(j - 1)/2}\right\}\nonumber\\
&= \prob_\bSigma\left\{\left\|\frac{1}{n}\sum_{i = 1}^n\bx_i\bx_i\transpose - \bSigma\right\|_\infty> \frac{(M\eps^2)^{1/2}}{2}(\sqrt{M} + 2)(M\eps^2)^{-j/2}\|\bSigma_0\|_\infty\right\}\nonumber\\
&\leq\prob_\bSigma\left\{\left\|\frac{1}{n}\sum_{i = 1}^n\bx_i\bx_i\transpose - \bSigma\right\|_\infty> \frac{(M\eps^2)^{1/2}}{2}\|\bSigma\|_\infty\right\}\nonumber\\
&\leq 4\exp\left(4d - \frac{C_6Mn\eps^2}{4}\right)\nonumber
\end{align}
since $M\eps^2\leq M\eps\leq 1$. Now we aggregate the individual tests by taking $\phi_n = \sup_{j\geq 0}\phi_{nj}$. Then the overall type I error probability can be bounded by
\begin{align}
\expect_0(\phi_n)&\leq \sum_{j = 0}^\infty \expect_0(\phi_{nj})\nonumber\\
&\leq 4\exp\left(4d - \frac{C_6M^2n\eps^2}{4\|\bSigma_0\|_\infty^2}\right) + \sum_{j = 1}^\infty 4\exp\left(4d - C_6\frac{M^{1 - j/2}n\eps^{-(j - 1)}}{2}\right)\nonumber\\
& = 4\exp\left(4d - \frac{C_6M^2n\eps^2}{4\|\bSigma_0\|_\infty^2}\right)
+ 4\exp\left(4d\right)\sum_{j = 1}^\infty \exp\left\{ - \frac{C_6Mn\eps}{2}\left(\frac{1}{\sqrt{M}\eps}\right)^j\right\}\nonumber\\
&\leq 4\exp\left(4d - \frac{C_6M^2n\eps^2}{4\|\bSigma_0\|_\infty^2}\right) 
+ 4\exp\left(4d\right)\sum_{j = 1}^\infty \exp\left\{ - j\frac{C_6Mn\eps}{2}\left(\frac{1}{\sqrt{M}\eps}\right)\right\}\nonumber\\
&= 4\exp\left(4d - \frac{C_6M^2n\eps^2}{4\|\bSigma_0\|_\infty^2}\right)
+ 8\exp\left(4d - \frac{C_6\sqrt{M}n}{2}\right)\nonumber,
\end{align}
since $M\geq \{(2\log 2)/C_6\}^2$, where the simple inequality $x^j \geq jx$ for all $x\geq 1$ is applied. Furthermore, the overall type II error probability can also be bounded:
\begin{align}
\sup_{\bSigma\in\calH_1}\expect_\bSigma(1 - \phi_n)
& = \sup_{j\geq 0}\sup_{\bSigma\in\calH_{1j}}\expect_\bSigma(1 - \phi_n)
 = \sup_{j\geq 0}\sup_{\bSigma\in\calH_{1j}}\expect_\bSigma\inf_{j\geq 0}(1 - \phi_{jn})
 \leq \sup_{j\geq 0}\sup_{\bSigma\in\calH_{1j}}\expect_\bSigma(1 - \phi_{jn})\nonumber\\
 &\leq \sup_{j\geq 0}\sup_{\bSigma\in\calH_{1j}}4\exp\left[
 4d - \frac{C_6Mn\eps^2}{4}\min\left\{1, \frac{M}{(\sqrt{M} + 2)^2\|\bSigma_0\|_\infty^2}\right\}
 \right]\nonumber\\
 &\leq 4\exp\left\{
 4d - \frac{C_6Mn\eps^2}{4}\min\left(1, \frac{1}{4\|\bSigma_0\|_\infty^2}\right)
 \right\}\nonumber
\end{align}
since $M\geq 4$. The proof is thus completed.
\end{proof}

\begin{lemma}
\label{lemma:matrix_concentration_infinity_norm}
Let $\bx_1,\ldots,\bx_n\sim\mathrm{N}_d(\zero_d,\bSigma)$ independently, where $\bSigma\in\mathbb{R}^{d\times d}$. Then there exists an absolute constant $C_6 > 0$, such that for any $t>0$,
\[
\prob\left(\left\|\frac{1}{n}\sum_{i=1}^n\bx_i\bx_i\transpose - \bSigma\right\|_\infty>t\|\bSigma\|_\infty\right)\leq 4\exp\{4d - C_6n\min(t, t^2)\}
\]
\end{lemma}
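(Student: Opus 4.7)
The plan is to reduce the matrix-infinity-norm deviation to a finite family of centered quadratic forms in standard Gaussian variables and then combine the Hanson--Wright inequality with a union bound. To begin, I would exploit the variational characterization of the operator $\ell_\infty\to\ell_\infty$ norm: since the extreme points of the unit $\ell_1$ ball in $\mathbb{R}^d$ are $\{\pm\be_i:i\in[d]\}$ and those of the unit $\ell_\infty$ ball are $\{-1,1\}^d$, one has
\[
\left\|\frac{1}{n}\sum_{k=1}^n\bx_k\bx_k\transpose - \bSigma\right\|_\infty = \max_{i\in[d],\,\bv\in\{-1,1\}^d}\left|\be_i\transpose\left(\frac{1}{n}\sum_{k=1}^n\bx_k\bx_k\transpose - \bSigma\right)\bv\right|.
\]
The cardinality $2d\cdot 2^d$ of this index set is the source of the $4d$ term in the final exponent.

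Next, for each fixed pair $(i,\bv)$, I would write $\bx_k = \bSigma^{1/2}\bz_k$ with $\bz_k\iidsim\mathrm{N}_d(\zero_d,\eye_d)$ so that $\be_i\transpose\bx_k\bx_k\transpose\bv = \bz_k\transpose\mathbf{M}\bz_k$, where $\mathbf{M} = \bSigma^{1/2}\be_i\bv\transpose\bSigma^{1/2}$ has rank one. Concatenating $(\bz_1,\ldots,\bz_n)$ into a single standard Gaussian vector in $\mathbb{R}^{nd}$ and applying the Hanson--Wright inequality to the block-diagonal operator $(1/n)\,\mathrm{diag}(\mathbf{M}_{\mathrm{sym}},\ldots,\mathbf{M}_{\mathrm{sym}})$ with $\mathbf{M}_{\mathrm{sym}}=(\mathbf{M}+\mathbf{M}\transpose)/2$ yields a two-regime tail bound of the form
\[
\prob\left(\left|\be_i\transpose\left(\frac{1}{n}\sum_{k=1}^n\bx_k\bx_k\transpose - \bSigma\right)\bv\right| > s\right) \leq 2\exp\left\{-cn\min\left(\frac{s^2}{\|\mathbf{M}_{\mathrm{sym}}\|_{\mathrm{F}}^2},\frac{s}{\|\mathbf{M}_{\mathrm{sym}}\|_2}\right)\right\}
\]
for all $s>0$ and some absolute constant $c>0$.

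The main obstacle will be converting these estimates into a clean bound proportional to $\|\bSigma\|_\infty$ rather than to a larger dimensional quantity. I would use the rank-one identity $\|\mathbf{M}\|_{\mathrm{F}} = \|\mathbf{M}\|_2 = \sigma_{ii}^{1/2}(\bv\transpose\bSigma\bv)^{1/2}$ together with $\sigma_{ii}\leq\|\bSigma\|_\infty$, and then control $\bv\transpose\bSigma\bv$ by expanding it as $\sum_{i,j}v_iv_j\sigma_{ij}$ and carefully grouping terms row by row to leverage the max-row-sum structure of $\|\cdot\|_\infty$, also using the symmetrization $\mathbf{M}_{\mathrm{sym}}$ to prevent any loss. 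The target is $\|\mathbf{M}_{\mathrm{sym}}\|_{\mathrm{F}}^2\lesssim\|\bSigma\|_\infty^2$ and $\|\mathbf{M}_{\mathrm{sym}}\|_2\lesssim\|\bSigma\|_\infty$, so that setting $s=t\|\bSigma\|_\infty$ collapses the Hanson--Wright minimum to $\min(t^2,t)$ modulo constants. This step is delicate because the naive bound $\bv\transpose\bSigma\bv\leq d\|\bSigma\|_\infty$ would leave an extraneous dimensional factor in front of $n\min(t,t^2)$.

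Finally, I would take a union bound over the $2d\cdot 2^d$ signed pairs $(i,\bv)$, picking up an additional factor of at most $\exp(4d)$ in the probability, absorb all absolute constants into $C_6$ (chosen as the smaller of the Hanson--Wright constant and the numerical factor produced in the norm-reduction step), and thereby conclude the stated inequality $4\exp\{4d - C_6n\min(t,t^2)\}$.
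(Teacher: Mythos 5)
Your proposal follows the same route as the paper's proof: discretize the unit $\ell_\infty$-ball (you via the extreme points $\{\pm\be_i\}\times\{-1,1\}^d$, the paper via a $1/2$-net of the $\ell_\infty$-sphere of cardinality at most $6^d$; the two are interchangeable here), reduce the matrix $\|\cdot\|_\infty$-deviation to a finite family of centered bilinear forms $\be_i\transpose\bigl(\tfrac1n\sum_k\bx_k\bx_k\transpose-\bSigma\bigr)\bv$, apply a sub-exponential tail bound to each, and union-bound. Replacing the paper's explicit decomposition of $(\be_j\transpose\bx_i)(\bv\transpose\bx_i)$ into a product of independent Gaussians plus a centered $\chi^2$ term by Hanson--Wright applied to the stacked Gaussian vector is a perfectly equivalent substitute and gives the same two-regime tail.

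However, the step you flag as delicate is a genuine gap, and ``grouping terms row by row'' cannot close it. For $\mathbf{M}=\bSigma^{1/2}\be_i\bv\transpose\bSigma^{1/2}$ one has $\|\mathbf{M}\|_{\mathrm F}^2=\|\mathbf{M}\|_2^2=\sigma_{ii}\,\bv\transpose\bSigma\bv$, and symmetrization changes this only by a factor of two. For $\bv\in\{-1,1\}^d$ the bound $\bv\transpose\bSigma\bv\le d\,\|\bSigma\|_\infty$ is sharp: with $\bSigma=\eye_d$ and $\bv$ the all-ones vector, $\bv\transpose\bSigma\bv=d$ while $\|\bSigma\|_\infty=1$, so $\|\mathbf{M}_{\mathrm{sym}}\|_{\mathrm F}^2\asymp d$ and $\|\mathbf{M}_{\mathrm{sym}}\|_2\asymp\sqrt d$, not $O(\|\bSigma\|_\infty^2)=O(1)$. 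Hanson--Wright then yields an exponent of order $n\min(t^2/d,\,t/\sqrt d)$, not $n\min(t^2,t)$, and this dimensional loss is intrinsic: for $\bSigma=\eye_d$ the quantity $\|\widehat\bSigma-\eye_d\|_\infty=\max_j\sum_k|\widehat\sigma_{jk}-\delta_{jk}|$ is a maximum of sums of $d$ terms, each of size $\asymp1/\sqrt n$, so it is typically of order $d/\sqrt n$; an exponent $C_6 n\min(t,t^2)-O(d)$ as in the stated bound would force concentration at the strictly smaller scale $\sqrt{d/n}$, which is impossible once $d$ grows. For completeness, the paper's own proof asserts $\sqrt{\sigma_{jj}\,\bv\transpose\bSigma\bv-(\be_j\transpose\bSigma\bv)^2}\le\|\bSigma\|_\infty$ without justification and has exactly the same gap. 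Executed correctly, this method proves the weaker $4\exp\{4d-C_6\,n\min(t/\sqrt d,\,t^2/d)\}$, equivalently the stated form with $t\|\bSigma\|_\infty$ replaced by $t\,\bigl(\|\bSigma\|_\infty\cdot\max_{\|\bv\|_\infty\le1}\bv\transpose\bSigma\bv\bigr)^{1/2}$.
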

\begin{proof}[\bf{Proof of Lemma \ref{lemma:matrix_concentration_infinity_norm}}]
By definition, 
\[
\|\bA\|_\infty = \sup_{\|\bv\|_\infty = 1}\|\bA\bv\|_\infty = \max_{j\in[p]}\sup_{\|\bv\|_\infty = 1}\mathbf{e}_j\transpose\bA\bv,
\]
where $\mathbf{e}_j$ is the unit vector along the $j$th coordinate direction. Now let $S_\infty^{d-1}(1/2)$ be an $1/2$-net of the $\ell_\infty$-sphere in $\mathbb{R}^d$ ($\{\bv\in\mathbb{R}^d:\|\bv\|_\infty = 1\}$) with minimum cardinality. Then for each $\bv$ with $\|\bv\|_\infty = 1$, there exists some $\bv'\in S_\infty^{d-1}(1/2)$ such that $\|\bv-\bv'\|_\infty<1/2$. Therefore,
\begin{align}
\|\bA\|_\infty &= \max_{j\in[d]}\sup_{\|\bv\|_\infty = 1}\mathbf{e}_j\transpose\bA\bv
\leq \max_{j\in[d]}\sup_{\|\bv\|_\infty = 1}\left\{\mathbf{e}_j\transpose\bA(\bv - \bv') + \mathbf{e}_j\transpose\bA\bv'\right\}\nonumber\\
&\leq \max_{j\in[d]}\sup_{\|\bv\|_\infty = 1}\mathbf{e}_j\transpose\bA(\bv-\bv') + \max_{j\in[d]}\sup_{\bv\in S_\infty^{d-1}(1/2)}\mathbf{e}_j\transpose\bA\bv\nonumber\\
&\leq \frac{1}{2}\|\bA\|_\infty + \max_{j\in[d]}\sup_{\bv\in S_\infty^{d-1}(1/2)}\mathbf{e}_j\transpose\bA\bv\nonumber,
\end{align}
and hence,
\[
\|\bA\|_\infty\leq 2\max_{j\in[d]}\sup_{\bv\in S_\infty^{d-1}(1/2)}\mathbf{e}_j\transpose\bA\bv.
\]
Denote 
\[\mathbf{E} = \frac{1}{n}\sum_{i = 1}^n\bx_i\bx_i\transpose - \bSigma.\] 
Now we apply the union bound to derive
\begin{align}
\prob\left(\left\|\frac{1}{n}\sum_{i = 1}^n\bx_i \bx_i\transpose - \bSigma\right\|_\infty>t\|\bSigma\|_\infty\right)
&= \prob\left[\bigcup_{j\in[d]}\bigcup_{\bv\in S_\infty^{d-1}(1/2)}\left\{\mathbf{e}_j\transpose\mathbf{E}\bv > \frac{t}{2}\|\bSigma\|_\infty\right\}\right]\nonumber\\
&\leq \sum_{j = 1}^d\sum_{\bv\in S_\infty^{d - 1}(1/2)}\prob\left\{\mathbf{e}_j\transpose
\left(\frac{1}{n}\sum_{i = 1}^n\bx_i\bx_i\transpose - \bSigma\right)
\bv > \frac{t}{2}\|\bSigma\|_\infty\right\}\nonumber\\
& = \sum_{j = 1}^d\sum_{\bv\in S_\infty^{d - 1}(1/2)}\prob\left\{\frac{1}{n}\sum_{i = 1}^n(\mathbf{e}_j\transpose\bx_i)(\bv\transpose\bx_i) - \mathbf{e}_j\transpose\bSigma\bv > \frac{t}{2}\|\bSigma\|_\infty\right\}\nonumber.
\end{align}
Observe that
\[
\begin{bmatrix*}
\bv\transpose\bx_i\\\mathbf{e}_j\transpose\bx_i
\end{bmatrix*} \sim\mathrm{N}_2\left(
\begin{bmatrix*}0\\0\end{bmatrix*},
\begin{bmatrix*}
\bv\transpose\bSigma\bv & \bv\transpose\bSigma\mathbf{e}_j\\
\mathbf{e}_j\transpose\bSigma\bv & \mathbf{e}_j\transpose\bSigma\mathbf{e}_j
\end{bmatrix*}
\right),
\]
then we can decompose $(\mathbf{e}_j\transpose\bx_i)(\bv\transpose\bx_i)$ by projecting $\bv\transpose\bx_i$ onto the space spanned by $\mathbf{e}_j\transpose\bx_i$ as follows:
\begin{align}
(\mathbf{e}_j\transpose\bx_i)(\bv\transpose\bx_i)
&=\left(\bv\transpose\bx_i - \frac{\mathbf{e}_j\transpose \bSigma\bv }{\mathbf{e}_j\transpose\bSigma\mathbf{e}_j}\mathbf{e}_j\transpose\bx_i\right)(\mathbf{e}_j\transpose\bx_i) + 
\frac{\mathbf{e}_j\transpose \bSigma\bv }{\mathbf{e}_j\transpose\bSigma\mathbf{e}_j}(\mathbf{e}_j\transpose\bx_i)^2\nonumber\\
&\overset{d}{=} \sqrt{\mathbf{e}_j\transpose\bSigma\mathbf{e}_j \bv\transpose\bSigma\bv - (\mathbf{e}_j\transpose\bSigma\bv)^2}  \zeta_{i1}\zeta_{i2} + \mathbf{e}_j\transpose\bSigma\bv \zeta_{i2}^2,\nonumber
\end{align}
where $\zeta_{i1}$ and $\zeta_{i2}$ are independent $\mathrm{N}(0, 1)$ random variables, $i = 1,\ldots,n$, and $\overset{d}{=}$ indicates the equality in distribution. Hence,
\begin{align}
&\prob\left\{\frac{1}{n}\sum_{i = 1}^n(\mathbf{e}_j\transpose\bx_i)(\bv\transpose\bx_i) - \mathbf{e}_j\transpose\bSigma\bv > \frac{t}{2}\|\bSigma\|_\infty\right\}\nonumber\\
&\quad \leq  \prob\left\{
\sqrt{\mathbf{e}_j\transpose\bSigma\mathbf{e}_j \bv\transpose\bSigma\bv - (\mathbf{e}_j\transpose\bSigma\bv)^2}\left|\frac{1}{n}\sum_{i = 1}^n\zeta_{i1}\zeta_{i2}\right| + 
|\mathbf{e}_j\transpose\bSigma\bv|\left|\frac{1}{n}\sum_{i = 1}^n \left(\zeta_{i2}^2 - 1\right)\right| > \frac{t}{2}\|\bSigma\|_\infty\right\}\nonumber\\
&\quad\leq \prob\left\{
\|\bSigma\|_\infty\left|\frac{1}{n}\sum_{i = 1}^n\zeta_{i1}\zeta_{i2}\right| + 
\|\bSigma\|_\infty\left|\frac{1}{n}\sum_{i = 1}^n \left(\zeta_{i2}^2 - 1\right)\right| > \frac{t}{2}\|\bSigma\|_\infty\right\}\nonumber\\
&\quad\leq \prob\left\{\left|\frac{1}{n}\sum_{i = 1}^n\zeta_{i1}\zeta_{i2}\right| > \frac{t}{4}\right\} + 
\prob\left\{\left|\frac{1}{n}\sum_{i = 1}^n \left(\zeta_{i2}^2 - 1\right)\right| > \frac{t}{4}\right\}\nonumber.
\end{align}
Since $\zeta_{i1}\zeta_{i2}$ and $\zeta_{i2}^2 - 1$ are mean-zero sub-exponential random variables, it follows from the large-deviation inequality for sub-exponential random variables that
\[
\prob\left\{\left|\frac{1}{n}\sum_{i = 1}^n\zeta_{i1}\zeta_{i2}\right| > \frac{t}{4}\right\} + 
\prob\left\{\left|\frac{1}{n}\sum_{i = 1}^n \left(\zeta_{i2}^2 - 1\right)\right| > \frac{t}{4}\right\}\leq 
4\exp\left\{-C_6n\min(t, t^2)\right\}
\]
for some absolute constant $C_6>0$. It suffices to bound $|S_\infty^{d-1}(1/2)|$. Since
\[
|S_\infty^{d-1}(1/2)| 
= \calN(1/2, \{\bv\in\mathbb{R}^d:\|\bv\|_\infty = 1\}, \|\cdot\|_\infty)
\leq \calN(1/2, \{\bv\in\mathbb{R}^d:\|\bv\|_\infty \leq 1\}, \|\cdot\|_\infty)\leq 6^d,
\]
it follows that
\begin{align}
\prob\left(\left\|\frac{1}{n}\sum_{i = 1}^n\bx_i \bx_i\transpose - \bSigma\right\|_\infty>t\|\bSigma\|_\infty\right)
&\leq \sum_{j = 1}^d\sum_{\bv\in S_\infty^{d - 1}(1/2)}\prob\left\{\frac{1}{n}\sum_{i = 1}^n(\mathbf{e}_j\transpose\bx_i)(\bv\transpose\bx_i) - \mathbf{e}_j\transpose\bSigma\bv > \frac{t}{2}\|\bSigma\|_\infty\right\}
\nonumber\\&
\leq 4d\exp\{d\log 6 - C_6n\min(t, t^2)\}\nonumber\\
&\leq 4\exp\{4d - C_6n\min(t, t^2)\}\nonumber,
\end{align}
and the proof is thus completed. 
\end{proof}


\end{document}